\lstdefinestyle{Matlab}{
    language        = matlab,
    frame           = lines, 
    basicstyle      = \footnotesize,
    keywordstyle    = \color{blue},
    stringstyle     = \color{green},
    commentstyle    = \color{red}\ttfamily
}
\newtheorem{theorem}{Theorem}[section]
\newtheorem{definition}[theorem]{Definition}
\newtheorem{claim}[theorem]{Claim}
\newtheorem{remark}[theorem]{Remark}
\newtheorem{lemma}[theorem]{Lemma}
\newtheorem{hypothesis}[theorem]{Hypothesis}
\newtheorem{corollary}[theorem]{Corollary}
\newcommand{\wone}{\mathsf{W[1]}}
\newcommand{\fpt}{\mathsf{FPT}}
\newcommand{\kclique}{\textsc{$k$-Clique}}
\DeclareMathOperator{\MAJ}{\mathrm{Maj}}
\DeclareMathOperator*{\Ex}{\mathbb{E}}
\DeclareMathOperator*{\Var}{\text{Var}}
\title{Improved Hardness of Approximating $k$-Clique under ETH}
\author{
Bingkai Lin \thanks{State Key Laboratory for Novel Software Technology, Nanjing University. Email: \texttt{lin@nju.edu.cn}}
\and
Xuandi Ren\thanks{University of California, Berkeley. Email: \texttt{xuandi\_ren@berkeley.edu}}
\and
Yican Sun\thanks{Peking University. Email: \texttt{sycpku@pku.edu.cn}}
\and 
Xiuhan Wang\thanks{Tsinghua University. Email: \texttt{wangxh19@mails.tsinghua.edu.cn}}
}
\date{}
\begin{document}

\maketitle

\begin{abstract}
In this paper, we prove that assuming the exponential time hypothesis (ETH), there is no $f(k)\cdot n^{k^{o(1/\log\log k)}}$-time algorithm that can decide whether an $n$-vertex graph contains a clique of size $k$ or contains no clique of size $k/2$, and no FPT algorithm can decide whether an input graph has a clique of size $k$ or no clique of size $k/f(k)$, where $f(k)$ is some function in $k^{1-o(1)}$. Our results significantly improve the previous works~\cite{Lin21,DBLP:conf/icalp/LinRSW22}. The crux of our proof is a framework to construct gap-producing reductions for the \kclique{} problem. More precisely, we show that given an error-correcting code $C:\Sigma_1^k\to\Sigma_2^{k'}$ that is locally testable and smooth locally decodable in the parallel setting,  one can construct a reduction which on input a graph $G$ outputs a graph $G'$ in $(k')^{O(1)}\cdot n^{O(\log|\Sigma_2|/\log|\Sigma_1|)}$ time such that
 \begin{itemize}
     \item if $G$ has a clique of size $k$, then $G'$ has a clique of size $K$, where $K = (k')^{O(1)}$.
     \item if $G$ has no clique of size $k$, then $G'$ has no clique of size $(1-\varepsilon)\cdot K$ for some constant $\varepsilon\in(0,1)$.
 \end{itemize}
 We then construct such a code with $k'=k^{\Theta(\log\log k)}$ and $|\Sigma_2|=|\Sigma_1|^{k^{0.{54}}}$, establishing the hardness result above. Our code generalizes the derivative code~\cite{WY07} into the case with a super constant order of derivatives. 
\end{abstract}
\clearpage

\begin{spacing}{0.8}
\tableofcontents
\end{spacing}
\clearpage

\section{Introduction}
In this work, we study the \kclique{} problem: given a simple graph $G$ with $n$ vertices, decide whether $G$ contains a clique of size $k$. The \kclique{} problem is one of the most fundamental computational problems in complexity theory. It is $\mathsf{NP}$-hard~\cite{kar72}, which means that there is no polynomial time algorithm for \kclique{} assuming $\mathsf{P}\ne \mathsf{NP}$.

It is natural to consider \emph{approximation} to circumvent the intractability. 
The decision version of approximating \kclique{}, also known as the $\gamma(k)$-gap \kclique{} problem for some approximation ratio $\gamma(k)>1$, aims to distinguish between graphs with a  $k$-clique and those without  clique of size $k/\gamma(k)$. 
Unfortunately, there has been a long line of work~\cite{bellare1993efficient, bellare1994improved,feige1996interactive, hastad1996clique,  goldwasser1998introduction, feige2000two, Zuc07} showing that even for the nearly tight ratio $n^{1-\varepsilon}$, the $n^{1-\varepsilon}$-gap \kclique{} problem remains $\mathsf{NP}$-hard for every $\varepsilon>0$.

Besides approximation, \emph{parameterization} is also widely used to bypass $\mathsf{NP}$-hardness. In the parameterized regime, instead of using polynomial-time algorithms, we treat $k$ as a parameter independent of $n$, and allow algorithms with running time $f(k)\cdot n^{g(k)}$, where $f,g$ are some computable functions (e.g., $k!$) independent of $n$. In addition, we say an algorithm is \emph{fixed parameter tractable} (FPT) if $g(k)=O(1)$. However, \kclique{} is still hard in this setting:
\begin{itemize}
    \item \kclique{} has been identified as the canonical complete problem in $\wone$, which is the parameterized analogy of $\mathsf{NP}$~\cite{downey1995fixed}. Hence, \kclique{} admits no FPT algorithm assuming $\wone \ne \fpt$.
    \item Assuming the \emph{exponential time hypothesis} (ETH), we can rule out $f(k)\cdot n^{o(k)}$ time algorithms for \kclique{} for every computable function $f(k)$~\cite{ChenHKX06}.
\end{itemize}

In parallel with the classic setting, it is natural to ask whether \kclique{} admits an efficient approximation algorithm in the parameterized regime. Note that $\gamma(k)$-gap \kclique{} has a trivial $n^{O(k/\gamma(k))}$-time algorithm which simply enumerates all subsets with size $k/\gamma(k)$. Thus we consider the following question:
\begin{center}
    \textit{Does $\gamma(k)$-gap \kclique{} admits an algorithm that runs in $f(k)\cdot n^{o(k/\gamma(k))}$ time for some computable function $f(k)$?}
\end{center}
 This problem is fundamental due to the importance of \kclique{} in the parameterized complexity theory. It  is also connected with the \emph{parameterized inapproximability hypothesis} (PIH)~\cite{LRSZ20}, a central conjecture in the parameterized complexity theory. The conjecture states that no algorithm can constantly approximate the constraint satisfaction problem (CSP) with $k$ variables and alphabet size $n$ in FPT time\footnote{Note that the original PIH in \cite{LRSZ20} states that approximating the constraint satisfaction problem parameterized by the number of variables within a constant ratio is $\mathsf{W[1]}$-hard. Here we actually use a relaxed form.}. As shown in~\cite{DBLP:conf/icalp/LinRSW22}, PIH can be implied by a lower bound of $f(k)\cdot n^{\omega(k/\log k)}$ for constant-gap \kclique{}\footnote{Throughout this paper, the lower bound for constant-gap \kclique{} means that there \emph{exists} a constant $C>1$ such that $C$-gap \kclique{} has such a lower bound. This is harmless because we can apply expanders to amplify the gap to every constant.}. Thus, an immediate open problem is to improve the lower bound for constant-gap \kclique{} to prove PIH (under ETH). 
 
 The study of the parameterized  gap \kclique{} has received increasing attention recently. First, the work \cite{CCK+17} rules out all algorithms within running time $f(k)\cdot n^{o(k/\gamma(k))}$ for every $\gamma(k) = o(k)$, assuming Gap-ETH~\cite{Din16,MR16}. Recently, Lin \cite{Lin21} shows that constant-gap \kclique{} has no FPT algorithms assuming $\mathsf{W[1]}\ne \mathsf{FPT}$. The work also implicitly rules out $f(k)\cdot n^{o(\sqrt[6]{\log k})}$ time algorithm for constant-gap 
\kclique{} assuming ETH. Lin's reduction is further improved in two follow-up works: In \cite{DBLP:conf/icalp/LinRSW22},   the lower bound for constant-gap \kclique{} is improved from $f(k)\cdot n^{\Omega(\sqrt[6]{\log k})}$ to $f(k)\cdot n^{\Omega({\log k})}$ under ETH. On another orthogonal  direction, Karthik and Khot~\cite{DBLP:conf/coco/SK22} improve the FPT inapproximability ratio from constant to $k^{o(1)}$ under $\mathsf{W[1]}\ne \mathsf{FPT}$.

In this work, we make significant progress in proving the lower bound of constant-gap \kclique{} under ETH.

\begin{theorem}[Theorem~\ref{thm:eth-gap-clique-lowerbound}]\label{thm:intro-A}
Assuming ETH, for some constant $\lambda>0$, constant-gap \kclique{} admits no algorithm within running time $f(k)\cdot n^{k^{\lambda /\log \log k}}$ for any computable function $f$.    
\end{theorem}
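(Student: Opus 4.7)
The plan is to compose the gap-producing reduction promised by the framework with the ETH-hardness of exact \kclique{}. By Chen et al., ETH rules out $f(k)\cdot n^{o(k)}$ algorithms for deciding whether an $n$-vertex graph has a $k$-clique, so it suffices to exhibit a reduction that turns a fast algorithm for constant-gap \kclique{} into an $f'(k)\cdot n^{o(k)}$ algorithm for exact \kclique{}. I would feed $(G,k)$ into the framework with the code $C:\Sigma_1^k\to\Sigma_2^{k'}$ constructed in the paper, producing a graph $G'$ on $N:=(k')^{O(1)}\cdot n^{O(\log|\Sigma_2|/\log|\Sigma_1|)}$ vertices together with the completeness/soundness gap between cliques of size $K:=(k')^{O(1)}$ and $(1-\varepsilon)K$ promised by the framework.

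Plugging in the claimed code parameters $k'=k^{\Theta(\log\log k)}$ and $\log|\Sigma_2|/\log|\Sigma_1|=k^{0.54}$, we get $N=n^{O(k^{0.54})}$ (after absorbing polynomial factors in $k$ into the exponent) and $K=k^{c\log\log k}$ for some constant $c$. Hence $\log K=c(\log\log k)(\log k)$ and $\log\log K=\log\log k+O(\log\log\log k)=\Theta(\log\log k)$, which gives for any constant $\lambda$
$$K^{\lambda/\log\log K}=2^{\lambda\log K/\log\log K}=k^{(1+o(1))c\lambda}.$$
If constant-gap $K$-clique were solvable in time $f(K)\cdot N^{K^{\lambda/\log\log K}}$, then composing with the reduction would decide $G$ in time $f'(k)\cdot n^{O(k^{0.54+(1+o(1))c\lambda})}$. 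Choosing $\lambda$ so that $0.54+2c\lambda<1$ makes the exponent of $n$ of the form $k^{1-\Omega(1)}=o(k)$, contradicting ETH; this $\lambda$ is the constant asserted in the theorem.

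The main obstacle does not lie in the parameter chase above but in the two imported ingredients. The framework reduction must convert local testability and smooth parallel local decodability of $C$ into the clique-gap guarantee: completeness is an essentially direct encoding of a $k$-clique as a decoded-consistent assignment, but soundness — that every $(1-\varepsilon)K$-clique in $G'$ forces a low-weight codeword and hence recovers a $k$-clique of $G$ — is where the parallel local decoder and its smoothness must be invoked with the right query distribution so that the $n^{O(\log|\Sigma_2|/\log|\Sigma_1|)}$ bound on $N$ is attained. The more delicate obstacle is the code itself: to achieve $k'=k^{\Theta(\log\log k)}$ with the small alphabet blow-up $|\Sigma_2|=|\Sigma_1|^{k^{0.54}}$, one must push the Woodruff--Yekhanin derivative code to super-constant derivative order and show that reconstruction of all partial derivatives from a low-degree curve still provides smooth parallel local decoding in that regime; the exponent $0.54$ is exactly what this generalized decoder affords, and it is the bottleneck controlling the final constant $\lambda$.
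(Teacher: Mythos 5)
Your proof is correct and follows essentially the same approach as the paper: plug the generalized derivative code (Theorem~\ref{thm:pldtc2}) into the framework reduction (Theorem~\ref{thm:vcsp2gapclique}), start from the ETH lower bound on exact \kclique{} routed through vector 2CSP (Corollary~\ref{cor:eth}), and chase parameters using $\log\log K = \Theta(\log\log k)$ to turn $k^{0.54}\cdot K^{\lambda/\log\log K}$ into $k^{1-\Omega(1)} = o(k)$. The only cosmetic difference is that you cite the abstract's composed version of the reduction rather than explicitly passing through the vector 2CSP intermediate, but that is exactly how the paper's proof of Theorem~\ref{thm:eth-gap-clique-lowerbound} decomposes as well.
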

By using a disperser argument to amplify the gap, we have the following:
\begin{theorem}[Theorem~\ref{thm:eth-gap-clique-fpt}]\label{thm:intro-B}
Assuming ETH, for some constant $\lambda>0$, $(k/(\log k)^{\lambda\log\log\log k})$-gap \kclique{} has no FPT algorithm.
\end{theorem}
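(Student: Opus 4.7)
The plan is to deduce Theorem~\ref{thm:intro-B} from Theorem~\ref{thm:intro-A} by an FPT-preserving gap-amplification reduction based on dispersers. Note first that Theorem~\ref{thm:intro-A} already rules out any FPT algorithm for constant-gap $k$-clique, since its lower bound $n^{k^{\lambda/\log\log k}}$ is superpolynomial in $n$ for unbounded $k$; the content of Theorem~\ref{thm:intro-B} is therefore to amplify the gap from constant to $k/(\log k)^{\lambda\log\log\log k}$ while preserving FPT-hardness. Given a constant-gap $k_0$-clique instance $G$ on $n$ vertices from Theorem~\ref{thm:intro-A} (either $G$ has a $k_0$-clique or $G$ has no $k_0/c$-clique for some absolute constant $c>1$), I would construct in $f(k_0)\cdot n^{O(1)}$ time an amplified graph $G^{\star}$ with $|V(G^{\star})|\leq f(k_0)\cdot n^{O(1)}$ and a new parameter $K=K(k_0)$, such that YES yields a $K$-clique in $G^{\star}$ and NO yields no $\ell$-clique in $G^{\star}$ where $\ell\leq(\log K)^{\lambda\log\log\log K}$. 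Composing a hypothetical FPT algorithm for $(K/\ell)$-gap $K$-clique with this reduction would give an FPT algorithm for constant-gap $k_0$-clique, contradicting Theorem~\ref{thm:intro-A}.

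Concretely, the amplification would use a disperser $D\subseteq V(G)^s$: take $V(G^{\star})=D$ and declare $u,v\in D$ adjacent iff for every coordinate $i\in[s]$, $u_i=v_i$ or $\{u_i,v_i\}\in E(G)$. A clique $W$ in $G^{\star}$ coordinate-projects to cliques in $G$; in the NO case each $\pi_i(W)$ has size $<k_0/c$, so $W\subseteq D\cap T^s$ for some $T=\bigcup_i\pi_i(W)\subseteq V(G)$ of size at most $s\cdot k_0/c$, and the disperser's spreading property bounds $|D\cap T^s|$ by $\ell$. In the YES case, the disperser is arranged, either by a randomized construction with a union bound over all potential $k_0$-cliques or by an explicit structured construction, so that for every $k_0$-clique $C\subseteq V(G)$, $D$ contains at least $K$ tuples inside $C^s$. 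Tuning $s$ together with iterated amplification over roughly $\Theta(\log\log K)$ rounds, each shrinking the NO-side bound significantly, drives $\ell$ down to the target polylog-polyloglog form.

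The main obstacle is parameter tuning: we must achieve the exact bound $\ell=(\log K)^{\lambda\log\log\log K}$ while keeping the reduction FPT-preserving (polynomial in $n$, with all $k_0$-dependence folded into an $f(k_0)$ factor). A one-shot Zuckerman-style disperser amplification typically gives $\ell$ polynomial in $K$, so reaching polylogarithmic $\ell$ requires an iteration scheme of depth $\Theta(\log\log K)$, and the shape of this iteration must mirror the $1/\log\log k_0$ factor appearing in the Theorem~\ref{thm:intro-A} exponent in order for the $(\log K)^{\lambda\log\log\log K}$ form to emerge. A related subtlety is that a purely random disperser of size $N=\mathrm{poly}(n)\cdot f(k_0)$ contains on average only $N(k_0/n)^s$ tuples inside any fixed $k_0$-clique, which is too small when $s$ depends on $k_0$ and $n\gg k_0$; so the YES-side guarantee typically demands a more structured disperser (for example, with vertices of $G^{\star}$ labelled by pairs $(v,\pi)$ where $v\in V(G)$ and $\pi$ ranges over a small auxiliary code) in order to guarantee enough YES-tuples while the reduction stays within FPT blow-up.
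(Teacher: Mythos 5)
Your high-level strategy is correct: Theorem~\ref{thm:intro-B} is indeed derived from Theorem~\ref{thm:intro-A} via disperser-based gap amplification. But your plan hinges on a misconception that sends you down a needlessly hard road, and this constitutes a genuine gap.

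You insist the amplification be \emph{FPT-preserving}, i.e., $|V(G^{\star})|\le f(k_0)\cdot n^{O(1)}$. This is not what the paper does and is not what's needed. Since Theorem~\ref{thm:intro-A} rules out \emph{every} running time of the form $f(k)\cdot n^{k^{\lambda_0/\log\log k}}$, the composed algorithm only needs to beat this bound, not be FPT. The paper's reduction (Claim~\ref{claim:A}) deliberately incurs a blow-up of $N\le n^{k^{\lambda/\log\log k}}$ for some $\lambda<\lambda_0$: then composing with a hypothetical FPT algorithm on the amplified instance gives running time $f(k)\cdot n^{O(k^{\lambda/\log\log k})}$, which is strictly less than $n^{k^{\lambda_0/\log\log k}}$, contradicting Theorem~\ref{thm:intro-A}. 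Once you allow this slack, the two obstacles you identify disappear: (i) a single-shot disperser suffices, no $\Theta(\log\log K)$ iteration; and (ii) the ``too few YES-tuples'' problem never arises, because the disperser is built not over $V(G)^s$ but over the index set $[k_0]$ of the color classes.

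Concretely, the paper builds a $(k,m,\ell,r,\varepsilon)$-disperser $\mathcal I=\{I_1,\dots,I_m\}$ with $I_i\subseteq[k]$, $|I_i|=\ell=k^{\lambda/\log\log k}$, $r=\Theta(k^{1-\lambda/\log\log k})$, and $m=\exp\bigl(\Theta(k^{\lambda/\log\log k})\bigr)$; since the universe is $[k]$, this can be brute-forced in time depending only on $k$. The $i$-th part $V_i'$ consists of all $\ell$-tuples of vertices, one from each color class $V_j$ with $j\in I_i$, so $N\le m\cdot n^{\ell}$. Completeness is automatic: a $k$-clique $v_1,\dots,v_k$ yields, for each $I_i$, the tuple $(v_j)_{j\in I_i}$, and these $m$ tuples form a clique. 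Soundness uses the disperser union property to bound the max clique by $r$. Working out the arithmetic gives $r\le(\log m)^{(2/\lambda)\log\log\log m}$, which is the stated bound. Your plan, by contrast, places the disperser over $V(G)$ and correctly notices this makes the YES-side untenable at polynomial size, then proposes structured dispersers and iteration to fix it --- but the real fix is to relax the FPT-preservation requirement and use the index-set disperser, both of which you missed.
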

On the other hand, we note that the parameter choice in~\cite{DBLP:conf/icalp/LinRSW22} regarding the relationship between lower bound of constant-gap \kclique{} and PIH is not optimal. By optimizing the parameters, we relax the lower bound requirement from $f(k)\cdot n^{\omega(k/\log k)}$ to $f(k)\cdot n^{\omega(\sqrt k)}$ and obtain the following:
\begin{theorem}[See Appendix~\ref{app:lrsw22}]\label{thm:pih}
If for any computable function $f(k)$, constant-gap \kclique{} has no algorithm in time $f(k)\cdot n^{O(\sqrt k)}$, then PIH is true.
\end{theorem}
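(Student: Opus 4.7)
The plan is to prove the contrapositive: if PIH is false, then constant-gap \kclique{} admits an $f(k)\cdot n^{O(\sqrt k)}$-time algorithm. Negating PIH yields some constant $c>1$ and a computable function $h$ such that a $c$-approximation to any $2$-CSP with $t$ variables and alphabet $[N]$ can be computed in time $h(t)\cdot N^{O(1)}$.

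We then invoke the gap-preserving reduction of \cite{DBLP:conf/icalp/LinRSW22}, which turns a constant-gap \kclique{} instance $G$ on $n$ vertices into a constant-gap $2$-CSP $\Pi_G$ with $t$ variables and alphabet size at most $n^{O(k/t)}$, where $t$ is a free parameter of the reduction. Here each variable encodes a ``block'' of size $k/t$ that induces a clique in $G$, and each pairwise constraint demands that the union of two blocks is a $(2k/t)$-clique; the soundness rests on a Tur\'an-style cleaning argument that extracts a large clique in $G$ from any high-value CSP assignment, together with a gap-amplification step that upgrades the CSP gap to a constant whenever the \kclique{} gap is constant. Applying the hypothesized FPT approximation algorithm to $\Pi_G$ yields overall running time $h(t)\cdot n^{O(k/t)}$.

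The optimization is to pick $t$ as large as possible, so as to minimize the exponent $k/t$ of $n$, subject to the soundness analysis still yielding a constant CSP gap. The choice $t=\Theta(\log k)$ made in \cite{DBLP:conf/icalp/LinRSW22} produces the exponent $k/\log k$; a sharper analysis of the parameter budget, carried out in Appendix~\ref{app:lrsw22}, shows that $t$ can in fact be taken as large as $\Theta(\sqrt k)$ while still preserving a constant CSP gap, improving the exponent from $k/\log k$ to $\sqrt k$. Note that $t=\sqrt k$ is precisely the balance point at which the number of variables $t$ matches the block size $k/t$; pushing $t$ beyond $\sqrt k$ would shrink the block size below the number of variables and cause the Tur\'an extraction to fall below the constant-gap threshold on \kclique{}, which is why $\sqrt k$ is the natural barrier.

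The main obstacle is re-verifying the cleaning/soundness argument at this balanced parameter: we must ensure that the clique extracted from the super-graph of satisfied constraints is large enough to contradict the \kclique{} gap even when the super-graph has only $\sqrt k$ vertices, and must carefully track how the $c$ from the PIH-refuting CSP oracle propagates to the constant in the \kclique{} gap. We expect this to require only a more careful parameter budgeting rather than genuinely new ideas on top of the \cite{DBLP:conf/icalp/LinRSW22} framework.
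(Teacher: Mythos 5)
There is a genuine gap. Your proposed reduction partitions $[k]$ into $t$ disjoint blocks of size $k/t$, giving a CSP on $t$ variables, and you argue for soundness via a ``Tur\'an-style cleaning argument'' plus a gap-amplification step. This does not produce a constant CSP gap at $t=\Theta(\sqrt k)$. The issue is quantitative: if you use a deterministic partition into $t$ disjoint blocks, then to have $r$ blocks cover a $(1-\varepsilon)$-fraction of $[k]$ you need $r\ge(1-\varepsilon)t$, i.e.\ $r$ must be almost all of the variables. Applying the K\~ov\'ari--S\'os--Tur\'an bound (the paper's actual tool, not Tur\'an) to a graph on $m=t$ parts with no $K_{r,r}$ gives an edge bound of roughly $\frac{1}{2}m^{2-1/r}$, which beats $\binom{m}{2}$ by only a factor $m^{-1/r}=2^{-(\log m)/r}$. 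When $r\approx m=t$, the exponent $(\log m)/r\approx(\log t)/t\to 0$, so the density ratio tends to $1$ and no constant gap survives. In short, disjoint blocks of a partition cannot simultaneously satisfy the coverage property (needed for soundness extraction) and the $\log m\gtrsim r$ inequality (needed for a constant KST gap).

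The paper resolves this tension with a \emph{disperser}: a family $\mathcal I$ of $m=2^{\sqrt k}$ overlapping subsets of $[k]$, each of size $\ell=\Theta(\sqrt k)$, such that any $r=\sqrt k$ of them cover a $(1-\varepsilon/2)$-fraction of $[k]$. This gives $\log m=\sqrt k=r$, so KST yields a constant density gap, while the blow-up per color class is $n^{\ell}=n^{O(\sqrt k)}$. The exponent $\sqrt k$ then emerges from two constraints --- the disperser existence condition $\ln m\le k/r$ from Lemma~\ref{lem:disperser}, and the KST constant-gap requirement $\log m\ge\beta r$ --- which together force $r\le\sqrt k$; maximizing $r$ minimizes $\ell=\Theta(k/r)$ and gives the $n^{O(\sqrt k)}$ reduction cost. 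Your intuition that $t=k/t$ is the balance point is numerically a coincidence with the paper's $r=k/r$ balance, but the constraints driving it are different, and the partition-into-blocks construction you describe does not realize this balance in a way that preserves a constant gap. The reduction also targets (multi-colored) Densest $m$-Subgraph rather than a gap $2$-CSP on $t$ variables, and uses no gap-amplification step; the gap comes directly from the K\~ov\'ari--S\'os--Tur\'an edge bound.
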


The highlights  of this paper are listed below.

\begin{itemize}
    \item Compared with~\cite{CCK+17}, our results are obtained under the weaker gap-free hypothesis ETH, and the main difficulty here is ``gap-producing''.  In contrast, ~\cite{CCK+17} assumes Gap-ETH, where the gap is inherent in the assumption, and the problem is to preserve the gap. Thus, their approach does not apply to our setting.
    \item Compared with~\cite{DBLP:conf/icalp/LinRSW22}, our results significantly improve the lower bound of constant-gap \kclique{} under ETH from $f(k)\cdot n^{\Omega(\log k)}$ to $f(k)\cdot n^{k^{\Omega(1/\log \log k)}}$. 
    \item Compared with~\cite{Lin21} and~\cite{DBLP:conf/coco/SK22}, though our results rely on a stronger but still gap-free assumption, we significantly improve the inapproximability ratio for \kclique{} in FPT time, from $k^{o(1)}$ to some function in $k^{1-o(1)}$. 
    \item Our technique generalizes the previous proofs~\cite{Lin21,DBLP:conf/icalp/LinRSW22} into a \emph{systematic framework} for deriving the lower bound of constant-gap \kclique{}. In a nutshell, we show that \emph{a nice code is all you need} for a gap-producing reduction for \kclique{}.
\end{itemize}

\noindent\textbf{Our reduction framework.} Previous approaches~\cite{Lin21,DBLP:conf/icalp/LinRSW22}  use the classic Hadamard code or Reed-Muller code to encode the solution of an intermediate CSP problem, and derive the lower bound for constant-gap \kclique{} using some implicit properties of these codes. Therefore, it is unclear how to improve the lower bound by using other error-correcting codes.

Our new framework explicitly separates the code from the reduction to constant-gap \kclique{}. Specifically, our framework requires a certain type of error-correcting code $C: \Sigma_1^{k} \to \Sigma_2^{k'}$, which we call ``parallel locally testable and decodable codes’'~(PLTDCs, Definition~\ref{def:nice}). Any such a code can be plugged into our framework to obtain a lower bound. Intuitively, the higher rate and lower alphabet size blow-up of the code $C$ has, the stronger lower bound we can get. See Theorems~\ref{thm:reduction-intro} and~\ref{thm:vcsp2gapclique} for more details.

To prove Theorem~\ref{thm:intro-A}, we construct such codes with $k' = k^{\Theta(\log \log k)}$ and $|\Sigma_2| = |\Sigma_1|^{k^{0.{54}}}$. Our construction generalizes the derivative code~\cite{WY07} into the case with higher-order derivatives. We construct such codes by choosing the derivative order to be $\Theta(\log k)$.

In the end, our framework does not completely incorporate the reduction in~\cite{DBLP:conf/coco/SK22} due to its stronger requirements for the local decodability. 

However, one of the ingredients in our framework, namely the vector 2CSP, may serve as an alternative to the intermediate problem there, and can thus simplify their proof. In fact, Chen, Feng, Laekhanukit and Liu \cite{CFLL23} already  gave a simple and elegant proof which extends the result in~\cite{DBLP:conf/coco/SK22}. Our proof for the hardness of vector 2CSP is similar to  the \emph{Sidon Set} technique used in~\cite{CFLL23}.

\medskip\noindent\textbf{Future work.}
We propose two future directions. First, by our Theorem~\ref{thm:pih}, to prove PIH under ETH, we need to improve the lower bound for constant-gap \kclique{} from $f(k)\cdot n^{k^{\Omega(1/\log \log k)}}$ to $f(k)\cdot n^{\omega(\sqrt k)}$.  It is worth considering whether this can be done by constructing a more efficient PLTDC. 

The second direction is to make our framework more fruitful and incorporate more approaches, such as~\cite{DBLP:conf/coco/SK22}.

\section{Proof Overview}

In this section, we present an overview of our results. All definitions and claims in this part are \emph{informal}. We will give formal descriptions when we make some definitions and claims.

\subsection{Key Idea}
 Approximating \kclique{} can be recognized as a weaker variant of approximating CSPs. We follow the well-established strategy for proving the $\mathsf{NP}$-hardness of approximating CSPs (i.e., the PCP theorems)~\cite{ASLMSS98,hastad3querypcp,kozen2006theory,OM06}. 
The previous reduction starts with arithmetizing 3SAT into a CSP with $k$ variables and alphabet $\Sigma_1$.
Then, it takes a locally testable code $C: \Sigma_1^k\to \Sigma_2^{k'}$ and sets the proof to be the codeword that encodes a solution $\varphi:[k]\to \Sigma_1$ of the CSP problem. 

Unfortunately, this strategy does not work in the parameterized setting. The previous proofs set $k'$ to be either $|\Sigma_1|^{\Omega(k)}$~\cite{ASLMSS98,kozen2006theory,OM06} or $\Omega(2^{|\Sigma_1|}k)$~\cite{hastad3querypcp}.
But in the parameterized reduction, the length $k'$ must be independent of $|\Sigma_1|$, making the above strategy inapplicable.

To address this, our idea is to treat the alphabet $\Sigma_1$ as a vector space $\mathbb F^d$ and the solution $\varphi$ as $d$ independent messages in $\mathbb F^k$. 

Then, we construct a ``good'' code $C:\mathbb F^k\to (\mathbb F')^{k'}$, apply $C$ to each message, and combine each symbol of the $d$ codewords into an element in $({\mathbb F}')^d$. This way, we make the output $k'$ independent of $|\Sigma_1|$. To implement this idea, our approach consists of three parts:
\begin{itemize}
\item We first need an intermediate CSP problem whose alphabet is a vector space. For this purpose, we introduce the vector 2CSP~(Definition~\ref{def:bvcsp}) as the intermediate problem.
\item We propose a new type of codes, namely \emph{parallel locally testable and decodable code} (PLTDC, Definition~\ref{def:nice}), to encode the solution for a vector 2CSP instance in FPT time.
\item We apply the modified FGLSS reduction from vector 2CSP to constant-gap \kclique{}  by encoding the solution for a vector 2CSP instance~(Theorem \ref{thm:vcsp2gapclique}).
\end{itemize}

In Sections \ref{sec:v2csp-inro} -- \ref{sec:fglss-intro}, we illustrate three parts one by one. Then, in Section~\ref{sec:construct-intro}, we overview how to construct an efficient PLTDC using polynomials. Finally, in Section~\ref{sec:hardness-intro}, we establish the hardness of constant-gap \kclique{} as Theorems~\ref{thm:intro-A} and~\ref{thm:intro-B}. 

\subsection {Vector 2CSP}\label{sec:v2csp-inro}
Given a finite field $\mathbb F$ and a dimension $d$, a vector 2CSP instance consists of $k$ variables $\{x_1,\dots,x_k\}$.
The alphabet is the vector space $\mathbb F^d$. For each $i\in [k]$, there is a unary constraint ``$x_i\in S_i$'' for some $S_i\subseteq \mathbb F^d$. For each $(i,j)\in \binom{[k]}{2}$, there is a binary constraint ``$x_i+x_j\in S_{ij}$'' for some $S_{ij}\subseteq \mathbb F^d$. The following lemma establishes the hardness of vector 2CSPs. 

\begin{lemma}[Corollaries~\ref{cor:w1} and~\ref{cor:eth}]\label{lem:vcsp-intro}
For every integer $k$ and finite field $\mathbb F$, vector 2CSP problem with $k$ variables and  dimension $\Theta(\frac{\log n}{\log |\mathbb F|})$ over the field $\mathbb F$ is $\wone$-hard and has running time lower bound $f(k)\cdot n^{\Omega(k)}$ under ETH.
\end{lemma}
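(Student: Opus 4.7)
The plan is to give a parameter-preserving polynomial-time reduction from Multicolored $k$-Clique, whose $\wone$-hardness and $f(k)\cdot n^{\Omega(k)}$ ETH lower bound are classical. The binary constraint $x_i+x_j \in S_{ij}$ is quite restrictive compared to an arbitrary binary relation, so the crux is to faithfully encode adjacency by a sum. To do this I would use a \emph{Sidon set} $A \subseteq \mathbb{F}^d$, i.e.\ a set for which $a+b = c+d$ with $a,b,c,d \in A$ forces $\{a,b\} = \{c,d\}$ as multisets. Given such an $A$ of size at least $n$, sums in $A$ uniquely identify the underlying unordered pair, so ``$x_i+x_j\in S_{ij}$'' can express exactly the adjacency predicate we want. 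This matches the suggestion in the paper that the proof resembles the Sidon set technique of~\cite{CFLL23}.

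For the Sidon set itself I would take the smallest extension field $\mathbb{K}\supseteq\mathbb{F}$ with $|\mathbb{K}|\ge n$, so that $[\mathbb{K}:\mathbb{F}] = O(\log n/\log|\mathbb{F}|)$, fix an $\mathbb{F}$-basis of $\mathbb{K}^2$, and use the classical parabola $A := \{(a,a^2) : a \in \mathbb{K}\} \subseteq \mathbb{K}^2 \cong \mathbb{F}^d$ with $d = 2[\mathbb{K}:\mathbb{F}] = \Theta(\log n/\log|\mathbb{F}|)$. In characteristic $\ne 2$, the coordinates give $a+b$ and $a^2+b^2$, and $2ab = (a+b)^2-(a^2+b^2)$ recovers the multiset $\{a,b\}$; in characteristic $2$ one replaces squaring by cubing (or another suitable odd-degree monomial), which still determines $ab$ from $a+b$ and $a^3+b^3$ whenever $a\ne b$. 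In both cases we obtain a Sidon set of size $|\mathbb{K}|\ge n$ in dimension $d = \Theta(\log n/\log|\mathbb{F}|)$, computable in time $\mathrm{poly}(n)$.

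Given a colored input graph $G = (V_1\sqcup\cdots\sqcup V_k, E)$ and an injection $\pi : V(G)\to A$, the produced vector 2CSP has variables $x_1,\ldots,x_k$ over $\mathbb{F}^d$ with unary constraints $S_i := \pi(V_i)$ and binary constraints $S_{ij} := \{\pi(u)+\pi(v) : u\in V_i,\,v\in V_j,\,uv\in E\}$. Completeness is immediate: a colorful clique $v_1,\ldots,v_k$ yields the satisfying assignment $x_i:=\pi(v_i)$. For soundness, any satisfying assignment has $x_i=\pi(v_i)$ for some $v_i\in V_i$, and $x_i+x_j\in S_{ij}$ means $\pi(v_i)+\pi(v_j) = \pi(u)+\pi(u')$ for some edge $uu'$ with $u\in V_i$ and $u'\in V_j$. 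Since $V_i\cap V_j = \emptyset$ we have $v_i\ne v_j$ and $u\ne u'$, so the Sidon property gives $\{v_i,v_j\}=\{u,u'\}$, and the disjointness of color classes pins this down to $v_i=u$, $v_j=u'$; hence $v_iv_j\in E$, and the $v_i$'s form a colorful $k$-clique.

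The reduction runs in $\mathrm{poly}(n,k)$ time, keeps the parameter $k$ unchanged, and produces dimension $d=\Theta(\log n/\log|\mathbb{F}|)$, so both the $\wone$-hardness and the $f(k)\cdot n^{\Omega(k)}$ ETH bound for Multicolored $k$-Clique transfer verbatim to the vector 2CSP problem. The only real obstacle in my view is producing a Sidon set uniformly over every finite field $\mathbb{F}$ while simultaneously satisfying the dimension bound; the low-degree-monomial-on-a-field-extension construction above handles every characteristic at once, and the remainder of the argument is bookkeeping.
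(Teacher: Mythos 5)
Your proposal is correct, and it gives a genuinely different \emph{construction} of the embedding than the paper does, even though the reduction skeleton (from multicolored $k$-Clique, unary constraints = images of color classes, binary constraints = sums of endpoints of edges) is the same. The paper constructs the map $H$ probabilistically: it picks one random matrix $M_i \in \mathbb F^{d \times (\log n/\log|\mathbb F|)}$ per color class, defines $H(v) = M_i v$ for $v\in V_i$, shows that with high probability $H$ is injective on each $V_i$ and the pairwise sums $H(v_i)+H(v_j)$ are injective on distinct pairs, and then appeals to derandomization by conditional expectations; this yields $d = 5\log n/\log|\mathbb F|$. You instead build one \emph{explicit} Sidon set $A\subseteq \mathbb K^2 \cong \mathbb F^d$ with $d = 2\lceil\log n/\log|\mathbb F|\rceil$ from a low-degree curve over an extension field ($\{(a,a^2)\}$ when $\mathrm{char}\,\mathbb F\neq 2$, $\{(a,a^3)\}$ when $\mathrm{char}\,\mathbb F=2$), and embed \emph{all} vertices through one injection $\pi$. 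Your soundness argument using disjointness of color classes together with the (distinct-pair) Sidon property is correct, and you correctly flag that the characteristic-$2$ cubic only gives a Sidon set for distinct pairs, which is all that is needed since $V_i\cap V_j=\emptyset$. In short, your argument trades the paper's probabilistic-plus-derandomization step for an explicit algebraic construction, giving a slightly smaller constant in the dimension; the paper in fact remarks in the introduction that its proof is ``similar to the Sidon Set technique used in~\cite{CFLL23},'' so you have effectively produced the explicit version of the idea they allude to.
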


Compared with the {\sc $k$-VectorSum} problem, which was used as the intermediate problem in proving the hardness of gap \kclique{} previously~\cite{Lin21,DBLP:conf/icalp/LinRSW22,DBLP:conf/coco/SK22}, and whose hardness requires the dimension $d=\Omega(k^{2}\log n)$, the hardness of vector 2CSP only requires $d=O(\log n)$. This avoids the random projection step in the previous proofs for dimension reduction and simplifies the proof for the hardness of gap \kclique{}.

\begin{proof}[Proof of Lemma~\ref{lem:vcsp-intro}]
We reduce \kclique{} to this CSP problem by embedding the vertices into low-dimension vectors with a special hash function $H$. Given the graph $G = (V,E)$ for \kclique{}, the hash function $H : V\to \mathbb F^{\frac{5\log |V|}{\log |\mathbb F|}}$ has two properties:
\begin{itemize}
    \item For any two distinct vertices $v, v'\in V$, $H(v)\neq H(v')$.
    \item For any two distinct pairs $(v, w), (v', w')\in V\times V$, $H(v) + H(w)\neq H(v') + H(w')$.
\end{itemize}
The two properties ensure that $H(v)$ uniquely determines a vertex $v$ and $H(v)+H(w)$ uniquely determines a pair $(v,w)$. Such functions can be constructed by viewing the name of each vertex as a vector over $\mathbb{F}$ and multiplying it by a random matrix. We can further use the method of conditional probabilities to derandomize the construction.

We then construct the following instance of vector 2CSP. Let $S = \{H(v): v\in V\}$ and $W = \{H(v)+H(w): v,w\in V\land (v,w)\in E\}$. We set up $k$ variables $\{x_1,\dots,x_k\}$ with the alphabet $\mathbb F^{\frac{5\log |V|}{\log |\mathbb F|}}$ of size $|V|^5$, where each $x_i$ is supposed to be the vector obtained by applying the hash function $H(\cdot)$ to the $i$-th vertex in the clique. Then, we set two types of constraints:
\begin{align*}
    \textstyle {\rm (C1)}\ \forall i\in [k]\ x_i\in S. & & {\rm (C2)}\ \forall i,j\in \binom{[k]}{2}\ x_i+x_j\in W.
\end{align*}
Intuitively, (C1) guarantees each $x_i$ stands for some unique vertex, and (C2) ensures an edge connects the $i$-th and the $j$-th vertex. By the two properties above of $H$, the CSP is satisfiable if and only if the input graph $G$ has a clique of size $k$, and thus Lemma~\ref{lem:vcsp-intro} follows. 
\end{proof}

\subsection{Parallel Locally Testable and Decodable Codes (PLTDCs)}\label{sec:pltdc-intro}
\medskip\noindent\textbf{Parallel encoding.} Before illustrating PLTDCs, we first introduce the parallel encoding. We present an illustration of parallel encoding in Figure~\ref{fig:eccfig}. In detail, given a code $C: \Sigma_1^{k}\to \Sigma_2^{k'}$ and an integer $\ell$ for the degree of parallelization, its parallel encoding gives the code $C^{\odot \ell}: \left(\Sigma_1^{\ell}\right)^{k}\to \left(\Sigma_2^{\ell}\right)^{k'}$ that maps $k$ vectors with length $\ell$ over $\Sigma_1$ into $k'$ vectors with length $\ell$ over $\Sigma_2$.
$C^{\odot \ell}$ is defined as follows.
Given the input message $(\mathbf{a}_1,\dots, \mathbf{a}_k)\in \left(\Sigma_1^{\ell}\right)^{k}$, we denote by $(\mathbf{z}_1,\dots, \mathbf{z}_{k'})$  the output codeword of $C^{\odot \ell}$ and denote by $\mathbf{a}_j[i]$ the $i$-th entry ($i\in [\ell]$) of the vector $\mathbf{a}_j$. Similar definition holds for  $\mathbf{z}_j$. Then, for every $i\in [\ell]$, we define the message $\mathbf{m}_i:=(\mathbf{a}_1[i],\dots, \mathbf{a}_k[i])\in \Sigma_1^{k}$ that extracts the $i$-th entry of the input, and define:
\begin{equation}
    (\mathbf{z}_1[i],\dots,\mathbf{z}_{k'}[i]) := C(\mathbf{m}_i)=C(\mathbf{a}_1[i],\dots, \mathbf{a}_k[i]).
\end{equation}
Note that the equation above uniquely defines the output codeword $(\mathbf{z}_1,\dots,\mathbf{z}_{k'})$. Intuitively, the parallel encoding treats the input $k$ length-$\ell$ vectors over $\Sigma_1$ as $\ell$ parallel messages in $\Sigma_1^{k}$, and applies the code $C$ to each message, finally zips the $\ell$ output codewords into $k'$ vectors with length $\ell$. As a toy example, consider $k=2$, $k'=3$, and $\ell=2$. For $\mathbf{m}_1 = (\mathbf{a}_1[1],\mathbf{a}_2[1])$ and $\mathbf{m}_2 = (\mathbf{a}_1[2],\mathbf{a}_2[2])$, the codeword is:
\begin{equation*}
    {\huge C^{\odot 2}}{\small\left(\begin{bmatrix}\mathbf{a}_1[1]\\\mathbf{a}_1[2]\end{bmatrix}, \begin{bmatrix}\mathbf{a}_2[1]\\\mathbf{a}_2[2]\end{bmatrix}\right) = \left(\begin{bmatrix}C(\mathbf{m}_1)_1\\C(\mathbf{m}_2)_1\end{bmatrix},\begin{bmatrix}C(\mathbf{m}_1)_2\\C(\mathbf{m}_2)_2\end{bmatrix},\begin{bmatrix}C(\mathbf{m}_1)_3\\C(\mathbf{m}_2)_3\end{bmatrix}\right)}.
\end{equation*}

 \begin{figure}[h]
\centering
\includegraphics[width=0.4\textwidth]{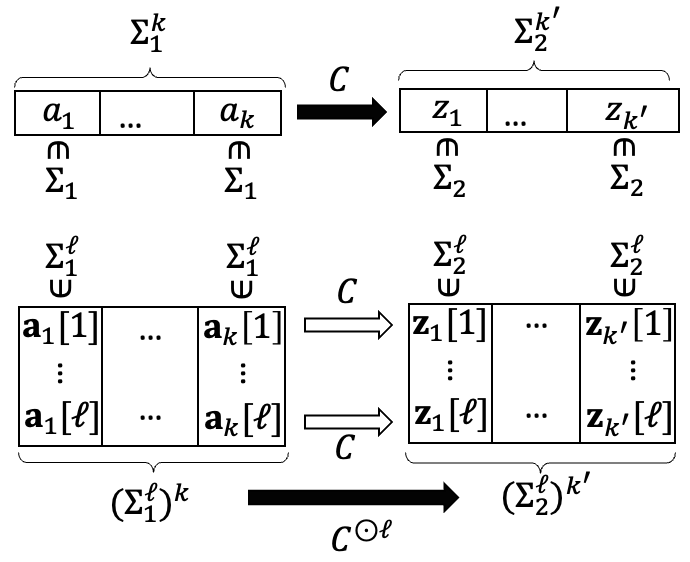}
\caption{Classic Encoding vs. Parallel Encoding}
\label{fig:eccfig}
\end{figure}

\medskip\noindent\textbf{PLTDCs.} Intuitively, $C$ is a parallel locally testable and decodable code (PLTDC) if all its $\ell$-parallel versions $C^{\odot \ell}$ are constant-query locally testable and 2-query locally decodable, for every $\ell\ge 1$. Below, we informally define a PLTDC $C$.
We fix a given word $w\in \left(\Sigma_2^{\ell}\right)^{k'}$ as the corrupted version of some codeword $C^{\odot \ell}(\mathbf{a}_1,\dots,\mathbf{a}_k)$.
\begin{itemize}
    \item \textit{($q$-query parallel locally testable)} For every $\ell\ge 1$, we can randomly test whether $w$ is close to some codeword by $q$ queries on $w$. The testing algorithm has the following properties.
    \begin{itemize}
        \item It randomly chooses $r\in [R_T]$, where  $R_T=\left(k'\right)^{O(1)}$, and queries $q$ positions based on $r$.
        \item {\it (Perfect completeness)} If $w$ is a codeword, then it always accepts $w$.
        \item{\it (Local testability)} If it accepts $w$ with probability at least $\varepsilon_T$, then $ \Delta\left(w,{\rm Im}\left(C^{\odot \ell}\right)\right)\le \delta$, where $\varepsilon_T$ and $\delta$ are independent of $\ell$. 
    \end{itemize}
    \item \textit{(2-query parallel smooth locally decodable)} For every $\ell\ge 1$, we can decode every $\mathbf{a}_i$ and $\mathbf{a}_i+\mathbf{a}_j$ by two queries on $w$. The decoding algorithm has the following properties.
    \begin{itemize}
        \item (\textit{Smoothness}) It chooses a randomness $r\in [R_D]$, where  $R_D=\left(k'\right)^{O(1)}$, and queries two positions based on $r$. Each position $\in [k']$ is queried with equal probability $2/k'$.
        \item (\textit{Perfect completeness}) If $w=C^{\odot \ell}(\mathbf{a}_1,\dots,\mathbf{a}_k)$, then it succeeds with probability 1.
        \item (\textit{Local decodability}) If $\Delta(w,C^{\odot \ell}(\mathbf{a}_1,\dots,\mathbf{a}_k))\le \delta_D$, then it succeeds with probability at least $\varepsilon_D$, where $\varepsilon_D$ is independent of $\ell$, and $\delta_D \ge \delta$.
    \end{itemize}
\end{itemize}

We discuss the restriction of PLTDCs in Remark~\ref{rem:restriction}, after finishing the overview of our framework. Below, we review the classic Hadamard code and show that it is a PLTDC in sketch. We present the formal treatments in Section~\ref{sec:hard-pltdc}. In Section~\ref{sec:2pir}, we show that the derivative code~\cite{WY07} is also a PLTDC.

\noindent\textbf{Example: the Hadamard code.}
We show that the Hadamard code is a PLTDC.
Given a field $\mathbb F$, let $H:\mathbb{F}^{k}\to\mathbb{F}^{\mathbb{F}^{k}}$ be the Hadamard code, i.e., for every $\mathbf{a}=(a_1,\dots,a_k)^T\in \mathbb F^k,\mathbf{x}\in\mathbb{F}^{k}$,
\[
H(a_1,\dots,a_k)[\mathbf{x}]=\mathbf{a}^T\mathbf{x},
\]
where we treat the codeword $H(\mathbf{a})[\cdot]$ as a function from $\mathbb{F}^{k}$ to $\mathbb{F}$. Similarly, consider the $\ell$-parallel code $H^{\odot \ell}: \left(\mathbb F^{\ell}\right)^{k}\to \left(\mathbb F^{\ell}\right)^{\mathbb F^k}$, given $\mathbf a \in \left(\mathbb F^{\ell}\right)^{k}$, we treat $\mathbf a$ as an element in $\mathbb F^{k\times \ell}$ and $H^{\odot \ell}(\mathbf{a})[\cdot]$ can be viewed as a function from $\mathbb{F}^{k}$ to $\mathbb{F}^\ell$, mapping $\mathbf x \in \mathbb F^{k}$ to $\mathbf a^T\mathbf x$. 

From \cite{blum1993self}, if a function $f$ satisfies $\Pr_{\mathbf{a},\mathbf{b}}[f(\mathbf{a})+f(\mathbf{b})=f(\mathbf{a}+\mathbf{b})]\ge 1-\frac{\delta}{2}$ for some $\delta$, then there exists some $\mathbf{a}_0\in \mathbb F^{k\times \ell}$ such that $f$ is at most $\delta$-far from $\mathbf{a}_0^T\mathbf{x}$, showing that the Hadamard code is parallel locally testable.

We show that one can locally decode a single input symbol or the sum of two input symbols in parallel from a Hadamard codeword. If $f$ satisfies $\Pr_\mathbf{x}[f(\mathbf{x})=\mathbf a^T\mathbf x]\ge 1-\delta$ for some $\mathbf a \in \mathbb F^{k \times \ell}$, which is also an element in $\left(\mathbb F^{\ell}\right)^k$, then for any $i\ne j\in [k]$, we have 
\begin{align*}
\Pr_{\mathbf{x}}[\mathbf{a}[i]=f(\mathbf{x}+\mathbf{e}_i)-f(\mathbf{x})]&\ge 1-2\delta,\\
\Pr_{\mathbf{x}}[\mathbf{a}[i] + \mathbf{a}[j]=f(\mathbf{x}+\mathbf{e}_i+\mathbf{e}_j)-f(\mathbf{x})]&\ge 1-2\delta,
\end{align*}
where we use $\mathbf a[i]\in \mathbb F^{\ell}$ to denote the $i$-th symbol of the message, and $\mathbf{e}_i$ is the $i$-th unit vector. Thus, the Hadamard code is parallel locally decodable. For details, please refer to Section~\ref{sec:hard-pltdc}.

\subsection{The Modified FGLSS Redcution}\label{sec:fglss-intro}
In this part we construct the reduction from vector 2CSP to constant-gap \kclique{} via PLTDCs. Our reduction is a variant of the classic FGLSS reduction~\cite{feige1996interactive}, and generalizes the previous reductions~\cite{Lin21, DBLP:conf/icalp/LinRSW22} from the Hadamard code into any PLTDC.
\begin{theorem}[Theorem~\ref{thm:vcsp2gapclique}]\label{thm:reduction-intro}
    Given a PLTDC $C: \Sigma_1^{k}\to \Sigma_2^{k'}$ defined as above, and a vector 2CSP instance with $k$ variables, alphabet $\Sigma_1$ and dimension $t$,
    we can construct a graph $G$ within time $(k')^{O(1)}\cdot |\Sigma_2|^{O(tq)}$ such that:
    \begin{itemize}
        \item $G$ consists of $R_T=(k')^{O(1)}$ parts of independent sets.
        \item (Completeness) If the vector 2CSP instance is satisfiable, then $G$ has a clique of size $R_T$.
        \item (Soundness) If $\delta<\frac1{12}$, and the instance is unsatisfiable, then $G$ has no clique of size $\varepsilon_TR_T$.
    \end{itemize}
\end{theorem}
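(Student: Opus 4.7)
My plan is a modified FGLSS-style reduction in which the tester's randomness indexes the parts of $G$, while the decoder's queries appear only inside the edge predicate. For each $r\in[R_T]$ let $Q_r\subseteq[k']$ denote the $q$ positions queried by the parallel tester under randomness $r$. Put into the part $V_r$ all accepting local views---tuples $(r;\vec\sigma)$ with $\vec\sigma\in(\Sigma_2^t)^q$ such that the tester accepts when receiving $\vec\sigma$ at the positions $Q_r$. This yields $|V_r|\le|\Sigma_2|^{tq}$ and $R_T=(k')^{O(1)}$ parts, matching the stated running time of $(k')^{O(1)}\cdot|\Sigma_2|^{O(tq)}$. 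Between $u\in V_r$ and $v\in V_{r'}$ I place an edge iff (i) $u$ and $v$ agree on every position of $Q_r\cap Q_{r'}$ (\emph{consistency}), and (ii) for every decoder randomness $r''\in[R_D]$ whose two queried positions both lie in $Q_r\cup Q_{r'}$ and every relevant CSP constraint, the value the 2-query decoder produces from the combined view of $u$ and $v$ lies in the corresponding set $S_i$ or $S_{ij}$ (\emph{local-decoding satisfaction}). Each edge check runs in time $\mathrm{poly}(R_D,k)=(k')^{O(1)}$.

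\emph{Completeness} is immediate. Given a satisfying assignment $(\mathbf{a}_1,\dots,\mathbf{a}_k)$, set $w^*=C^{\odot t}(\mathbf{a}_1,\dots,\mathbf{a}_k)$ and, in each $V_r$, pick the vertex $v_r$ whose view is $w^*$ restricted to $Q_r$. Perfect completeness of the tester makes each $v_r$ accepting; perfect completeness of the decoder makes every decoded value from $w^*$ equal the true $\mathbf{a}_i$ (resp.\ $\mathbf{a}_i+\mathbf{a}_j$), which by hypothesis lies in $S_i$ (resp.\ $S_{ij}$); and any two $v_r,v_{r'}$ trivially agree on shared positions. Hence $\{v_r\}_{r\in[R_T]}$ is a clique of size $R_T$.

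\emph{Soundness} is where the real work lies. Suppose, toward a contradiction, that the vector 2CSP is unsatisfiable yet $G$ contains a clique $K$ with $|K|\ge\varepsilon_T R_T$. Using (i), glue the views in $K$ into a partial word $w$ on $P:=\bigcup\{Q_r:K\cap V_r\neq\emptyset\}$, and extend arbitrarily to $\widehat w\in(\Sigma_2^t)^{k'}$. Since the tester accepts $\widehat w$ on every $r$ with $K\cap V_r\neq\emptyset$, its acceptance probability is at least $\varepsilon_T$, so local testability yields a codeword $w^*=C^{\odot t}(\mathbf{a}^*_1,\dots,\mathbf{a}^*_k)$ with $\Delta(\widehat w,w^*)\le\delta$. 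I would then argue that $(\mathbf{a}^*_i)_i$ is a satisfying assignment, which is the desired contradiction. Fix a unary constraint $x_i\in S_i$ and sample the decoder randomness $r''$ uniformly: by smoothness each of the two queries is uniform on $[k']$, so a union bound (using $\delta<1/12$) puts both queries inside the agreement set $\{\widehat w=w^*\}$ with probability at least $1-2\delta>5/6$. A counting argument leveraging $|K|\ge\varepsilon_T R_T$ further shows that with positive probability both queries also fall inside $P$. On that joint event, clause (ii) fires at some edge of $K$ covering the two decoder positions; the decoder sees $w^*$-symbols at both, so its output equals $\mathbf{a}^*_i$, and (ii) forces $\mathbf{a}^*_i\in S_i$. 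The binary-constraint case is identical.

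The hardest point, I expect, will be the final parameter bookkeeping inside soundness: combining (a) the fraction $\varepsilon_T$ of tester randomnesses surviving in $K$, (b) the codeword-agreement density $1-\delta$, and (c) the decoder smoothness $2/k'$, so that a \emph{single} decoder randomness simultaneously realises ``both queries in $P$'' and ``both queries in the agreement set of $\widehat w$ and $w^*$'' with strictly positive probability. The threshold $\delta<1/12$ should drop out of a two-step union bound balancing these losses; the subtle part is ensuring $|P|/k'$ is bounded below even though $P$ need not cover all of $[k']$, and verifying that neither term collapses to zero once we insist on the edge-predicate (ii) being triggered by an actual pair of vertices of $K$.
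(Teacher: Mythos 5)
Your construction and completeness argument match the paper's exactly, and your soundness outline follows the same skeleton: glue the clique into a word $\widehat w$, invoke testability to land near a codeword $w^*$, then use smooth decoding to pull a satisfying assignment out of $w^*$. The genuine gap is precisely the one you flagged yourself as ``the hardest point'': you never establish that $|P|/k'$ is bounded below, and without that the final union bound cannot close. The paper supplies this as a separate lemma (Lemma~\ref{lem:ltc_num_var}): for any set $R\subseteq[R_T]$ of tester randomnesses with $|R|\ge\varepsilon_T R_T$, the union of query sets satisfies $\bigl|\bigcup_{r\in R}\text{var}(r)\bigr|\ge(1-2\delta)k'$. The proof is by contradiction and exploits the tension between local testability and local decodability rather than a counting argument over $K$: if $|P|<(1-2\delta)k'$, fix a codeword $\mathbf c$ and fabricate two strings $\mathbf s_1,\mathbf s_2$ that agree with $\mathbf c$ on $P$ but differ on a bit more than a $2\delta$ fraction of the complement. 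Each $\mathbf s_i$ passes the test with probability $\ge\varepsilon_T$ (since $K$'s tests only read $P$), hence is $\delta$-close to some codeword $\mathbf c_i$; the two codewords are distinct (since $\Delta(\mathbf s_1,\mathbf s_2)>2\delta$) yet within $5\delta$ of each other by the triangle inequality, which contradicts perfect completeness plus local decodability of the $\chi_i$ decoder at radius $5\delta$. This is also why the theorem quietly assumes the decodability radius is $5\delta$ rather than $\delta$, a parameter coupling you would need to supply.

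Once you have $|P|\ge(1-2\delta)k'$, the rest of your bookkeeping goes through cleanly: take $I'=P\cap\{\widehat w=w^*\}$, so $|I'|\ge(1-3\delta)k'$; smoothness makes each of the two decoder queries miss $I'$ with probability at most $O(\delta)$, so for $\delta<1/12$ some decoder randomness lands both queries inside $I'$. The remaining subtlety you mention---that the edge predicate must actually fire on a pair of vertices of $K$---is handled by observing $I'\subseteq P=\bigcup_{v\in K}\text{var}(v)$ together with $|K|\ge\varepsilon_T R_T\ge 2$, so the two queried positions are covered by (at most) two clique vertices joined by an edge, and the decoding result at those positions equals the honest decoding of $w^*$ by perfect completeness. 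You do \emph{not} need a separate counting argument over $|K|$; the disperser-like lemma does all the work.
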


Our reduction needs a disperser-like property of the PLTDC, for which we have a lemma as follows.
\begin{lemma}[Lemma~\ref{lem:ltc_num_var}]\label{lem:ltc-num-var-intro}
For every subset of randomness $R\subseteq [R_T]$ with size $|R|\ge \varepsilon_TR_T$, the number of indices queried by the testing algorithm with at least one randomness in $R$ is at least $(1-2\delta)\cdot k'$.
\end{lemma}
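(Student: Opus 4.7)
The plan is to argue by contradiction via the local testability property of $C^{\odot \ell}$. Suppose toward contradiction that $|Q_R| < (1-2\delta) k'$, where $Q_R := \bigcup_{r \in R} Q_r$ is the set of positions queried by at least one randomness in $R$; set $S := [k'] \setminus Q_R$ so that $|S| > 2\delta k'$. The strategy is to exhibit a word $w \in (\Sigma_2^{\ell})^{k'}$ that is simultaneously (i) accepted by the tester with probability at least $\varepsilon_T$, and (ii) more than $\delta$-far from every codeword of $C^{\odot \ell}$; such a $w$ immediately contradicts local testability and hence forces $|Q_R| \ge (1-2\delta) k'$.

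For property (i), I would fix any codeword $c \in \mathrm{Im}(C^{\odot \ell})$ and define $w$ by $w|_{Q_R} := c|_{Q_R}$, leaving $w|_S$ to be chosen later. Then for every randomness $r \in R$ the tester queries only positions in $Q_r \subseteq Q_R$, on which $w$ agrees with the codeword $c$; by perfect completeness the tester accepts $c$ under randomness $r$, and hence also accepts $w$. Consequently $w$ is accepted with probability at least $|R|/R_T \ge \varepsilon_T$, no matter how $w$ is set on $S$.

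The heart of the argument is property (ii): the values of $w$ on $S$ must be chosen so that $w$ is more than $\delta$-far from every codeword, not merely from $c$. I would exploit the linearity of the PLTDCs in question (both the Hadamard example in Section 2.3 and the derivative code of Section 2.5 are linear), reducing the task to finding a single word $v$ supported on $S$ with $\Delta(v, \mathrm{Im}(C^{\odot \ell})) > \delta$; then $w := c + v$ is the desired witness. A volume argument on Hamming balls should do the job: the count of candidate $v$'s supported on $S$ grows like $|\Sigma_2|^{\ell |S|}$, whereas the $\delta$-neighborhood of the image has size bounded by $|\mathrm{Im}(C^{\odot \ell})|$ times a Hamming-ball volume of radius $\delta k'$, i.e.\ roughly $|\Sigma_1|^{\ell k} \binom{k'}{\lfloor \delta k' \rfloor} |\Sigma_2|^{\ell \delta k'}$. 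Since $|S| - \delta k' > \delta k'$ and the PLTDC parameters $(k, k', |\Sigma_1|, |\Sigma_2|)$ satisfy favorable rate relations, the two counts separate for sufficiently large $\ell$, guaranteeing the existence of $v$.

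I expect the main obstacle to be making the volume argument in step (ii) go through uniformly in $\ell$, particularly the handling of codewords whose restriction to $Q_R$ almost matches $c|_{Q_R}$. Here the lower bound on the minimum distance of $C^{\odot \ell}$ implied by 2-query smooth local decoding (at least $2\delta_D \ge 2\delta$) should be the key quantitative input, bounding the number of such ``compatible'' codewords so that the ambiguity in the counting is controlled. Once a witness $w$ is produced, the contradiction with local testability is immediate, yielding the desired bound $|Q_R| \ge (1-2\delta) k'$.
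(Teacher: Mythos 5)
Your overall contradiction setup (if $|Q_R|<(1-2\delta)k'$, use perfect completeness on $Q_R$ to build a cheating word that passes the test) is the right starting point and matches the paper, but from there you diverge, and the divergence has genuine gaps.

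The paper does not try to produce a single word $w$ that is $\delta$-far from \emph{every} codeword, which is what your volume argument aims for. Instead, it builds \emph{two} words $\mathbf s_1,\mathbf s_2$ agreeing with a fixed codeword $\mathbf c$ on $\mathrm{var}_I=Q_R$ and chosen so that $2\delta<\Delta(\mathbf s_1,\mathbf s_2)<3\delta$. Both pass the test with probability $\ge\varepsilon_T$, so local testability gives codewords $\mathbf c_1,\mathbf c_2$ with $\Delta(\mathbf s_j,\mathbf c_j)\le\delta$; since $\Delta(\mathbf s_1,\mathbf s_2)>2\delta$ these codewords are distinct, and by the triangle inequality $\Delta(\mathbf c_1,\mathbf c_2)<5\delta$. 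Then the $(2,R_D,5\delta,\varepsilon_D)$-local decodability hypothesis is invoked directly: a decoder $\mathcal D_{\chi_i}$ would output $\mathbf m_1[i]$ with probability $1$ on $\mathbf c_1$ (perfect completeness), yet also $\mathbf m_2[i]$ with probability $\ge\varepsilon_D>0$ (since $\mathbf c_1$ is $5\delta$-close to $\mathbf c_2$), a contradiction. This needs only $\ell=1$, no linearity, no counting.

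Your route has three concrete problems. First, the volume argument invokes linearity of $C$, but the lemma is stated for arbitrary PLTDCs as defined in Definition~\ref{def:nice}, which are not assumed linear; your proof would only apply to a subclass. Second, for the counting to close you need an upper bound on $|\mathrm{Im}(C^{\odot\ell})|$ (a rate bound); you wave at ``favorable rate relations,'' but no such relation is part of the PLTDC definition. Any such bound must itself be derived from the decodability hypothesis (decodability $\Rightarrow$ large minimum distance $\Rightarrow$ bounded rate), and you never carry out that derivation; in effect you are smuggling in the very conclusion the paper proves directly. Third, you gesture at ``$2\delta_D\ge\delta$'' as the key quantitative input, but the actual hypothesis shipped into the lemma is $(2,R_D,5\delta,\varepsilon_D)$-decodability, and the constant $5$ is what makes the paper's chain of inequalities close; your proposal never engages with where that $5\delta$ is used. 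So while your high-level ``fool the tester'' intuition is sound, the execution would require substantially more machinery than the paper's two-string argument and would still rest on unstated assumptions.
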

\begin{proof}[Proof sketch]
    If for some $R\subseteq [R_T]$ with $|R|\ge \varepsilon_TR_T$, the number of queried variables is less than $(1-2\delta)\cdot k'$, then we can fool the testing algorithm and obtain two different codewords of $C^{\odot t}$ with small distance, which contradicts the local decodability.
\end{proof}

\begin{proof}[Proof of Theorem~\ref{thm:reduction-intro}]
Given a vector 2CSP instances with $ k$ variables, $x_1,\dots,x_{k}$,
we construct $G$ based on the $t$-parallel code $C^{\odot t}: \left(\Sigma_1^t\right)^k\to \left(\Sigma_2^t\right)^{k'}$ as follows.

\medskip\noindent\textbf{Vertices of $G$.} $G$ consists of $R_T$ parts of vertices. For each $r\in [R_T]$, the $r$-th part corresponds to the randomness $r$ in the locally testing algorithm. It is a subset of $\left(\Sigma_2^t\right)^q$ and consists of all accepting configurations of the $q$ symbols queried by the locally testing algorithm of $C^{\odot t}$ under $r$.%, which . 

\medskip\noindent\textbf{Edges of $G$.} 
Intuitively, we construct edges so that a large clique in $G$ corresponds to a word close to the codeword $C^{\odot t}(x_1,\dots,x_{k})\in \left(\Sigma_2^t\right)^{k'}$, where $(x_1,\dots,x_{k})\in \left(\Sigma_1^t\right)^k$ is a solution of the vector 2CSP instance.
For illustration, the edges are specified by removing edges from a complete graph as follows.
\begin{enumerate}[(1)]
    \item First, note that each vertex specifies the values of $q$ indices in the word. We remove all edges between inconsistent vertices, i.e., two vertices that specify different values to the same index.
    \item For every two consistent vertices $u,v$, they specify the values of $\le 2q$ indices $I\subseteq [k']$. Then, we remove the edge between $u$ and $v$ if, under some randomness $r\in [R_D]$, both of the two indices queried by algorithm decoding $x_i$ (or $x_i+x_j$) fall into $I$, but the decoding result violates the constraint $x_i\in S_i$ (or $x_i+x_j\in S_{ij}$).
\end{enumerate}
We first remove inconsistent edges in (1) to ensure any large clique of $G$ induces a word $w\in \left(\Sigma_2^t\right)^{k'}$ that passes a large fraction of local tests. By the local testability of $C^{\odot t}$, $w$ should be close to some codeword $C^{\odot t}(x_1,\dots,x_k)$. We then apply the 2-query local decodability  on $w$ to ensure every $x_i\in S_i$ and every $x_i+x_j\in S_{ij}$, so that $x_1,\dots,x_{k}$ is a solution.

\noindent\textbf{Completeness.} Suppose the vector 2CSP has a solution $x_1,\dots,x_{k}$. For each of the $r\in [R_T]$ parts, we can choose the unique vertex in this part that is consistent with $C^{\odot t}(x_1,\dots,x_{k})$. Since both the testability and decodability have perfect completeness, the clique size is exactly $R_T$.

\noindent\textbf{Soundness.} For soundness analysis, we follow our intuition above. If there is a clique $D$ with size $|D|\ge \varepsilon_TR_T$, we can construct the solution of the vector 2CSP instance as follows. 

First, by combining Lemma~\ref{lem:ltc-num-var-intro} and the local testability, there is a codeword $C^{\odot t}(x_1,\dots,x_{k})$ such that, for at least $1-3\delta$ fraction of indices $I\subseteq [k']$, and every index $i\in I$, the value of the $i$-th index is not only specified by some vertex in $D$ but also consistent with this codeword.

Second, by the smoothness of decoding and the union bound, the probability that some index queried by the decoding of $x_i$ (or $x_i+x_j$) falls out of $I$ is at most $1-12\delta$. Thus, if $\delta<\frac1{12}$, there must be some randomness $r\in [R_D]$, so that under randomness $r$, both of the two queried indices for decoding fall into $I$. By the second step in the edge-removing procedure, we must have that $x_i\in S_i$ (or $x_i+x_j\in S_{ij}$). Hence, $(x_1,\dots,x_{k})
$ is a solution.
\end{proof}

\begin{remark}\label{rem:restriction}
 The restrictions of PLTDCs mainly fall into the aspect of local decodability in two folds.
 \begin{enumerate}
     \item We require such codes to be able to decode not only each input symbol but also the sum of two input symbols. The restriction is not essential and is only to make our code compatible with the intermediate vector 2CSP problem, which sets up constraints $x_i\in S_i$ and $x_i+x_j\in S_{ij}$. If we choose another intermediate problem, the requirement for its decoding ability could be changed correspondingly.
     \item  We require two-query decoding. This requirement also appears in previous works~\cite{Lin21,DBLP:conf/icalp/LinRSW22,DBLP:conf/coco/SK22}. As stated in~\cite{DBLP:conf/icalp/LinRSW22}, this is because we can encode the two-query decoding into an edge in the graph. If the decoding algorithm queries $q$ positions, we will need $q$-ary hyperedges to encode the decoding procedure, making the output problem of the reduction as constant-gap $k$-{\sc HyperClique} instead of \kclique{}. Moreover, It is still open how to do gap-preserving reductions from $k$-{\sc HyperClique} into \kclique{}.
 \end{enumerate}
\end{remark}

\subsection{Constructing a More Efficient PLTDC}\label{sec:construct-intro}
Hadamard codes have exponential blow-up in the codeword length, making the lower bound no tighter than $f(k)n^{\omega(\log k)}$~\cite{Lin21,DBLP:conf/icalp/LinRSW22,DBLP:conf/coco/SK22}. A natural direction for improvement is to consider the Reed-Muller code, the polynomial extension of the Hadamard code. To apply our framework, we need to prove that the RM code is a PLTDC, which has the following challenges.
\begin{itemize}
    \item For testability, we need to test whether the given word is close to a degree $d$ polynomial instead of linear functions.
    \item For decodability, the folklore interpolation-based decoding algorithm of degree $d$ polynomials requires $d+1$ queries on the codeword. However, the PLTDC requires 2-query decoding.
\end{itemize}

For parallel testing of low-degree polynomials, the previous work~\cite{DBLP:conf/icalp/LinRSW22} extends the classic Rubinfield-Sudan test~\cite{rs-test} that works on constant degree polynomials into the parallel case. In this work, to handle polynomials with a super constant degree, we extend the classic line vs. point test~\cite{fstest} into the parallel case (Lemma~\ref{lem:line-vs-point}).

For parallel decoding, we borrow the idea of the derivative code~\cite{WY07} to reduce the query number in the decoding procedure. As a simplified illustration, below, we present the essential ideas of the decoding procedure and show how to use partial derivatives to decode $f(\mathbf{x}_0)$ where $f(x_1,\dots,x_m): \mathbb F^m\to \mathbb F$ is a $m$-variate degree-$3$ polynomial, and $\mathbf{x}_0$ is a point in $\mathbb F^m$. 

For each $\mathbf{x}\in \mathbb F^m$, suppose we can get access to not only $f(\mathbf{x})$ but also all partial derivatives $\frac{\partial f}{\partial x_1}(\mathbf{x}),\dots,\frac{\partial f}{\partial x_m}(\mathbf{x})$. Then, to decode $f(\mathbf{x}_0)$, we randomly select a direction $\mathbf{h}\in \mathbb F^m$, and consider the univariate degree-3 polynomial $g(\lambda):=f(\mathbf{x}_0+\lambda\mathbf{h})$. We can represent $g(\lambda)$ as $\sum_{i=0}^{3}c_i\lambda^i$. We then sample two different values $(\lambda_1,\lambda_2)\in \binom{\mathbb F\setminus\{0\}}{2}$ at random, and query the values and all partial derivatives at two coordinates $\mathbf{x}_0+\lambda_1\mathbf{h}$ and $\mathbf{x}_0+\lambda_2\mathbf{h}$. According to the chain rule, we can get $g(\lambda_1), g'(\lambda_1), g(\lambda_2)$ and $g'(\lambda_2)$, which leads to the following equation:
\[\begin{pmatrix}g(\lambda_1)\\ g'(\lambda_1)\\ g(\lambda_2)\\ g'(\lambda_2)\end{pmatrix} = \begin{pmatrix} 1 & \lambda_1& \lambda_1^2 & \lambda_1^3\\0 & 1& 2\lambda_1 & 3\lambda_1^2\\1 & \lambda_2& \lambda_2^2 & \lambda_2^3\\0 & 1& 2\lambda_2 & 3\lambda_2^2 \end{pmatrix} \begin{pmatrix}c_{0}\\ c_{1}\\ c_{2}\\ c_{3}\end{pmatrix}.\]
The $4\times 4$ matrix has determinant $(\lambda_2-\lambda_1)^4\neq 0$ and is thus invertible. As a result, we obtain all coefficients of $g$ and can compute $f(\mathbf{x}_0)$ by $g(0)$. We can verify that the decoding is smooth.

In this work, we generalize the idea above into decoding $f(\mathbf{x})$ where $f$ is a polynomial of arbitrary odd degree $2r+1$. The decoding is similar but requires the codeword to incorporate all higher-order partial derivatives of order $\le r$ for each point $\mathbf{x}\in \mathbb F^m$. Furthermore, we consider the parallel decoding and prove the decoding procedure above adapts to the parallel case.

There are still two gaps for PLTDC. (i) We need extra testing to check the partial derivatives. We use directional derivatives to test this. 
(ii) To further support decoding the sum of two input symbols, we make some slight manipulations.

We implement the idea above into two codes. In Section~\ref{sec:pldtc1}, we extend the derivative code into polynomials with a constant degree using the parallel version of the Rubinfield-Sudan test. In Section~\ref{sec:pldtc2}, we then extend the code into super constant degrees using the parallel version of the line vs. point test. Below, we present the code that works for polynomials with a super constant degree.

\begin{theorem}[Theorem~\ref{thm:pldtc2}]\label{thm:PLTDC-intro}
For every integer $m,r,k>0$ and finite field $\mathbb F$ with a prime number size, let $d=2r+1$, if these parameters satisfy $|\mathbb F|> \max\{3d,9216\}$ and $k\le \binom{m+d}{d}\le |\mathbb F|^m$, then there is a PLTDC $C: \mathbb F^{k}\to \left(\mathbb F^{(d+1)\times \binom{2m+r}{r}}\right)^{\mathbb F^{4m}}$ that can be constructed explicitly in time depending only on $|\mathbb F|, m, d, k$.
\end{theorem}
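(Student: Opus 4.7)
The plan is to realize the derivative-code idea of Section~\ref{sec:construct-intro} at degree $d=2r+1$ and then to parallelize it. First I identify the message $\mathbf a\in\mathbb F^{k}$ with an $m$-variate polynomial $P_{\mathbf a}$ of total degree at most $d$: since $k\le\binom{m+d}{d}$, I can fix $k$ evaluation points $\mathbf u_1,\ldots,\mathbf u_k\in\mathbb F^{m}$ on which the evaluation map from degree-$d$ polynomials onto $\mathbb F^{k}$ is surjective, so that the message coordinate $\mathbf a_i$ equals $P_{\mathbf a}(\mathbf u_i)$. The codeword is indexed by $\mathbf x\in\mathbb F^{4m}$ (packaging a point, a line direction, and extra decoder randomness so that each individual query looks uniform on $\mathbb F^{m}$), and each symbol stores, in parallel, the tensor of all mixed partial derivatives of $P_{\mathbf a}$ of order at most $r$ at the relevant point together with the coefficients of the univariate restriction of $P_{\mathbf a}$ to the relevant line; this matches the alphabet size $|\mathbb F|^{(d+1)\binom{2m+r}{r}}$, with $d+1$ coefficients and $\binom{2m+r}{r}$ mixed derivatives.

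For parallel local testability I would run two subtests and take their conjunction. The first is the parallel line-vs-point test from Lemma~\ref{lem:line-vs-point}: sample a random line together with a random point on it, and in every parallel layer compare the stored line polynomial evaluated at that point against the stored point value; this certifies that the ``point'' layer is $\delta$-close to a tuple of degree-$d$ polynomials. The second is a derivative-consistency test that samples a random line and uses directional derivatives along it to check that the stored order-$\le r$ jet at the basepoint coincides with the jet of the recovered polynomial. Their conjunction delivers the constants $\varepsilon_T,\delta$ required by Definition~\ref{def:nice}; the hypothesis $|\mathbb F|>\max\{3d,9216\}$ is precisely the quantitative regime in which the scalar low-degree analysis goes through so that $\varepsilon_T$ and $\delta$ are absolute.

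For parallel two-query smooth local decodability I would implement the Vandermonde scheme from Section~\ref{sec:construct-intro} at the general odd degree $d=2r+1$. To decode $\mathbf a_i=P_{\mathbf a}(\mathbf u_i)$, sample $\mathbf h\in\mathbb F^{m}$ uniformly and two distinct nonzero scalars $\lambda_1,\lambda_2\in\mathbb F^{\times}$, query the two derivative bundles at $\mathbf u_i+\lambda_1\mathbf h$ and $\mathbf u_i+\lambda_2\mathbf h$, apply the chain rule to read off, in every parallel layer, the values and first $r$ derivatives of the univariate restriction $g(\lambda):=P_{\mathbf a}(\mathbf u_i+\lambda\mathbf h)$ at $\lambda_1,\lambda_2$, and invert the $(d+1)\times(d+1)$ confluent Vandermonde system whose determinant is a nonzero power of $\lambda_1-\lambda_2$ (the assumption $|\mathbb F|>3d$ guarantees enough valid $\lambda_1,\lambda_2$); then $\mathbf a_i=g(0)$. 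Smoothness is immediate because each of the two queries is uniform on $\mathbb F^{m}\setminus\{\mathbf u_i\}$. To decode a sum $\mathbf a_i+\mathbf a_j$, I would parametrize a line so that both $\mathbf u_i$ and $\mathbf u_j$ sit at known scalar positions on it, recover all coefficients of $g$ by the same Vandermonde inversion, and return $g(\lambda_{\mathbf u_i})+g(\lambda_{\mathbf u_j})$.

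The main obstacle I expect is keeping the soundness parameters of the line-vs-point test \emph{independent of the parallelism} $\ell$. A naive layer-by-layer union bound would give thresholds $\varepsilon_T(\ell),\delta(\ell)$ that degrade with $\ell$, which would violate Definition~\ref{def:nice}; Lemma~\ref{lem:line-vs-point} is meant to sidestep this through a simultaneous self-correction argument that outputs a single tuple of close low-degree polynomials, and much of the technical work should go into verifying that the low-degree test, the derivative-consistency test, and the sum-decoding reduction compose cleanly into a single constant-query PLTDC with $\ell$-independent constants. The secondary technical work is the confluent Vandermonde calculation and checking that the chain rule extracts all needed directional derivatives of $g$ from the stored partial-derivative tensor while staying within order $r$.
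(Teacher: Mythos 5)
Your high-level template is right (identify the message with a degree-$d$ polynomial, store derivative data so that two derivative bundles plus a confluent Vandermonde inversion reconstruct the restriction along a line, and use the line-vs-point test for parallel-robust low-degree testing), but you are missing the central structural device of the paper's construction, and the omission breaks your decoder for $a_i+a_j$.

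The paper does \emph{not} encode $P_{\mathbf a}$ directly. It first forms $q(\mathbf x,\mathbf y):=p(\mathbf x)+p(\mathbf y)$ on the doubled domain $\mathbb F^{2m}$, indexes the codeword by lines $\ell\in\mathcal L^{2m}$ (hence the $\binom{2m+r}{r}$ partials and the $|\mathbb F|^{4m}$ bound on $|\mathcal L^{2m}|$), and stores on each line the degree-$d$ restriction of every order-$\le r$ partial derivative of $q$. The whole point of this doubling is that $a_i=q(\mathbf u_i,\mathbf 0)$ and $a_i+a_j=q(\mathbf u_i,\mathbf u_j)$ are \emph{both} single point-evaluations of $q$, so the same 2-query local-correction routine (random $\mathbf h\in\mathbb F^{2m}$, random $\lambda_1,\lambda_2$, query the lines through $\mathbf z+\lambda_1\mathbf h$ and $\mathbf z+\lambda_2\mathbf h$) decodes both $\chi_i$ and $\psi_{i,j}$, with each query a uniformly random line. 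You instead encode $P_{\mathbf a}$ over $\mathbb F^m$ and propose to decode $a_i+a_j$ by fixing the \emph{unique} line through $\mathbf u_i$ and $\mathbf u_j$, sampling two scalars on it, and evaluating $g$ at the two known parameters. Those two queries are then constrained to a single fixed 1-dimensional subset of the index set rather than being uniformly distributed, so the smoothness condition in Definition~\ref{def:parallel-decode} (each position queried with probability $q/n$) fails outright; without smoothness the soundness argument in Theorem~\ref{thm:vcsp2gapclique} collapses. The arithmetic in your sketch already signals the mismatch: you claim to store $\binom{2m+r}{r}$ mixed derivatives, but for an $m$-variate $P_{\mathbf a}$ there are only $\binom{m+r}{r}$ of order $\le r$; the $\binom{2m+r}{r}$ count is exactly what you get for the $2m$-variate $q$, which your construction never introduces.

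A second, smaller gap: because of the doubling, the paper's tester needs a third subtest beyond line-vs-point and derivative-consistency — it must also certify that the recovered degree-$d$ polynomial $\widetilde f(\mathbf x,\mathbf y)$ actually has the product form $p(\mathbf x)+p(\mathbf y)$ (by checking $\widetilde f(\mathbf x,\mathbf y)=\widetilde f(\mathbf x,\mathbf 0)+\widetilde f(\mathbf 0,\mathbf y)$ and $\widetilde f(\mathbf x,\mathbf 0)=\widetilde f(\mathbf 0,\mathbf x)$ via local correction). This test is absent from your two-subtest plan; once you adopt the doubling it becomes necessary, since an arbitrary $\widetilde f$ passing your two tests need not arise from any message.
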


\subsection{Improved Hardness of Constant-Gap \kclique{}}\label{sec:hardness-intro}

In the end, we prove Theorems~\ref{thm:intro-A} and \ref{thm:intro-B}. Note that Theorem~\ref{thm:intro-B} follows from Theorem~\ref{thm:intro-A} with a standard disperser argument to amplify the gap, which is the same as~\cite{CCK+17,DBLP:conf/icalp/LinRSW22}. Thus, we only present the proof of Theorem~\ref{thm:intro-A} below.
\begin{proof}[Proof of Theorem~\ref{thm:intro-A}]
The proof plugs the code~(Theorem~\ref{thm:PLTDC-intro}) into our reduction framework~(Theorem~\ref{thm:reduction-intro}) and obtains a stronger lower bound of constant-gap \kclique{} under ETH. 

In detail, we first fix an arbitrary sufficiently large $d$, then set $m=2^{100}\cdot d$ and $k = \binom{m+d}{d}$. We choose $\mathbb F$ to be the smallest field with prime size $> 3d$. By Bertrand's postulate, we have $3d+1\le |\mathbb F|\le 6d$. Since $(\frac{m}{d})^d\le \binom{m+d}{d}\le \left(\frac{e(m+d)}{d}\right)^d$, we have that $d=\Theta(\log k)$ and $|\mathbb F|^m\ge \binom{m+d}{d}$. Hence, we invoke Theorem~\ref{thm:PLTDC-intro} and construct the code $C$ who has block length $k'= |\mathbb F|^{4m}$ and alphabet size $|\Sigma_2|=|\mathbb F|^{(d+1)\binom{2m+r}{r}}$. Note that we have $k'= k^{\Theta(\log \log k)}$ and $|\Sigma_2| = |\mathbb F|^{O(k^{\mu})}$ for $\mu= 0.54$. 

By Lemma~\ref{lem:vcsp-intro}, under ETH, vector 2CSP problem with $k$ variables and dimension $t=\Theta\left(\frac{\log n}{\log |\mathbb F|}\right)$ over the field $\mathbb F$ has lower bound $f(k)\cdot n^{\Omega(k)}$. We then invoke Theorem~\ref{thm:reduction-intro} to obtain a constant-gap {$k'$-\textsc{Clique}} instance with size $n^{O(k^{\mu})}$ and $k'=k^{\Theta(\log \log k)}$ parts. As a result, assuming ETH, we establish the lower bound $f(k)n^{k^{\Omega(1/\log \log k)}}$ for constant-gap \kclique{}.
\end{proof}
\section{Preliminaries}
In this section, we introduce the problems and hypotheses. Our central focus is the approximation of parameterized \kclique{} problem (Definition \ref{def:kclique}) and parameterized 2CSP problem (Definition \ref{def:2csp}). Below, we first provide some background on parameterized complexity theory, then define the two problems. At the end, we introduce the \emph{exponential time hypothesis} (ETH, Hypothesis \ref{assumption:eth}), which our main results are based on.

In parameterized complexity theory, we consider a language $L\subseteq\{0,1\}^*$ equipped with a computable function $\kappa : \{0,1\}^*\to\mathbb{N}$, which returns a parameter $\kappa(x)\in\mathbb{N}$ for every input instance $x\in \{0,1\}^*$. 
A parameterized problem $(L,\kappa)$ is \emph{fixed parameter tractable} (FPT) if it has an algorithm,  which for every input $x\in\{0,1\}^*$, decides if $x\in L$ in $f(\kappa(x))\cdot |x|^{O(1)}$ time for some computable function $f$. An FPT reduction from problem $(L,\kappa)$ to $(L',\kappa')$ is an algorithm $A$ which on every input $x$, outputs an instance $x'$ in $f(\kappa(x))\cdot |x|^{O(1)}$ time for some computable function $f$, such that
\begin{itemize}
    \item there exists a universal computable function $g$ such that $\kappa'(x') = g(\kappa(x))$;
    \item $x'\in L'$ if and only if $x\in L$.
\end{itemize}

There are two important complexity classes in the parameterized regime, namely $\mathsf{FPT}$ and $\mathsf{W[1]}$. The class $\mathsf{FPT}$ consists of problems that admit FPT algorithms, and $\mathsf{W[1]}$ consists of problems that admit non-deterministic FPT algorithms.
The classes $\mathsf{W[1]}$ and $\mathsf{FPT}$ are analogies of $\mathsf{P}$ and $\mathsf{NP}$ in classic complexity theory. Similarly, a problem $(L,\kappa)$ is $\mathsf{W[1]}$-hard if every problem in $\mathsf{W[1]}$ can be FPT-reduced to it. A problem is $\mathsf{W[1]}$-complete if it is in $\mathsf{W[1]}$ and also $\mathsf{W[1]}$-hard.

Now we define the \kclique{} problem and the 2CSP problem. Throughout our paper, we only consider their parameterized versions.

\begin{definition}[\kclique{}]\label{def:kclique}
An instance of parameterized \kclique{} problem is a graph $G = (V,E)$, where the vertex set is partitioned into $k$ disjoint parts, each of which is an independent set. In other words, $V = V_1\dot\cup V_2\dot\cup \ldots \dot\cup V_k$ and $E$ only contains edges that cross two different parts. Without loss of generality, we assume each $V_i$ has an equal size $n$. The parameter is the number of disjoint parts $k$. The exact version of the \kclique{} problem is to decide whether $G$ has a clique of size $k$ or not. Given a function $\gamma:\mathbb N \to (0,1)$, the $\gamma(k)$-gap \kclique{} problem is to distinguish between the two cases that $G$ has a clique of size $k$, or all cliques in $G$ have size no more than $k/\gamma(k)$.
\end{definition}

\begin{definition}[2CSP]\label{def:2csp}
An instance of parameterized arity-2 constraint satisfaction problem (2CSP) is a tuple $\Pi=(X,\Sigma,\Phi)$, where:
\begin{itemize}
\item $X=\{x_1,\ldots,x_k\}$ is the set of variables;
\item $\Sigma$ is the alphabet.
\item $\Phi=\{\phi_1,\ldots,\phi_m\}$ is the set of constraints. Each constraint $\phi_i$ is a function $\Sigma^{\le 2}\to\{0,1\}$ on either one variable $x_{t_i}$ or two variables $x_{t_{i,1}},x_{t_{i,2}}$, where $t_i,t_{i,1},t_{i,2} \in [k]$. A constraint $\phi_i$ is satisfied if $\phi_i(x_{t_i})=1$ (for unary constraints), or $\phi_i(x_{t_{i,1}},x_{t_{i,2}})=1$ (for binary constraints).
\end{itemize}
A 2CSP instance $\Pi$ is satisfiable if there exists an assignment $\sigma: X \to \Sigma$ such that all constraints in $\Phi$ are satisfied.
The goal of the 2CSP problem is to decide whether $\Gamma$ is satisfiable or not. We use $n$ to denote the alphabet size, and the parameter $k$ is the number of variables.
\end{definition}

To introduce the exponential time hypothesis (ETH), we first define the 3SAT problem.

\begin{definition}[3SAT]\label{def:3sat}
A 3CNF formula $\varphi$ is a conjunction of $m$ disjunctive formulas over $n$ variables, where each disjunctive formula, termed as a clause, is on three variables or their negations. In other words, $\varphi$ is of the form $(x_{1,1}\lor x_{1,2}\lor x_{1,3})\land (x_{2,1}\lor x_{2,2}\lor x_{2,3})\land \ldots \land (x_{m,1}\lor x_{m,2}\lor x_{m,3})$, where each $x_{i,j} (i\in [m], j\in [3])$ is either one of the $n$ variables or its negation. The goal of the 3SAT problem is to decide whether $\varphi$ is satisfiable or not. 
\end{definition}

\begin{hypothesis}[ETH\cite{IP01}]
\label{assumption:eth}
There is no algorithm which can solve 3SAT in $2^{o(n)}$ time.
\end{hypothesis}

In the end, we present the hardness of \kclique{}.
\kclique{} is the canonical $\mathsf{W[1]}$-complete problem. Moreover, 
by applying the standard reduction~\cite{ChenHKX06} from 3SAT to \kclique{}, we have that: 
\begin{lemma}[\cite{ChenHKX06}]\label{thm:exact-clique-eth}
Assuming ETH, there is no algorithm which can solve \kclique{} in $f(k)n^{o(k)}$ time, for any computable function $f$.
\end{lemma}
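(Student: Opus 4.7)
\textbf{Proof proposal for Lemma~\ref{thm:exact-clique-eth}.}

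The plan is the classical clause-partitioning reduction from 3SAT to \kclique{}. The target is to turn an $n$-variable, $m$-clause instance of 3SAT into a \kclique{} instance on $N = k \cdot 2^{O(n/k)}$ vertices; then a hypothetical $f(k)\,n^{o(k)}$ algorithm for \kclique{} pulled back through the reduction would contradict Hypothesis~\ref{assumption:eth}.

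First, I would invoke the Sparsification Lemma of Impagliazzo–Paturi–Zane to reduce to the case $m = O(n)$. Formally, it gives a $2^{\varepsilon n}$-sized collection of 3SAT instances on $n$ variables and $O_\varepsilon(n)$ clauses whose disjunction is equisatisfiable with the original. Solving each of these in $2^{o(n)}$ time is still $2^{o(n)}$ overall for any fixed $\varepsilon$, so we may from now on assume $m \le c n$ for a fixed constant $c$.

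Next, I would execute the reduction. Partition the $m$ clauses arbitrarily into $k$ groups $G_1,\dots,G_k$ of size at most $\lceil m/k \rceil$; let $V_i$ denote the set of variables appearing in $G_i$, so $|V_i|\le 3m/k$. For each $i$ and each assignment $\alpha\colon V_i\to\{0,1\}$ that satisfies every clause in $G_i$, create a vertex $(i,\alpha)$ in the $i$-th part. This makes the resulting graph a proper \kclique{} instance in the sense of Definition~\ref{def:kclique}. Place an edge between $(i,\alpha)$ and $(j,\beta)$ whenever $i\ne j$ and $\alpha$, $\beta$ agree on $V_i\cap V_j$. Each part has at most $2^{3m/k}$ vertices, so $N = k \cdot 2^{O(n/k)}$; the reduction runs in time $\operatorname{poly}(N)$. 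A short verification shows that $\varphi$ is satisfiable iff the graph has a $k$-clique: any satisfying assignment induces a clique via its restrictions to the $V_i$'s, and any $k$-clique produces a family of pairwise consistent partial assignments that glue into a global satisfying assignment.

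Finally, I would derive the contradiction. Suppose \kclique{} admits an algorithm running in time $f(k)\,N^{g(k)}$ for some computable $f$ and some function $g$ with $g(k)/k \to 0$. Choosing $k=k(n)$ tending to infinity slowly enough, the running time on the reduced instance is
\[
f(k)\,\bigl(k \cdot 2^{O(n/k)}\bigr)^{g(k)}
\;=\; f(k)\, k^{g(k)} \cdot 2^{O(n \cdot g(k)/k)}
\;=\; 2^{o(n)},
\]
so combined with the cost of the reduction and the sparsification step we obtain a $2^{o(n)}$-time algorithm for 3SAT, contradicting ETH.

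The main obstacle is not the gadget (which is routine) but the $o(k)$ quantifier bookkeeping: one has to perform a diagonal choice of $k(n)$ so that both $f(k)k^{g(k)}$ and $2^{O(n g(k)/k)}$ stay sub-exponential in $n$, and to formally handle the composition with sparsification so that the constants do not blow up. Everything else is standard.
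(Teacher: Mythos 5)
The paper does not give its own proof of this lemma; it simply cites Chen--Huang--Kanj--Xia~\cite{ChenHKX06} with the remark that the result follows from ``the standard reduction from 3SAT to \kclique{}.'' Your proposal is a correct reconstruction of exactly that standard argument: sparsify to linear clause density, partition the clauses into $k$ groups, let part $i$ consist of the locally satisfying assignments to the variables appearing in $G_i$, join consistent pairs, and observe that a $k$-clique glues to a global satisfying assignment via the pairwise consistency of the partial assignments. The step that carries the real content is the one you flag at the end: one must choose $k=k(n')$ ($n'$ the number of 3SAT variables) \emph{computably} and tending to infinity slowly enough that $f(k(n'))$ stays $2^{o(n')}$ --- possible precisely because $f$ is a fixed computable function (take, e.g., the largest $k$ with $f(k)\le n'$) --- while simultaneously $g(k(n'))/k(n')\to 0$ kills the exponent. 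Two minor points for a polished write-up: in Definition~\ref{def:kclique} the running-time parameter $n$ is the per-part size, so the cleaner bound is $f(k)\bigl(2^{O(n'/k)}\bigr)^{g(k)}$ without the extraneous $k^{g(k)}$ factor (harmless, but unnecessary); and the sparsification sentence should fix a target $\delta>0$ first, set $\varepsilon=\delta/3$ so the clause density constant $c_\varepsilon$ is fixed, and then choose $k$ with $c_\varepsilon g(k)/k<\delta/3$, rather than speaking of ``$2^{o(n)}$ overall'' on a $2^{\varepsilon n}$-sized family. You explicitly note both of these as bookkeeping issues, so there is no gap.
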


\section{Vector 2CSP}
\label{sec:bvcsp}
In this section, we introduce vector 2CSP, an intermediate problem used in proving hardness of gap \kclique{}. We first define this special 2CSP (Definition~\ref{def:bvcsp}), then show it is $\wone$-hard (Corollary~\ref{cor:w1}) and has $|\Sigma|^{\Omega(k)}$ lower bound under ETH (Corollary~\ref{cor:eth}), where $\Sigma$ and $k$ stand for the alphabet and the number of variables, respectively.

\begin{definition}[Vector 2CSP]
\label{def:bvcsp}
    A vector 2CSP instance $(X,\Sigma,\Phi)$ is a special 2CSP instance, where:
\begin{itemize}
    \item The variables are $X=\{x_1,\ldots,x_k\}$.
    \item $\Sigma = \mathbb F^d$. The alphabet consists of all $d$-dimensional vectors over a field $\mathbb F$.
    \item $\Phi=\{\phi_i\}_{i=1}^k \bigcup \{\phi_{i,j}\}_{1\le i<j\le k}$ consists of $k+\binom{k}{2}$ constraints, where:
    \begin{itemize}
        \item For each $i\in [k]$, the constraint $\phi_i$ concerns a single variable $x_i$ and is of the form $x_i\in S_i$, where $S_i$ is a prescribed subset of $\mathbb F^d$. 
        \item For each $1\le i<j\le k$, the constraint $\phi_{i,j}$ concerns two variables $x_i,x_j$ and is of the form $x_i+x_j\in S_{i,j}$, where $S_{i,j}$ is also a prescribed subset of $\mathbb F^d$.
    \end{itemize}
\end{itemize}
\end{definition}

Below, we present a linear reduction from \kclique{} to the satisfiability of vector 2CSPs, thus prove the $\mathsf{W}[1]$-hardness and tight ETH-based lower bound for this problem. 
\begin{theorem}\label{thm:clique2vcsp}
    There is an FPT algorithm which, on input a \kclique{} instance $G=(V=V_1\dot\cup V_2\dot\cup \ldots \dot\cup V_k,E)$ with $n$ vertices, outputs a vector 2CSP instance $\Pi=(X,\Sigma,C)$ such that:
    \begin{itemize}
        \item $|X|=k$, $\Sigma = \mathbb F^{d}$ where $\mathbb F$ is an arbitrary finite field, and $d=\frac{5\cdot \log n}{\log |\mathbb F|}$.
        \item(Completeness) If $G$ has a clique of size $k$, then $\Pi$ is satisfiable. 
        \item(Soundness) If $G$ does not have any clique of size $k$, then $\Pi$ is not satisfiable. 
    \end{itemize}
\end{theorem}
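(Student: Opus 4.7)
The plan is to implement, in full rigor, the hashing idea already sketched in the proof of Lemma~\ref{lem:vcsp-intro}. Identify each vertex $v \in V$ with a vector over $\mathbb{F}$ of length $m = \lceil \log|V|/\log|\mathbb F|\rceil$, construct an explicit hash function $H : V \to \mathbb F^{d}$ with $d = 5\log n/\log|\mathbb F|$, and use it to define the constraint sets. Specifically, I set
\[
S_i \;=\; \{H(v) : v \in V_i\} \quad\text{for each } i \in [k], \qquad
S_{i,j} \;=\; \{H(v)+H(w) : v\in V_i,\, w\in V_j,\, \{v,w\}\in E\} \quad\text{for each } i<j,
\]
and take the instance $\Pi=(X,\Sigma,\Phi)$ with $X=\{x_1,\dots,x_k\}$, $\Sigma=\mathbb F^d$, the unary constraints $x_i\in S_i$ and the binary constraints $x_i+x_j\in S_{i,j}$. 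The bulk of the argument will consist of producing an $H$ that is injective on vertices and on unordered sums of vertex pairs, after which the completeness and soundness statements are immediate.

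For the hash function, I will take $H(v) = Mv$ for a matrix $M \in \mathbb F^{d\times m}$. For two fixed distinct vertices $v\neq v'$, $H(v)=H(v')$ iff $M(v-v')=0$, which under a uniformly random $M$ occurs with probability $|\mathbb F|^{-d} = n^{-5}$; similarly, for two fixed pairs $(v,w)\neq (v',w')$ with $v+w\neq v'+w'$, the bad event $H(v)+H(w)=H(v')+H(w')$ occurs with probability $n^{-5}$. Union-bounding over the $O(n^2)$ vertex pairs and the $O(n^4)$ pair-pairs, the expected number of collisions is $O(n^{-1})<1$, so a good $M$ exists. To make the construction FPT (indeed polynomial in $n$), I will apply the method of conditional probabilities: reveal the entries of $M$ one at a time, always choosing an entry that keeps the conditional expectation of the collision count below $1$. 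Since both bad events are linear equations over $\mathbb F$, the conditional expectations have closed form and can be computed in $\text{poly}(n,|\mathbb F|,d)$ time. After all entries are fixed, the number of collisions is a non-negative integer below $1$, hence zero, which gives an explicit $H$ with both injectivity properties.

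Completeness is immediate: if $\{v_1,\dots,v_k\}$ with $v_i\in V_i$ is a clique in $G$, set $x_i := H(v_i)$; then $x_i \in S_i$ by construction and $x_i + x_j \in S_{i,j}$ since $\{v_i,v_j\}\in E$. For soundness, suppose $(x_1,\dots,x_k)$ satisfies $\Pi$. Each unary constraint forces $x_i = H(v_i)$ for some $v_i\in V_i$, and vertex-injectivity of $H$ makes $v_i$ unique. Each binary constraint gives $H(v_i)+H(v_j) \in S_{i,j}$, so there is $(v'_i,v'_j)\in V_i\times V_j$ with $\{v'_i,v'_j\}\in E$ and $H(v'_i)+H(v'_j)=H(v_i)+H(v_j)$; by pair-sum injectivity, $\{v_i,v_j\}=\{v'_i,v'_j\}$, hence $\{v_i,v_j\}\in E$. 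Thus $\{v_1,\dots,v_k\}$ is a clique, proving the contrapositive of soundness.

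The only non-routine step is the derandomization of $H$; the rest reduces to a union bound and bookkeeping. A minor subtlety is choosing $|\mathbb F|$ and $m$ so that $\mathbb F^m$ can host all $n$ vertex names while keeping $d = 5\log n/\log|\mathbb F|$ an integer (or taking $\lceil\cdot\rceil$ and absorbing the loss in the constant $5$), but this is cosmetic and does not affect the reduction's FPT nature or the $\Sigma$-size bound.
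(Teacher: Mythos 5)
Your proposal follows the same high-level plan as the paper's proof of Theorem~\ref{thm:clique2vcsp}: construct a hash function $H: V \to \mathbb F^d$ that is injective on vertices and on sums of cross-part vertex pairs, define the constraint sets via $H$, and derandomize by conditional expectations. The completeness and soundness arguments are identical. However, there is a genuine gap in your hash construction.

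You take $H(v) = Mv$ with a \emph{single} random matrix $M$ applied to all vertices. For the pair-sum injectivity, you bound the collision probability only for pairs $(v,w) \neq (v',w')$ with $v + w \neq v' + w'$ \emph{as vectors in $\mathbb F^m$}. But if $(v,w) \neq (v',w')$ and $v + w = v' + w'$, then $H(v) + H(w) = M(v+w) = M(v'+w') = H(v') + H(w')$ deterministically, for \emph{every} choice of $M$. Such coincidences are unavoidable with $m = \lceil \log n / \log|\mathbb F|\rceil$: the range $\mathbb F^m$ has size barely larger than $n$, so the set of $n$ vertex names cannot be a Sidon set, and distinct cross-part pairs $(v,w) \in V_i \times V_j$, $(v',w') \in V_i \times V_j$ with $v - v' = w' - w$ will occur. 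Your union bound silently excludes these pair-pairs, so derandomizing to ``zero collisions'' only kills the $v+w \neq v'+w'$ collisions; the others survive. Consequently the soundness argument (``by pair-sum injectivity, $\{v_i,v_j\}=\{v'_i,v'_j\}$'') breaks: the membership $H(v_i)+H(v_j) \in S_{i,j}$ could be witnessed by an edge $(v'_i,v'_j)$ with $v_i + v_j = v'_i + v'_j$ while $\{v_i,v_j\}\notin E$.

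The paper avoids this by using a \emph{separate} random matrix $M_i$ for each part $V_i$, i.e., $H(v) = M_i v$ for $v \in V_i$. Then for $v_i,v'_i \in V_i$ and $v_j,v'_j \in V_j$ with $(v_i,v_j)\neq(v'_i,v'_j)$, the bad event $H(v_i)+H(v_j)=H(v'_i)+H(v'_j)$ becomes $M_i(v_i-v'_i) + M_j(v_j-v'_j)=0$, which for the concatenated random matrix $\begin{pmatrix} M_i & M_j\end{pmatrix}$ happens with probability $|\mathbb F|^{-d}$ because $(v_i-v'_i,\, v_j-v'_j)\neq(0,0)$ — independently of whether $v_i+v_j=v'_i+v'_j$. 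An alternative fix would be to keep a single $M$ but first embed the vertex names into a Sidon set inside $\mathbb F^{m'}$ with $m' \approx 2\log n/\log|\mathbb F|$, making the coincidence $v+w=v'+w'$ impossible; this is essentially the approach attributed to Chen, Feng, Laekhanukit and Liu in the paper. Either way, the single-matrix construction as you wrote it does not yield the required pair-sum injectivity.
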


Our reduction relies on the following lemma.
\begin{lemma}\label{lem:hash}
Given any finite field $\mathbb F$ and a \kclique{} instance $G=(V=V_1\dot\cup V_2\dot\cup \ldots \dot\cup V_k,E)$ with $n$ vertices, we can construct a hash function $H: V\to \mathbb F^{d}$, where $d = 5\cdot \frac{\log n}{\log |\mathbb F|}$, such that: 
\begin{itemize}
    \item For each $i\in [k]$ and distinct $v_i,v_i'\in V_i$, $H(v_i)\neq H(v_i')$.
    
    \item For any distinct $i,j\in[k]$ and distinct pairs $(v_i,v_j),(v_i',v_j')\in 
    V_i\times V_j$, $H(v_i)+H(v_j)\neq H(v_i')+H(v_j')$.
\end{itemize}
\end{lemma}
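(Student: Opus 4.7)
The plan is to construct $H$ as $H(v) = M \mathbf{u}_v$, where $\mathbf{u}_v \in \mathbb{F}^N$ is a deterministic low-dimensional ``name'' of $v$ enjoying a Sidon-type property and $M \in \mathbb{F}^{d \times N}$ is a random matrix, later derandomized by the method of conditional probabilities. For the names, I would take the extension field $\mathbb{K} \supseteq \mathbb{F}$ of degree $N_0 = \lceil \log n / \log |\mathbb{F}| \rceil$ (so that $|\mathbb{K}| \ge n$), assign each vertex $v$ a distinct element $a_v \in \mathbb{K}$, and set $\mathbf{u}_v = (1, a_v, a_v^2, a_v^3) \in \mathbb{K}^4$, identified with a vector in $\mathbb{F}^{4N_0}$ via the standard $\mathbb{F}$-basis of $\mathbb{K}$.

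The key property of this encoding is: for any $i \neq j$ in $[k]$ and any distinct ordered pairs $(v_i, v_j), (v_i', v_j') \in V_i \times V_j$, one has $\mathbf{u}_{v_i} + \mathbf{u}_{v_j} \neq \mathbf{u}_{v_i'} + \mathbf{u}_{v_j'}$. The four $\mathbb{K}$-coordinates of such a sum are precisely the power sums $p_0, p_1, p_2, p_3$ of the unordered multiset $\{a_{v_i}, a_{v_j}\}$, and a short Newton-identity argument recovers the elementary symmetric polynomials from these, and hence the multiset itself. Equality of the multisets, combined with the disjointness $V_i \cap V_j = \emptyset$, then forces $v_i = v_i'$ and $v_j = v_j'$, contradicting the distinctness of the pairs.

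With the Sidon claim in hand, I would union-bound over bad events for a uniformly random $M$. Property~1 on a fixed pair $v \neq v' \in V_i$ is violated iff $M(\mathbf{u}_v - \mathbf{u}_{v'}) = 0$, i.e.\ with probability $|\mathbb{F}|^{-d}$, and there are at most $\binom{n}{2}$ such pairs. Property~2 on a fixed $4$-tuple is violated iff $M$ kills the nonzero vector $\mathbf{u}_{v_i} + \mathbf{u}_{v_j} - \mathbf{u}_{v_i'} - \mathbf{u}_{v_j'}$, again with probability $|\mathbb{F}|^{-d}$, and there are at most $n^4$ such $4$-tuples. The total failure probability is thus at most $(n^2 + n^4) \cdot |\mathbb{F}|^{-d}$, which is $<1$ for $d = 5\log n / \log |\mathbb{F}|$ since then $|\mathbb{F}|^d = n^5$. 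To derandomize, I would iterate over the $dN$ entries of $M$ and fix each greedily to minimize the conditional expected number of violated events; each unresolved event has conditional probability of the form $|\mathbb{F}|^{-r}$ in the remaining randomness, so the pessimistic estimator is polynomial-time computable, and since the initial expectation is $<1$ and the event count is integral, the final $M$ produces an $H$ with no bad events at all.

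The step requiring actual attention is the characteristic-agnostic Sidon claim on the names. In characteristic $\neq 2$ the identity $e_2 = (p_1^2 - p_2)/2$ recovers the multiset from $(p_1, p_2)$ immediately. In characteristic $2$, one has $p_2 = p_1^2$ automatically (so $p_2$ is redundant) and must instead use $e_2 = (p_1^3 - p_3)/p_1$ when $p_1 \neq 0$; the corner case $p_1 = 0$ forces $a_{v_i} = a_{v_j}$ (since $-b = b$ in characteristic $2$), i.e.\ $v_i = v_j$, which is impossible given $V_i \cap V_j = \emptyset$. Everything else is a standard union bound and a textbook derandomization.
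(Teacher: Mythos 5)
Your proof is correct, but it takes a genuinely different route from the paper's. The paper does not use a single random matrix; instead it draws an \emph{independent} random matrix $M_i \in \mathbb F^{d \times \frac{\log n}{\log|\mathbb F|}}$ for each colour class $V_i$ and sets $H(v) = M_i v$ for $v \in V_i$. This makes the pairwise bad event $M_i v_i + M_j v_j = M_i v_i' + M_j v_j'$ equivalent to the $d\times \frac{2\log n}{\log|\mathbb F|}$ random matrix $\begin{pmatrix}M_i & M_j\end{pmatrix}$ annihilating the nonzero vector $\begin{pmatrix}v_i - v_i'\\ v_j - v_j'\end{pmatrix}$, which happens with probability $|\mathbb F|^{-d}$ directly --- no structure on the vertex names is needed beyond distinctness, and in particular no Sidon-type encoding is required. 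Your variant uses a \emph{single} random matrix $M$ together with a Vandermonde/moment encoding $\mathbf u_v = (1,a_v,a_v^2,a_v^3)$ so that the additive ``collision'' vector $\mathbf u_{v_i}+\mathbf u_{v_j}-\mathbf u_{v_i'}-\mathbf u_{v_j'}$ is itself guaranteed nonzero; this is essentially the Sidon-set technique the paper attributes to Chen--Feng--Laekhanukit--Liu in a related context, and your characteristic-by-characteristic Newton-identity argument for it is correct (including the $p_1=0$ corner case in characteristic two, which is impossible across disjoint colour classes). Both arguments give $|\mathbb F|^d = n^5$ and a $<1$ failure probability followed by the same method-of-conditional-expectations derandomization, so they buy essentially the same parameters; the paper's per-colour-matrix trick is the more elementary of the two, since it avoids the extension-field names and the symmetric-function bookkeeping entirely, at the mild cost of maintaining $k$ separate matrices.
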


\begin{proof}
We present a random construction which succeeds with probability at least $1-kn^{-3}-k^2n^{-1}$. This construction can be easily derandomized using the conditional expectation technique similar to previous work~\cite{Lin21}.

The construction of $H$ is as follows. For each $i\in [k]$, we pick a $d \times \frac{\log n}{\log |\mathbb F|}$ matrix $M_i$ uniformly at random. For every $v\in V_i$, we treat it as a vector in $\mathbb F^{\frac{\log n}{\log |\mathbb F|}}$, and define $H(v):=M_iv\in \mathbb F^{d}$. Below, we show that with probability at least $1-kn^{-3}-k^2n^{-1}$, $H$ satisfies the two conditions above. 

The probability that the first property does not hold is upper bounded by 
    \begin{align*}
        &\Pr\left[\exists 1\le i\le k, \exists v_i,v_i'\in V_i, v_i\ne v_i'\land M_iv_i = M_iv_i'\right]\\\le\  &k |V_i|^2\cdot \Pr\left[v_i\ne v_i'\land M_iv_i = M_iv_i'\right]\\=\  &k |V_i|^2\cdot \Pr\left[v_i\ne v_i'\land M_i(v_i-v_i')=0\right]\\ \le\  &k n^2\cdot |\mathbb F|^{-d}\le kn^2\cdot n^{-5}=kn^{-3}.
    \end{align*}

The probability that the second property does not hold is upper bounded by
    \begin{align*}
        &\Pr[\exists 1\le i<j\le k, \exists (v_i,v_j),(v_i',v_j')\in V_i\times V_j, \\
        &(v_i,v_j)\ne (v_i',v_j')\land M_iv_i+M_jv_j = M_iv_i'+M_jv_j']\\
        \le\  &\binom{k}{2} n^4\cdot \Pr\left[(v_i,v_j)\ne (v_i',v_j')\land M_iv_i+M_jv_j= M_iv_i'+M_jv_j'\right]\\=\  &\binom{k}{2} n^4\cdot \Pr\left[(v_i,v_j)\ne (v_i',v_j')\land \begin{pmatrix} M_i & M_j\end{pmatrix}\cdot \begin{pmatrix} v_i-v_i' \\ v_j-v_j'\end{pmatrix}=0\right]\\ \le\  &\binom{k}{2} n^4\cdot n^{-5}= \binom{k}{2}n^{-1}\le k^2n^{-1}.
    \end{align*}
\end{proof}

Then, we apply Lemma~\ref{lem:hash} to prove Theorem~\ref{thm:clique2vcsp}.
\begin{proof}[Proof of Theorem~\ref{thm:clique2vcsp}]
Let $H$ be the hash function constructed in Lemma~\ref{lem:hash}. For each $i\in [k]$, we construct a set $T_i$ which encodes the vertex set $V_i$: 
\[T_i := \{H(v_i)\mid v_i\in V_i\}.\]
For every $1\le i<j\le k$, we construct a set $U_{i,j}\subseteq \mathbb F^d$ which encodes the edges between $V_i$ and $V_j$: 
\[U_{i,j} := \{H(v_i) + H(v_j)\in \mathbb F^d \mid v_i\in V_i, v_j\in V_j, (v_i,v_j) \in E\}.\] 

For each $i\in [k]$, we add the constraint $x_i\in T_i$, and for each $1\le i<j\le k$, we add a constraint $x_i+x_j\in U_{i,j}$. 

For the completeness, suppose there are $k$ vertices $v_1\in V_1,\ldots, v_k\in V_k$ that form a $k$-clique, then we set $x_i= H(v_i)$. They clearly satisfy all the constraints above.

For the soundness, we fix an arbitrary assignment $\{x_1,\dots,x_k\}$. We show that if this assignment satisfies all the constraints, then there is a $k$-clique in the original graph $G$. First, for each $i\in [k]$, by the construction of $T_i$ and the first item of Lemma~\ref{lem:hash}, there exists a unique vertex $v_i\in V_i$ such that $x_i = H(v_i)$.
Next, for each different $i,j \in [k]$, we have that $H(v_i)+H(v_j)\in U_{i,j}$. By the construction of $U_{i,j}$ and the second item of Lemma~\ref{lem:hash}, we conclude that there is an edge between $v_i\in V_i$ and $v_j\in V_j$. Thus $\{v_1,\ldots,v_k\}$ is a size-$k$ clique in $G$. 
\end{proof}

Given Theorem \ref{thm:clique2vcsp}, we immediately have the following corollaries:
\begin{corollary}\label{cor:w1} For any integer $k$ and any finite field $\mathbb F$, vector 2CSP with $k$ variables 
and dimension $\Theta(\frac{\log n}{\log |\mathbb F|})$ is $\wone$-hard. 
\end{corollary}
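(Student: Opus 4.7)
The plan is to derive this corollary as an immediate consequence of Theorem~\ref{thm:clique2vcsp}. Recall that \kclique{} is the canonical $\wone$-complete problem, so to establish $\wone$-hardness of vector 2CSP it suffices to exhibit an FPT reduction from \kclique{} that preserves the parameter up to a computable function. Theorem~\ref{thm:clique2vcsp} provides exactly such a reduction: on an $n$-vertex \kclique{} instance it produces a vector 2CSP instance with $k$ variables (so the parameter is preserved on the nose) and alphabet $\mathbb{F}^{d}$ with $d = 5\log n / \log|\mathbb{F}|$, together with the required completeness and soundness.

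To match the dimension expression in the statement, I would re-express $d$ in terms of the alphabet size of the output 2CSP instance (which is the role played by ``$n$'' in Definition~\ref{def:2csp}). Writing $N$ for that alphabet size, we have $N = |\mathbb{F}|^{d} = n^{5}$, hence $d = \log N / \log|\mathbb{F}| = \Theta(\log N/\log|\mathbb{F}|)$, which is exactly the regime named in the corollary. Closure of $\wone$-hardness under FPT reductions then delivers the conclusion.

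The only point worth double-checking is that the reduction in Theorem~\ref{thm:clique2vcsp} actually runs in FPT time (and indeed in polynomial time). The potentially random ingredient is the hash function $H$ from Lemma~\ref{lem:hash}; the paper asserts a standard derandomization by conditional expectations. To verify FPT running time one would check that the bad-event probability being controlled decomposes as a sum of $O(k^{2} n^{4})$ events, each of which can be evaluated exactly in polynomial time, so greedily fixing the entries of the matrices $M_{1},\ldots,M_{k}$ takes time $\mathrm{poly}(k,n)$. I do not expect this to be an obstacle; the entire proof is a short bookkeeping step combining Theorem~\ref{thm:clique2vcsp}, the $\wone$-completeness of \kclique{}, and FPT-reduction closure.
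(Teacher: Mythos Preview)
Your proposal is correct and follows the same route as the paper: the paper presents Corollary~\ref{cor:w1} as an immediate consequence of Theorem~\ref{thm:clique2vcsp} together with the $\wone$-completeness of \kclique{}, without writing out a separate proof. Your additional remarks on re-expressing the dimension in terms of the output alphabet size and on the polynomial running time of the derandomized hash construction are valid sanity checks but go slightly beyond what the paper spells out.
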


\begin{corollary}\label{cor:eth}
Assuming ETH, for any integer $k$, any finite field $\mathbb F$ and  any computable function $f$, vector 2CSP problem with $k$ variables and dimension $\Theta(\frac{\log n}{\log |\mathbb F|})$ admits no algorithm with running time $f(k)\cdot n^{o(k)}$.
\end{corollary}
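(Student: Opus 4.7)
The plan is to derive this statement directly from the reduction in Theorem~\ref{thm:clique2vcsp} combined with the ETH-based \kclique{} lower bound in Lemma~\ref{thm:exact-clique-eth}, via a standard contrapositive argument. Since Theorem~\ref{thm:clique2vcsp} is a parameter-preserving FPT reduction from \kclique{} on $N$-vertex graphs to vector 2CSP with the same $k$, alphabet $\mathbb F^{d}$, and $d = 5\log N/\log|\mathbb F|$, the alphabet size of the output instance is $n := |\mathbb F|^{d} = N^{5}$. This polynomial relationship between $N$ and $n$ is the only quantitative fact I need to track.

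Concretely, I would assume for contradiction that there is a computable $f$ and an algorithm $A$ that solves vector 2CSP with $k$ variables and dimension $\Theta(\log n/\log|\mathbb F|)$ in time $f(k)\cdot n^{o(k)}$. Given an arbitrary \kclique{} instance $G$ on $N$ vertices with parameter $k$, I would run the reduction of Theorem~\ref{thm:clique2vcsp} to obtain, in time $\mathrm{poly}(N)$ (with a factor depending only on $k$), a vector 2CSP instance $\Pi$ whose alphabet has size $n = N^{5}$ and whose number of variables is still $k$. By completeness and soundness of the reduction, $G$ has a $k$-clique iff $\Pi$ is satisfiable, so applying $A$ to $\Pi$ decides \kclique{} in time
\[
\mathrm{poly}(N) + f(k)\cdot n^{o(k)} \;=\; f(k)\cdot N^{o(k)},
\]
where I absorb the polynomial reduction cost into the $N^{o(k)}$ factor using $n = N^{5}$. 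This contradicts Lemma~\ref{thm:exact-clique-eth} under ETH and hence proves the corollary.

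There is no real obstacle here; this is a bookkeeping corollary. The only point requiring minor care is the translation between the alphabet size parameter $n$ appearing in the vector 2CSP statement and the vertex count $N$ of the \kclique{} instance. Since the reduction produces $n$ polynomially bounded by $N$ and the parameter $k$ is preserved exactly, any $o(k)$ bound in the exponent pulls back to an $o(k)$ bound in the exponent on the \kclique{} side, which is precisely what Lemma~\ref{thm:exact-clique-eth} rules out.
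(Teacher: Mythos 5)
Your proposal is correct and matches the paper's intent: the paper states Corollaries~\ref{cor:w1} and~\ref{cor:eth} as immediate consequences of Theorem~\ref{thm:clique2vcsp} without writing out a proof, and the argument it has in mind is exactly this composition of the FPT reduction with the ETH lower bound of Lemma~\ref{thm:exact-clique-eth}, using $n = N^{5}$ to translate between the alphabet size of the vector 2CSP and the vertex count of the \kclique{} instance.
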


\section{Parallel Locally Testable and Decodable Codes}
\label{sec:code}

In this section, we first present some backgrounds in coding theory, then define the $\ell$-parallelization of an error-correcting code $C$ to capture the parallel setting (Definition \ref{def:parallel}), and extend classic notions of locally decodable codes and locally testable codes respectively into the $\ell$-parallelization case (Definition \ref{def:parallel-decode} and \ref{def:parallel-test}). After that, we define \emph{parallel locally testable and decodable codes} (PLTDC, Definition \ref{def:nice}), and illustrate two examples of PLTDCs in Section \ref{sec:hard-pltdc} and \ref{sec:2pir}.
We will give the construction of a more rate-efficient parallel locally testable and decodable code in Section~\ref{sec:eccgood}. 

\subsection{Motivation}
Error-correcting code is a key ingredient in the proof of classic PCP theorem, and an important combinatorial tool to create a gap. To prove the PCP theorem, a natural strategy is to design a locally testable and decodable code $C:\Sigma^k \to \Sigma^{k'}$ for an $\mathsf{NP}$-complete language $L$, and require the prover to provide $C(\mathbf x)$ for some certificate $\mathbf x$. By locally reading a constant number of positions in $C(\mathbf x)$, we can verify whether $\mathbf x$ is a true certificate or not.

In the parameterized regime, the alphabet $\Sigma$ of a certificate is as large as $n$, which makes it hard to design an efficient locally  testable and decodable code. Take the Hadamard code $H:\mathbb F^k\to \mathbb F^{|\mathbb F|^k}$ as an example. If we choose $|\mathbb F|$ to be greater than $n$, then the block length, which corresponds to the number of variables in the gap instance, would be $n^k$, and thus the reduction will not be FPT.

To address this issue, we come up with the notion of parallel locally testable and  decodable codes. Suppose we have a locally testable and decodable code $C:\mathbb F_2^k\to \mathbb F_2^{k'}$, and want to encode a certificate $\mathbf x\in \Sigma^k$ over a large alphabet $\Sigma$. Think of $\Sigma=\mathbb F_2^\ell$ where $\ell=\log_{2}|\Sigma|$. For each $i \in [\ell]$, we concatenate the $i$-th bit of $\mathbf x[1],\ldots,\mathbf x[k]$ to a string $\mathbf x^i$ in $\mathbb F_2^k$, and encode it to a codeword in $\mathbb F_2^{k'}$ using $C$. For each $i \in [k']$, we combine the $i$-th symbol of the $\ell$ codewords, $C(\mathbf x^1)[i],\ldots,C(\mathbf x^\ell)[i]$, back to a symbol in $\Sigma$, and concatenate them to obtain a codeword in $\Sigma^{k'}$. Denote this parallelized code as $C^{\odot \ell}:\Sigma^k \to \Sigma^{k'}$. To decode a symbol $\mathbf x[i]$, we naturally want to decode its $\ell$ bits $\mathbf x^1[i],\ldots,\mathbf x^\ell[i]$ from $C(\mathbf x^1),\ldots,C(\mathbf x^\ell)$ respectively and combine them. However, to make the decoding local, we can only read $q$ symbols in $C^{\odot \ell}(\mathbf x)$, which give us access to the same $q$ positions in $C(\mathbf x^1),\ldots,C(\mathbf x^\ell)$. As the codewords may be corrupted and we want the $q$ queried positions in $C(\mathbf x^1),\ldots,C(\mathbf x^{\ell})$ to be all clean, the local test should also give us the promise that errors in $C(\mathbf x^1),\ldots,C(\mathbf x^{\ell})$ happen in a parallel way, i.e., take place at the same set of positions. 

\subsection{The Definition of PLTDCs} 
Before introducing PLTDCs, we first introduce some notations in coding theory. 

Given two strings $\mathbf x,\mathbf y\in \Sigma^n$, we denote by $\Delta(\mathbf x,\mathbf y)$ their relative hamming distance, i.e., $\Delta(\mathbf x,\mathbf y) = \frac{1}{n}|\{i|1\le i\le n, \mathbf x[i]\ne \mathbf y[i]\}|$. The distance of a string $\mathbf x\in \Sigma^n$ from a set of strings $S\subseteq \Sigma^n$ is denoted by $\Delta(\mathbf x,S) = \min_{\mathbf y \in S} \Delta(\mathbf x,\mathbf y)$. We say $\mathbf x$ is $\delta$-far from (respectively, $\delta$-close to) $S$ if $\Delta(\mathbf x,S) > \delta$ (respectively, $\Delta(\mathbf x,S)\le \delta$). 
 Below, we introduce error-correcting codes (ECCs), and define the $\ell$-parallelization of an error-correcting code $C$.
\begin{definition}[Error-correcting codes]
   A mapping $C: \Sigma_1^{k}\to \Sigma_2^{n}$ is an error-correcting code with message length $k$, block length $n$, and relative distance $\delta$ if for every distinct $\mathbf x,\mathbf y\in \Sigma_1^k$, we have $\Delta(C(\mathbf x),C(\mathbf y))\ge \delta$. We  use $\mathrm{Im}(C)$ to represent the image of $C$, i.e., $\mathrm{Im}(C) = \{C(\mathbf x)\mid \mathbf x\in \Sigma_1^k\}$.
\end{definition}

\begin{definition}[$\ell$-parallelization]
\label{def:parallel}
Given a ECC $C: \Sigma_1^{k}\to \Sigma_2^{n}$ and an integer $\ell$, the $\ell$-parallelization of $C$, denoted by $C^{\odot \ell}: (\Sigma_1^{\ell})^k\to (\Sigma_2^\ell)^{n}$, is defined as follows.
Given any $\mathbf{v}_1,\ldots,\mathbf{v}_{k}\in\Sigma_1^\ell$, for every $i\in [n]$,
define $\mathbf{u}_{i}\in \Sigma_2^\ell$ to be the vector such that for every $j\in [\ell]$:
\begin{equation}\label{eq:parallel}
\mathbf{u}_{i}[j]=C(\mathbf{v}_1[j],\ldots,\mathbf{v}_{k}[j])[i]\in \Sigma_2.
\end{equation}
Then, $C^{\odot \ell}(\mathbf{v}_1,\ldots,\mathbf{v}_{k})$ is defined as $(\mathbf{u}_1,\ldots,\mathbf{u}_n)\in (\Sigma_2^{\ell})^n$. 
\end{definition}

We put an illustration of $\ell$-parallelization in Figure~\ref{fig:eccfig}. The intuition is to encode a message with large alphabet $\Sigma$ using an error-correcting code which is defined over small alphabet $\Sigma_1$. Take $\ell=\lceil \log_{|\Sigma_1|}|\Sigma| \rceil$ and regard each character $v \in \Sigma$ as a vector $\mathbf v \in \Sigma_1^\ell$. Thus, a message $(v_1,\ldots,v_k) \in \Sigma^k$ is regarded as a list of vectors $(\mathbf v_1,\ldots,\mathbf v_k) \in (\Sigma_1^\ell)^k$. A codeword in $C^{\odot \ell}$ is also a list of vectors $(\mathbf u_1,\ldots,\mathbf u_n)\in (\Sigma_2^\ell)^n$. For every $j \in [\ell]$, we concatenate the $j$-th entry of the $\mathbf v_1,\ldots,\mathbf v_k$ to a string in $\Sigma_1^k$, encode it using $C$, disintegrate the codeword into $n$ characters in $\Sigma_2$, and put the $n$ characters to the $j$-th entry of $\mathbf u_1,\ldots,\mathbf u_n$, respectively.

Note that we can extend the definition of $\ell$-parallelization to all functions. Given a function $f: \Sigma^k\to \Sigma$ , its $\ell$-parallelization is a function $f^{\odot \ell}: (\Sigma^{\ell})^k\to \Sigma^{\ell}$.

Below we extend the notion of classic locally decodable codes into the $\ell$-parallelization case. Throughout our paper, in addition to decoding each symbol of the message, we also need to decode some linear combination of two symbols. Therefore, we slightly generalize the definition of local decodability. 
\begin{definition}[$(q,R,\delta,\varepsilon)$-(smooth) parallel local decodability]
\label{def:parallel-decode}
    Given a function $f:\Sigma_1^k \to \Sigma_1$, an error-correcting code $C:\Sigma_1^k \to \Sigma_2^{n}$ is said to be $(q,R,\delta,\varepsilon)$-parallel locally decodable with respect to $f$ if there exists a randomized  algorithm $\mathcal D_f$, 
    such that for every $\ell>0, \mathbf m \in (\Sigma_1^{\ell})^k$ and $\mathbf z \in (\Sigma_2^{\ell})^{n}$, 
    \begin{itemize}
        \item $\mathcal D_f$ draws a random number $r$ uniformly from $[R]$, queries $q$ positions of $\mathbf z$ according to the randomness $r$, and  outputs an element in $\Sigma_1^{\ell}$. Note that the selection of the positions to be queried is independent of $\mathbf z$.
       \item (Perfect completeness) If $\Delta(\mathbf z,C^{\odot \ell}(\mathbf m))=0$, then $\Pr[\mathcal D_f(\mathbf z)=f^{\odot \ell}(\mathbf m)] =1$.
        \item (Smoothness) Each of the $n$ symbols in $\mathbf z$ is equally likely to be queried, i.e., for each $1\le j\le n$, $\Pr[\mathcal D_f(\mathbf z)\text{ queries }\mathbf z[j]] = \frac{q}{n}$.
        
        \item (Local decodability)\footnote{Note that perfect completeness and smoothness imply local decodability with $\varepsilon=1-q\delta$. However, we still put local decodability into the definition for convenience.} If $\Delta(\mathbf z,C^{\odot \ell}(\mathbf m))\le \delta$, then $\Pr[\mathcal D_f(\mathbf z)=f^{\odot \ell}(\mathbf m)] \ge \varepsilon$.
    \end{itemize}
\end{definition}

Note that classic locally decodable codes correspond to our generalized locally decodable codes with $\ell=1$ and $f$ being dictator functions $\chi_i(x_1, x_2, \ldots, x_k) = x_i$. 

Next follows the definition of parallel locally testable codes. 

\begin{definition}[$(q,R,\delta,\varepsilon)$-parallel local testability]
\label{def:parallel-test}

An error-correcting code $C:\Sigma_1^k \to \Sigma_2^{n}$ is $(q,R,\delta,\varepsilon)$-parallel locally testable if and only if there exists a randomized algorithm $\mathcal T$, such that for any $\ell>0$ and any $\mathbf w \in (\Sigma_2^\ell)^{n}$, 
    \begin{itemize}
        \item $\mathcal T$ draws a random number $r$ uniformly from $[R]$, queries $q$ positions of $\mathbf w$ according to the randomness $r$, and outputs one bit.
        \item (Perfect completeness) If $\mathbf w\in \mathrm{Im}(C^{\odot \ell})$, then $\Pr[\mathcal T(\mathbf w)=1] = 1$.
        \item (Soundness) If $\Pr[\mathcal T(\mathbf w)=1] \ge \varepsilon$, then $\Delta(\mathbf w,\mathrm{Im}(C^{\odot \ell})) \le \delta$.
    \end{itemize}
\end{definition}
Note that classic locally testable codes correspond to the $\ell=1$ case, while parallel locally testable codes should work for every $\ell>0$, and the parameters $\delta$ and $\varepsilon$ are independent of $\ell$.

Now we are ready to define our key ingredient, namely \emph{parallel locally testable and decodable code} (PLTDC).
\begin{definition}[PLTDC]
\label{def:nice}
An error-correcting code $C:\Sigma_1^{k}\to\Sigma_2^{n}$ is a PLTDC if there exist constants $q>0$, $0<\delta_T,\delta_D, \varepsilon_T, \varepsilon_D<1$ and two functions $R_T(n),R_D(n) = n^{O(1)}$ such that:
\begin{enumerate}
\item $C$ is $(q,R_T,\delta_T,\varepsilon_T)$-parallel locally testable.
\item For any $i \in [k]$, let $\chi_i:\Sigma_1^{k}\to \Sigma_1$ be the $i$-th dictator function, i.e., $\chi_i(a_1,\ldots,a_{k})=a_i$. $C$ is $(2,R_D,\delta_D,\varepsilon_D)$-smooth parallel locally decodable with respect to $\chi_i$. In other words, we can locally decode any symbol of the message from the (corrupted) codeword in the parallel sense.

\item For any $1\le i<j\le k$, let $\psi_{i,j}:\Sigma_1^{k}\to \Sigma_1$ be the function that maps $(a_1,\ldots,a_k)$ to $a_i+a_j$. $C$ is $(2,R_D,\delta_D,\varepsilon_D)$-smooth parallel locally decodable with respect to $\psi_{i,j}$. In other words, we can locally decode any $a_i+a_j$ from the (corrupted) codeword of $(a_1,\ldots,a_k)$ in the parallel sense.

\end{enumerate}
\end{definition}

\subsection{Example I: Hadamard Code}\label{sec:hard-pltdc}
We show that the Hadamard code is a PLTDC.
Let $H:\mathbb{F}^{k}\to\mathbb{F}^{\mathbb{F}^{k}}$ be the Hadamard Code, i.e. for every $\mathbf{x},\mathbf{a}\in\mathbb{F}^{k}$,
\[
H(\mathbf{x})[\mathbf{a}]=\mathbf{a}^T\mathbf{x}.
\]
We treat the codeword $H(\mathbf{x})[\cdot]$ as a function from $\mathbb{F}^{k}$ to $\mathbb{F}$. Similarly, the codeword for $\mathbf x \in \mathbb F^{k \times \ell}$ in the $\ell$-parallelization of the Hadamard code can be viewed as a function from $\mathbb{F}^{k}$ to $\mathbb{F}^\ell$, mapping $\mathbf a \in \mathbb F^{k}$ to $\mathbf a^T\mathbf x$. We say a function $f$ is a homomorphism if $f(\mathbf a)+f(\mathbf b)=f(\mathbf a+\mathbf b)$. It is easy to see a function from $\mathbb F^{k}$ to $\mathbb F^\ell$ is a homomorphism if and only if it is of the form $f(\mathbf a)=\mathbf a^T\mathbf x$ for some $\mathbf x \in \mathbb F^{k \times \ell}$, thus if and only if $f$ is a codeword of $H^{\odot \ell}$. 

From \cite{blum1993self}, if a function $f$ satisfies $\Pr_{\mathbf{a},\mathbf{b}}[f(\mathbf{a})+f(\mathbf{b})=f(\mathbf{a}+\mathbf{b})]\ge 1-\frac{\delta}{2}$ for some $\delta$, then it is at most $\delta$-far from a homomorphism function $g$. This test shows that the Hadamard code is $\left(3,|\mathbb{F}|^{2k},\delta,1-\frac{\delta}{2}\right)$-parallel locally testable.

Next, we turn to prove that it is parallel locally decodable and satisfies  conditions 2--3 in Definition~\ref{def:nice}. 
If $f$ satisfies $\Pr_\mathbf{a}[f(\mathbf{a})=\mathbf a^T\mathbf x]\ge 1-\delta$ for some $\mathbf x \in \mathbb F^{k \times \ell}$, then for any $i\ne j\in [k]$, we have 
\[
\Pr_{\mathbf{a}}[\mathbf{x}[i]=f(\mathbf{a}+\mathbf{e}_i)-f(\mathbf{a})]\ge 1-2\delta
\]
and
\[
\Pr_{\mathbf{a}}[\mathbf{x}[i] + \mathbf{x}[j]=f(\mathbf{a}+\mathbf{e}_i+\mathbf{e}_j)-f(\mathbf{a})]\ge 1-2\delta,
\]
where $\mathbf x[i]\in \mathbb F^{\ell}$, the $i$-th row of $\mathbf x$, is the $i$-th symbol of the message, and $\mathbf{e}_i$ is the $i$-th unit vector. This shows that for every $i \ne j \in [k]$, Hadamard code is $(2,|\mathbb F|^{k},\delta,1-2\delta)$-smooth parallel locally decodable with respect to all $\chi_i$ and $\psi_{i,j}$.

\subsection{Example II: The Derivative Code from 
\cite{WY07}}\label{sec:2pir}

Below, we present another PLTDC constructed in \cite{WY07}, which is based on classic Reed-Muller codes \cite{Muller54}.

We fix a field $\mathbb F= \mathbb F_7$.
For every integer $m\ge 1$, we set $k = \binom{m+3}{3}$ and prescribe $k$ points $\mathbf{u}_1,\dots,\mathbf{u}_k\in \mathbb F^m$ such that any evaluation on these points uniquely determines an $m$-variate degree-3 polynomial. This is possible since the number of coefficients of such polynomials is $\binom{m+3}{3}=k$. We then define the code $C: \mathbb F^{k}\to \left(\mathbb  F^{2m+1} \right)^{|\mathbb F|^{2m}}$. Given a message $(a_1,\dots,a_k)\in \mathbb F^{k}$, its codeword $C(a_1,\dots,a_k)[\cdot,\cdot]$ is regarded as a function from $\mathbb F^{m}\times \mathbb F^{m}$ to $\mathbb F^{2m+1}$, which is defined as follows.
\begin{enumerate}
    \item Let $p:\mathbb F^m \to \mathbb F$ be the the unique $m$-variate degree-3 polynomial such that for every $i \in [k], p(\mathbf u_i)=a_i$, and let $q:\mathbb F^{m}\times \mathbb F^{m} \to \mathbb F$ be defined as $q(\mathbf{x},\mathbf{y}):=p(\mathbf{x})+p(\mathbf{y})$. Using $q$, we can obtain $a_i$ (resp. $a_i+a_j$) from $q(\mathbf{u}_i,\mathbf{0})$ (resp., $q(\mathbf{u}_i,\mathbf{u}_j)$), which reduces the task of decoding $a_i,a_i+a_j$ to the task of computing some $q(\mathbf{x},\mathbf{y})$.
    \item For every $\mathbf{x},\mathbf{y}\in \mathbb F^m$,
\begin{equation}\label{eq:construct3}
C(a_1,\dots,a_k)[\mathbf{x},\mathbf{y}]
:= \left(q,\frac{\partial q}{\partial x_1},\dots,\frac{\partial q}{\partial x_m},\frac{\partial q}{\partial y_1},\dots,\frac{\partial q}{\partial y_m}\right)(\mathbf{x},\mathbf{y})\in \mathbb F^{2m+1}
\end{equation}
which stores $q(\mathbf{x},\mathbf{y})$ and all its partial derivatives $\frac{\partial q}{\partial x_i}(\mathbf{x},\mathbf{y})$ and $\frac{\partial q}{\partial y_i}(\mathbf{x},\mathbf{y})$. 
\end{enumerate}

We first prove that $C$ is parallel locally testable. Let $f: \mathbb F^{m}\times \mathbb F^{m}\to \left(\mathbb F^{2m+1}\right)^{\ell}$ be a plausible codeword in the $\ell$-parallelization of $C$.

We treat each element $\sigma \in (\mathbb F^{2m+1})^{\ell}$ as an $\ell \times (2m+1)$ table, where for $i\in [\ell], j\in [2m+1]$, we use $\sigma[i,j]\in \mathbb F$ to denote its $(i,j)$-th entry. Note that $f(\cdot, \cdot)[i,j]$ is supposed to be a $(2m)$-variate polynomial with degree $\le 3$. 
The testing algorithm is as follows.
\begin{itemize}
    \item We first test whether $\forall i\in [\ell],j\in [2m+1], f(\cdot, \cdot)[i,j]$ is close to a polynomial of degree $\le 3$ in the parallel sense. We apply constant-degree parallel low-degree tests ~\cite{DBLP:conf/icalp/LinRSW22}. If the rejection probability is smaller than $\frac{\delta}{2}$ 
    for some $0<\delta<\frac{1}{25}$, then there exists a $(2m)$-variate polynomial tuple $\widetilde f(\cdot,\cdot)$ with degree $\le 3$ and a set $S\subseteq \mathbb F^m\times \mathbb F^m$ with $|S|\ge (1-\delta)|\mathbb F|^{2m}$, such that for every $i\in [\ell], j\in [2m+1]$, $f(\cdot,\cdot)[i,j]$ equals to $\widetilde f(\cdot,\cdot)[i,j]$ on $S$. Since the distance between any two different degree-$(\le 3)$ polynomials is at least $1-\frac{3}{|\mathbb F|} > 2\delta$, the polynomial tuple $\widetilde f(\cdot,\cdot)$ is unique.
    
    \item Next, we test whether $\forall i\in [\ell], 2\le j\le 2m+1, \widetilde f(\cdot, \cdot)[i,j]$ is a partial derivative of $\widetilde f(\cdot, \cdot)[i,1]$ as in (\ref{eq:construct3}). In detail, we randomly pick $\mathbf{x}, \mathbf{y}, \mathbf{d}_1, \mathbf{d}_2\in \mathbb F^m$ and query $4$ entries $\{f(\mathbf{x}+\lambda\mathbf{d}_1,\mathbf{y}+\lambda\mathbf{d}_2)\mid 0\le \lambda\le 3\}$. Using the queried results, we can interpolate a univariate degree-3 polynomial tuple $g(\lambda):=f(\mathbf{x}+\lambda\mathbf{d}_1,\mathbf{y}+\lambda\mathbf{d}_2)\in \mathbb F^{\ell\times (2m+1)}$. The algorithm accepts if $\forall i\in [\ell]$,
    \begin{equation}\label{eq:construct4}
        g'(0)[i,1] = \sum_{j=1}^{m} f(\mathbf{x},\mathbf{y})[i,j+1]\cdot \mathbf{d}_1[j]+
        \sum_{j=1}^{m} f(\mathbf{x},\mathbf{y})[i,j+m+1]\cdot \mathbf{d}_2[j].
    \end{equation}
    For completeness, assume $f=\widetilde f$ is a codeword and thus for every $i\in [\ell],2 \le j \le 2m+1$, $\widetilde f(\cdot, \cdot)[i,j]$ is is the partial derivative of $\widetilde f(\cdot, \cdot)[i,1]$ as in (\ref{eq:construct3}). Then both LHS and RHS of (\ref{eq:construct4}) compute the directional derivative of $\widetilde f$, with respect to direction $(\mathbf d_1,\mathbf d_2)$, on point $(\mathbf x,\mathbf y)$.
    
    \noindent For soundness, we first note that with probability $\ge 1-4\delta$,  all the queried points lie in $S$ and thus $f$ agrees with $\widetilde f$ on them. Note that for each $i \in [\ell],j \in [2m+1]$, $\widetilde f(\cdot,\cdot)[i,j]$ is a polynomial with degree $\le 3$. Thus, if $\exists i\in [\ell], 2\le j\le 2m+1$ such that $\widetilde f(\cdot,\cdot)[i,j]$ is not the partial derivative of $\widetilde f(\cdot,\cdot)[i,1]$ as in (\ref{eq:construct3}), then the LHS and RHS of (\ref{eq:construct4}) are different degree-$(\le 4)$ polynomials on variables  $\mathbf{x},\mathbf{y},\mathbf{d}_1,\mathbf{d}_2$. Therefore, the rejection probability is at least $1-\frac{4}{|\mathbb F|}-4\delta$ by Schwartz-Zippel Lemma.
    \item At the end, we test if for every $i \in [\ell]$, $\widetilde f(\mathbf{x},\mathbf{y})[i,1]$ is of the form $p(\mathbf{x})+p(\mathbf{y})$. This is equivalent to testing whether for every $\mathbf x,\mathbf y \in \mathbb F^m$:
    \begin{equation}\label{eq:construct5}
        \left\{
        \begin{aligned}
        \widetilde f(\mathbf{x},\mathbf{y})[i,1] & = \widetilde f(\mathbf{x},\mathbf{0})[i,1]+\widetilde f(\mathbf{0},\mathbf{y})[i,1] \\
        \widetilde f(\mathbf{x},\mathbf{0})[i,1] & = \widetilde f(\mathbf{0},\mathbf{x})[i,1]
        \end{aligned}
        \right..
    \end{equation} 
    Similarly, if we pick $\mathbf x,\mathbf y ,\mathbf d_1,\mathbf d_2\in \mathbb F^m$ uniformly at random and do local interpolation from $\{f(\mathbf x+\lambda \mathbf d_1,\mathbf y+\lambda \mathbf d_2) \mid 1 \le \lambda \le 4\}$, we can get $\widetilde f(\mathbf x,\mathbf y)$ with probability at least $1-4\delta$. We randomly pick $\mathbf x,\mathbf y \in \mathbb F^m$ and test whether (\ref{eq:construct5}) holds using $4 \times 4=16$ queries. 
    The analysis is the same as the previous bullet, so we omit the details.
\end{itemize}

Thus, the code is $\left(25,|\mathbb F|^{18m},\delta,1-\frac{\delta}{2}\right)$-parallel locally testable for every $0<\delta<\frac{1}{25}$. We now turn to prove the parallel local decodability. 
From our discussion above, to decode $\ell$-parallelized messages, it suffices to locally correct the $\ell$ function values $(f(\mathbf{x},\mathbf{y})[1,1],\dots, f(\mathbf{x},\mathbf{y})[\ell,1])$ for some $(\mathbf {x},\mathbf {y})$.

 We randomly pick $\mathbf{d}_1,\mathbf{d}_2\in \mathbb F^m$ and different $\lambda_1,\lambda_2\in \mathbb F\setminus \{0\}$, and query $f(\mathbf{x}+\lambda_1\mathbf{d}_1,\mathbf{y}+\lambda_1\mathbf{d}_2)$ and $f(\mathbf{x}+\lambda_2\mathbf{d}_1,\mathbf{y}+\lambda_2\mathbf{d}_2)$.
Consider the function $g(\lambda) := (f(\mathbf{x}+\lambda\mathbf{d}_1,\mathbf{y}+\lambda\mathbf{d}_2)[1,1],\ldots,f(\mathbf{x}+\lambda\mathbf{d}_1,\mathbf{y}+\lambda\mathbf{d}_2)[\ell,1])\in \mathbb F^\ell$. For each $i \in [\ell]$, $g(\cdot)[i]$ is supposed to be a univariate degree-3 polynomial. 
Formally, for every $i\in [\ell]$, there exists $c_{i,0},\dots,c_{i,3}\in \mathbb F$ such that $g(\lambda)[i] = \sum_{j=0}^{3} c_{i,j}\lambda^j$. 
Since the codeword  consists of point values of the function and its partial derivatives, we can obtain $g(\lambda_1)$, $g(\lambda_2)$ and $g'(\lambda_1)$, $g'(\lambda_2)$ by the chain rule. Thus, for every $1\le i\le \ell$, we solve the following linear equation system for $(c_{i,0},c_{i,1},c_{i,2},c_{i,3})$:
\[\begin{pmatrix}g(\lambda_1)[i]\\ g'(\lambda_1)[i]\\ g(\lambda_2)[i]\\ g'(\lambda_2)[i]\end{pmatrix} = \begin{pmatrix} 1 & \lambda_1& \lambda_1^2 & \lambda_1^3\\0 & 1& 2\lambda_1 & 3\lambda_1^2\\1 & \lambda_2& \lambda_2^2 & \lambda_2^3\\0 & 1& 2\lambda_2 & 3\lambda_2^2 \end{pmatrix} \begin{pmatrix}c_{i,0}\\ c_{i,1}\\ c_{i,2}\\ c_{i,3}\end{pmatrix}.\]
The $4\times 4$ matrix has determinant $(\lambda_2-\lambda_1)^4\neq 0$ and is thus invertible. As a result, we obtain all coefficients of $g$ and can compute $(f(\mathbf{x},\mathbf{y})[1,1],\dots, f(\mathbf{x},\mathbf{y})[\ell,1])$ by $g(0)$. 

It is straightforward that every point in $\mathbb F^m\times \mathbb F^m$ has an equal probability of being queried. Furthermore, if $\Delta(f, {\rm Im}(C^{\odot \ell}))<\delta$, then with probability at least $1-2\delta$, the 2 queried points $f(\mathbf{x}+\lambda_1\mathbf{d}_1,\mathbf{y}+\lambda_1\mathbf{d}_2)$ and $f(\mathbf{x}+\lambda_2\mathbf{d}_1,\mathbf{y}+\lambda_2\mathbf{d}_2)$ are not corrupted, and thus we can correctly decode   $(f(\mathbf{x},\mathbf{y})[1,1],\dots, f(\mathbf{x},\mathbf{y})[\ell,1])$. As a result, the code $C$ is $(2,|\mathbb F|^{2m+2}, \delta,1-2\delta)$-smooth parallel locally decodable for every $0<\delta<\frac{1}{2}$.

In Section~\ref{sec:eccgood}, we construct PLTDCs that are more rate-efficient than the two codes here. By applying our PLTDCs, we can derive tighter lower bounds for constant-gap \kclique{}.
\section{Reduction from Vector 2CSP to Gap \kclique{} via PLTDCs}
\label{sec:reduction-vcsp2gapclique}

Before giving the reduction from vector 2CSP to constant-gap \kclique{}, we introduce the following lemma about PLTDCs.

\begin{lemma}\label{lem:ltc_num_var}
Let $C:\Sigma_1^{k}\to \Sigma_2^n$ be a PLTDC such that
\begin{itemize}
    \item It is $(q,R_T,\delta,\varepsilon_T)$-parallel locally testable. 
    \item It is $(2,R_D,5\delta,\varepsilon_D)$-parallel locally decodable with respect to all dictator functions $\chi_i$.
\end{itemize}
Here $q>1,0<\delta,\varepsilon_D,\varepsilon_T<1$ are constants and $n>\frac{1}{\delta}$. For each $r \in [R_T]$, let $\text{var}(r)\subseteq [n]$ be the set of $q$ positions queried by the local testing algorithm $\mathcal T$ under randomness $r$. Then for any set $I \subseteq [R_T]$ with size $\ge \varepsilon_T R_T$, we have
\[\left|\bigcup_{r \in I} \text{var}(r)\right| \ge (1-2\delta)\cdot n.\]
\end{lemma}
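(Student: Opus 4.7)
Proceed by contradiction: assume $|V| < (1-2\delta)n$, so $|[n] \setminus V| > 2\delta n$. The strategy is to build a word $w$ that (i)~fools every test indexed by $I$ because it agrees with a chosen codeword $c$ on all of $V$, and (ii)~is nevertheless close to some \emph{other} codeword, contradicting a minimum-distance lower bound that I first extract from local decodability.

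The crucial first step is to argue that any two distinct codewords of $C$ disagree on at least $n/2$ positions. Fix distinct $c_1 = C(m_1)$ and $c_2 = C(m_2)$ and pick $i^\star$ with $\chi_{i^\star}(m_1) \ne \chi_{i^\star}(m_2)$. By perfect completeness of the 2-query decoder $\mathcal D_{\chi_{i^\star}}$, every randomness $r$ has the decoder return $\chi_{i^\star}(m_1)$ on $c_1$ and $\chi_{i^\star}(m_2)$ on $c_2$. Let $A := \{j : c_1[j] \ne c_2[j]\}$; if for some $r$ both queried positions lay in $[n]\setminus A$, the decoder would see identical inputs on $c_1$ and $c_2$ and therefore produce the same symbol, contradicting the distinct targets. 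Hence every $r$ queries at least one position in $A$. Smoothness says each position is queried by exactly $2R_D/n$ of the $R_D$ randomnesses, so summing across $r$ gives $R_D \le \sum_{j \in A}(2R_D/n) = 2|A|R_D/n$, i.e. $\Delta(c_1,c_2) = |A|/n \ge 1/2$.

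Now construct $w$. Pick any codeword $c \in \mathrm{Im}(C)$. Since $n > 1/\delta$, we have $|[n]\setminus V| > 2\delta n \ge \lfloor \delta n\rfloor + 1$; choose $T \subseteq [n]\setminus V$ with $|T| = \lfloor \delta n\rfloor + 1$ and put
\[
w[j] = c[j] \text{ for } j \notin T, \qquad w[j] \in \Sigma_2 \setminus \{c[j]\} \text{ for } j \in T.
\]
Then $w|_V = c|_V$ and $\delta < \Delta(w,c) \le \delta + 1/n$. For every $r \in I$ the tester $\mathcal T$ queries only $\text{var}(r) \subseteq V$, sees exactly the same values on $w$ as on $c$, and accepts by perfect completeness. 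Therefore $\Pr_r[\mathcal T(w)=1] \ge |I|/R_T \ge \varepsilon_T$, and local testability produces some $c^\star \in \mathrm{Im}(C)$ with $\Delta(w,c^\star) \le \delta$.

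From $\Delta(w,c^\star) \le \delta < \Delta(w,c)$ we obtain $c^\star \ne c$, so the first step forces $\Delta(c,c^\star) \ge 1/2$. Yet the triangle inequality gives $\Delta(c,c^\star) \le \Delta(w,c) + \Delta(w,c^\star) \le 2\delta + 1/n$, which is strictly less than $1/2$ in the regime where the lemma is invoked (the small-$\delta$ setting, e.g.~$\delta < 1/12$ as in Theorem~\ref{thm:reduction-intro}, together with $n > 1/\delta$). This is the desired contradiction. I expect the main obstacle to be the first step --- extracting a nontrivial minimum-distance bound purely from 2-query smooth local decodability, since the PLTDC definition never assumes a good distance directly; once that bound is in hand, Steps~2--3 form a clean ``fool-and-triangulate'' argument that uses only the fact that the tests indexed by $I$ touch nothing outside $V$.
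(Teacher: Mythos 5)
Your proof is correct in the regime where the lemma is used, but it takes a genuinely different route from the paper and has a slightly narrower range of validity.

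The paper's proof perturbs the codeword $\mathbf c$ \emph{twice}: it builds two strings $\mathbf s_1,\mathbf s_2$ that agree with $\mathbf c$ on $\mathrm{var}_I$ and satisfy $2\delta < \Delta(\mathbf s_1,\mathbf s_2) < 3\delta$, rounds each to a codeword $\mathbf c_1,\mathbf c_2$, concludes $\mathbf c_1\neq\mathbf c_2$ and $\Delta(\mathbf c_1,\mathbf c_2)<5\delta$, and then gets a contradiction from the $(5\delta,\varepsilon_D)$ decoding condition: $\mathcal D_{\chi_i}(\mathbf c_1)$ must output $\mathbf m_1[i]$ with probability $1$ (completeness) yet also $\mathbf m_2[i]$ with probability $\ge \varepsilon_D>0$ (decodability). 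You instead factor out a standalone distance bound---that any smooth 2-query locally decodable code with perfect completeness has relative distance $\ge 1/2$, proved by noting each decoding randomness must query the disagreement set $A$ and then double-counting via the smoothness identity $\sum_{j\in A}\Pr_r[j\text{ queried}]=2|A|/n\ge 1$---and then need only \emph{one} perturbed string $w$, which rounds to a codeword $c^\star\neq c$ with $\Delta(c,c^\star)\le 2\delta+1/n$, contradicting the distance bound. A notable feature of your route is that it never invokes the $(5\delta,\varepsilon_D)$ parameters at all; only perfect completeness and smoothness of the decoder are used, so your Step~1 is a reusable lemma that the paper never states explicitly. The trade-off is that your contradiction requires $2\delta+1/n<1/2$, i.e.\ $\delta<1/6$ given $n>1/\delta$, whereas the lemma as stated allows any $\delta\in(0,1)$. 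You acknowledge this, and it is harmless here since the only application (Theorem~\ref{thm:vcsp2gapclique}) takes $\delta<1/12$, but it does mean your argument does not literally cover the full stated hypothesis. If you want a version that is uniform in $\delta$, mirroring the paper's two-string construction (which only needs $n>1/\delta$ and the $5\delta$ decoding hypothesis) is the cleaner fit.
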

\begin{proof}
    Suppose by contradiction that the statement is not true, i.e., for some $I \subseteq [R_T]$ with size $\ge \varepsilon_T R_T$, the set $\text{var}_I:=\bigcup_{r \in I}\text{var}(r)$ has size $<(1-2\delta)\cdot n$.
    We first show that the distance of code $C$ is at most $5\delta$, then show a code with small distance cannot be locally decodable.

    Take an arbitrary codeword $\mathbf c \in \text{Im}(C)$. We construct two strings $\mathbf s_1,\mathbf s_2 \in \Sigma_2^n$ by setting $\mathbf s_1[i]=\mathbf s_2[i]=\mathbf c[i]$ for every $i \in \text{var}_I$, and making $2\delta<\Delta(\mathbf s_1,\mathbf s_2)\le 2\delta+\frac{1}{n}<3\delta$. This is possible since $|\text{var}_I|<(1-2\delta)\cdot n$ and $\Delta(\mathbf s_1,\mathbf s_2)$ is an integer multiple of $\frac{1}{n}$. Then, since for any randomness $r \in I$, $\mathbf s_1$ and $\mathbf s_2$ agree with $\mathbf c$ on positions $\text{var}(r)\subseteq \text{var}_I$, and $\mathcal T$ accepts $\mathbf c$ with probability 1, we have
    \[\left\{\begin{aligned}
    \Pr_{r \in [R_T]}[\mathcal T(\mathbf s_1)=1]\ge \varepsilon_T\\
    \Pr_{r \in [R_T]}[\mathcal T(\mathbf s_2)=1]\ge \varepsilon_T 
    \end{aligned}\right.,\]
    which implies there exists $\mathbf c_1,\mathbf c_2 \in \text{Im}(C)$ satisfying $\Delta(\mathbf c_1,\mathbf s_1) \le \delta$ and $\Delta(\mathbf c_2,\mathbf s_2) \le \delta$. Note that $\mathbf c_1,\mathbf c_2$ must be different codewords since $\Delta(\mathbf s_1,\mathbf s_2) >2\delta$. Furthermore, by triangle inequality, 
    \[\Delta(\mathbf c_1,\mathbf c_2) \le \Delta(\mathbf c_1,\mathbf s_1)+\Delta(\mathbf s_1,\mathbf s_2)+\Delta(\mathbf s_2,\mathbf c_2) < \delta+3\delta+\delta = 5\delta.\]

    Now suppose $\mathbf c_1=C(\mathbf m_1)$ and $\mathbf c_2=C(\mathbf m_2)$ for some different messages $\mathbf m_1,\mathbf m_2 \in \Sigma_1^{k}$, and $\mathbf m_1[i]\neq \mathbf m_2[i]$ for some $i \in [k]$. By the perfect completeness of the local decoding algorithm $\mathcal D_{\chi_i}$, we have $\Pr[\mathcal D_{\chi_i}(\mathbf c_1)=\mathbf m_1[i]]=1$. However, since $\mathbf c_1$ is $(5\delta)$-close to $\mathbf c_2$, by the local decodability of $C$, we have $\Pr[\mathcal D_{\chi_i}(\mathbf c_1)=\mathbf m_2[i]] \ge \varepsilon_D>0$, a contradiction.
    
\end{proof}

\begin{theorem}\label{thm:vcsp2gapclique}
Let $C : \Sigma_1^{k}\to \Sigma_2^{k'}$ be a PLTDC such that
\begin{itemize}
    \item It is $(q,R_T,\delta,\varepsilon_T)$-parallel locally testable.
    \item It is $(2,R_D,5\delta,\varepsilon_D)$-smooth parallel locally decodable with respect to all $\chi_i$ and $\psi_{i,j}$.
\end{itemize}
Here $q>1,0<\varepsilon_D,\varepsilon_T<1, 0<\delta<\frac{1}{12}$ are constants, and $k',R_T,R_D$ satisfy  $\frac{1}{\delta}<k'\le R_T,R_D\le \left(k'\right)^{O(1)}$ and $\varepsilon_T R_T\ge 2$.
Then, there exists a reduction which takes as input any vector 2CSP instance $\Pi=(X,\Sigma,\Phi)$ with $|X|= k$, $\Sigma=\Sigma_1^{t}$, and outputs a graph $G=(V,E)$ with $|V| \le K\cdot |\Sigma_2|^{tq}$ and the following properties in $|V|^{O(1)}$-time, where $K=R_T$:
\begin{itemize}
\item If $\Pi$ is satisfiable, then $G$ contains a size-$K$ clique;
\item If $\Pi$ is not satisfiable, then $G$ contains no clique of size $\varepsilon_T K$.
\end{itemize}
\end{theorem}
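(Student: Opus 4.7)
The plan is to build $G$ as $K = R_T$ independent sets indexed by the testing randomness of the parallel tester for $C^{\odot t}$, so that a large clique in $G$ encodes a word close to a genuine codeword, and carefully chosen edge removals force the decoded message to satisfy every constraint of $\Pi$. Concretely, for each $r \in [R_T]$, the $r$-th independent set $V_r$ will consist of all $q$-tuples in $(\Sigma_2^t)^q$ that make the local tester $\mathcal T$ accept when read at the $q$ positions $\mathrm{var}(r) \subseteq [k']$ queried under $r$, so each vertex specifies a partial assignment to these positions. Between distinct parts $V_r, V_{r'}$, I start from the complete bipartite graph and delete two kinds of edges: (i) \emph{inconsistency edges}, where the two vertices disagree on the value assigned to some index in $\mathrm{var}(r) \cap \mathrm{var}(r')$; and (ii) \emph{decoding-violation edges}, where the $\le 2q$ positions jointly specified by $u,v$ cover both queries of $\mathcal D_{\chi_i}$ (respectively $\mathcal D_{\psi_{i,j}}$) under some decoding randomness $r_D \in [R_D]$ and the resulting value falls outside $S_i$ (respectively $S_{i,j}$). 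The vertex bound $|V| \le R_T \cdot |\Sigma_2|^{tq} = K \cdot |\Sigma_2|^{tq}$ is immediate, and the edges can be enumerated in $|V|^{O(1)}$ time since $R_D \le (k')^{O(1)}$.

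For completeness, given a satisfying assignment $(x_1,\ldots,x_k) \in (\Sigma_1^t)^k$ of $\Pi$, I take $\mathbf z := C^{\odot t}(x_1,\ldots,x_k)$ and select, for each $r$, the canonical vertex $v_r \in V_r$ that reads $\mathbf z$ at the positions $\mathrm{var}(r)$; perfect completeness of $\mathcal T$ places $v_r$ in $V_r$, agreement with $\mathbf z$ eliminates rule (i), and perfect completeness of both $\mathcal D_{\chi_i}$ and $\mathcal D_{\psi_{i,j}}$, combined with $\Pi$ being satisfied, eliminates rule (ii). This yields a clique of size exactly $K$.

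For soundness, suppose $D$ is a clique of size $\ge \varepsilon_T K$; let $I \subseteq [R_T]$ be the set of randomness values whose part is represented in $D$, so $|I| \ge \varepsilon_T R_T$. Consistency in $D$ lets me define a word $w \in (\Sigma_2^t)^{k'}$ by reading $D$-vertices on $U := \bigcup_{r \in I} \mathrm{var}(r)$ and extending arbitrarily elsewhere; every $r \in I$ makes $\mathcal T$ accept $w$, hence $\Pr_r[\mathcal T(w)=1] \ge \varepsilon_T$, and parallel local testability yields a codeword $\mathbf z = C^{\odot t}(x_1,\ldots,x_k)$ with $\Delta(w,\mathbf z) \le \delta$. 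Lemma~\ref{lem:ltc_num_var} gives $|U| \ge (1-2\delta)k'$, so the positions $U^* \subseteq U$ on which $w$ and $\mathbf z$ agree satisfy $|U^*| \ge (1-3\delta)k'$. For any unary constraint $x_i \in S_i$, smoothness of $\mathcal D_{\chi_i}$ (each individual position is queried with probability $2/k'$) gives, by a union bound over the at most $3\delta k'$ positions outside $U^*$, that $\Pr[\text{both queries land in } U^*] \ge 1 - 6\delta > 0$, using $\delta < 1/12$. Fixing such a decoding randomness $r_D$, both queried positions are covered by vertices $u,v \in D$, whose values agree with $\mathbf z$ there, so the value the decoder would output from $u,v$ coincides with $\mathcal D_{\chi_i}(\mathbf z) = x_i$ (perfect completeness); since the edge $(u,v)$ survived rule (ii), this forces $x_i \in S_i$. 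The binary constraints are handled identically via $\mathcal D_{\psi_{i,j}}$ and give $x_i + x_j \in S_{i,j}$. Thus $(x_1,\ldots,x_k)$ satisfies $\Pi$, contradicting the assumption.

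The main technical obstacle is the careful alignment of distance parameters: local testability delivers a $\delta$-close codeword, the Lemma~\ref{lem:ltc_num_var} coverage bound uses $5\delta$ as the local decoding radius for the dictator decoders, and the union bound over the smooth $2$-query decoder consumes an additional $3\delta$, which together force the constant threshold $\delta < 1/12$ and the specific $5\delta$-decodability hypothesis stated in the theorem. The second subtlety is ensuring that rule (ii) fires for \emph{every} pair $(u,v)$ whose union of specified positions covers the decoder's queries, so that survival of the edge in the clique genuinely certifies correctness of the decoded value regardless of which representative pair in $D$ realizes the coverage; this is why smoothness (rather than mere local decodability) of each of $\mathcal D_{\chi_i}$ and $\mathcal D_{\psi_{i,j}}$ is essential.
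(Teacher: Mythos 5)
Your proof is correct and follows essentially the same route as the paper: identical vertex/edge construction, identical completeness argument, and the same soundness chain (clique $\to$ consistent word $\to$ testability gives a nearby codeword $\to$ Lemma~\ref{lem:ltc_num_var} gives coverage $(1-2\delta)k'$ $\to$ agreement set $U^*$ of size $\ge(1-3\delta)k'$ $\to$ smoothness finds a decoding randomness with both queries in $U^*$ $\to$ surviving edge forces the constraint). Two small remarks: your union bound yields $1-6\delta$ rather than the paper's looser $1-12\delta$ (both positive under $\delta<1/12$), so you are actually a bit tighter; and in the final step you should explicitly invoke $\varepsilon_T K\ge 2$ to guarantee two \emph{distinct} vertices $u\ne v\in D$ covering the two queried positions (if a single vertex covers both, pick any second vertex of $D$ to form the pair), since the argument appeals to the survival of an \emph{edge} $(u,v)$.
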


\begin{proof}
Let $X=\{x_1,\ldots,x_{k}\}$ be the variable set in $\Pi$, and let $\sigma:X \to \Sigma_1^t$ be any assignment of $X$. The vertices in $G$ are supposed to represent the local testing results of the $t$-parallelized codeword $C^{\odot t}(\sigma(x_1),\ldots,\sigma(x_{k}))$. The edges in $G$ are used to check the consistency between different tests and to check whether the locally decoded value $\sigma(x_i)$ (resp. $\sigma(x_i)+\sigma(x_j)$) satisfy the constraints in $\Phi$.

Below, we present the construction of $G$ by specifying its vertices and edges, then prove the completeness and soundness.

\medskip
    \noindent \textbf{Vertices of $G$.}
    The vertex set $V$ consists of $R_T$ parts $V=V_1\dot\cup\ldots\dot\cup V_{R_T}$, one for each randomness of the $t$-parallelized local testing algorithm $\mathcal T$. For each randomness $r \in [R_T]$, $\mathcal T$ queries $q$ positions in the codeword based on $r$, and either accepts or rejects depending on the $q$ queried symbols in $\Sigma_2^t$. We let $V_r$ contain all accepting configurations of the $q$ symbols under randomness $r$, which is a subset of $(\Sigma_2^t)^q$. For each vertex $v \in V_r$, we use ${\rm var}(v)$ to denote the $q$ queried positions under randomness $r$, and use a  function $\pi_v:\text{var}(v)\to \Sigma_2^t$ to denote its partial assignment.
    
\medskip

    \noindent \textbf{Edges of $G$.}
    For illustration, we specify the edge set by removing edges from a complete graph. The procedure is as follows.
    \begin{enumerate}[(1)]
        \item We remove edges that violate the consistency. Specifically, we remove the edge between any $v,w \in V$ if there exists some position $i\in \text{var}(v) \cap \text{var}(w)$ such that $\pi_{v}(i) \neq \pi_{w}(i)$. Note that after this step, each $V_r$ is an independent set.
        
        \item We remove edges whose locally decoding result violates the constraints in $\Phi$. For any different $v,w \in V$, if edge $(v,w)$ still remains before this step, then $\pi_{v}$ and $\pi_{w}$ must be consistent on $\text{var}(v) \cap \text{var}(w)$. We use $\pi_{v\cup w}:\text{var}(v)\cup \text{var}(w) \to \Sigma_2^t$ to denote the union of two partial assignments. From the parallel locally decodability, we have a 2-query locally decoding algorithm for each $\chi_i$ and $\psi_{i,j}$. Suppose for some $\chi_i$ (resp. $\psi_{i,j}$) and some randomness $r \in R_D$, the decoding algorithm $\mathcal D_{\chi_i}$ (resp. $\mathcal D_{\psi_{i,j}}$) queries positions $a_1,a_2 \in [k']$. If $\{a_1,a_2\}\subseteq \text{var}(v) \cup \text{var}(w)$ and the decoding result based on $\pi_{v\cup w}(a_1)$ and $\pi_{v\cup w}(a_2)$ does not satisfy the constraint in $\Pi$, i.e., is not in $S_i$ (resp. $S_{i,j}$), then we remove the edge between $v$ and $w$. 

    \end{enumerate}

\medskip
    \noindent \textbf{Completeness.} Suppose $\Pi$ is satisfiable and $\sigma:X \to \Sigma_1^t$ is a solution. Let $\mathbf c=C^{\odot t}(\sigma(x_1),\ldots,\sigma(x_{k}))$ be the encoding of the solution. For each randomness $r \in [R_T]$, we pick the vertex $v \in V_r$ such that for every $i \in \text{var}(v)$, $\pi_v(i)=\mathbf c[i]$. By our construction and the perfect completeness of local testing and  decoding, it is easy to verify such $R_T$ vertices exist and form a clique of size $K=R_T$.
    
\medskip
    \noindent \textbf{Soundness.} Let $D$ be the maximal clique in $G$. Below we prove that if $|D|>\varepsilon_TK\ge 2$, then there is a satisfying assignment of $\Pi$. 

    We define \[\text{var}_D:=\bigcup_{v \in D} \text{var}(v).\] By item (1) in the edge description, the partial assignments corresponding to vertices in $D$ must be consistent on $\text{var}_D$. Thus, we can also define a function $\pi_{D}:\text{var}_D \to \Sigma_2^t$ such that for every $i \in \text{var}_D$, \[\pi_{D}(i):=\pi_{v}(i)\]
    where $v$ is a vertex in $V_r \cap D$ such that $\text{var}(r)$ contains $i$. We extend $\pi_D$ to the entire $[k']$ by setting $\pi_D(i):=\mathbf 0$ for every $i \not\in \text{var}_D$.

    Note that by our construction of the vertices, $\pi_D=(\pi_D(1),\ldots,\pi_D(k'))\in (\Sigma_2^t)^{k'}$ is a string which can pass the local testing with probability at least $\frac{|D|}{R_T}\ge \varepsilon_T$. Thus, since $C$ is $(q,R_T,\delta,\varepsilon_T)$-parallel locally testable, we have $\Delta(\pi_D,\text{Im}(C^{\odot t}))\le \delta$. In other words, there exists an assignment $\sigma':X \to \Sigma_1^t$ and a set $I$ of indices with $|I| \ge (1-\delta)\cdot k'$ such that
    \[\pi_D(i)=C^{\odot t}(\sigma'(x_1),\ldots,\sigma'(x_{k}))[i], \forall i \in I.\]
    We further define $I'=\text{var}_D \cap I$. By Lemma \ref{lem:ltc_num_var}, we have $|\text{var}_D| \ge (1-2\delta)\cdot k'$ and thus $|I'|\ge (1-3\delta)\cdot k'$.

    In the following, we prove $\sigma'$ is a satisfying assignment of $\Pi$ by showing that $\sigma'(x_i) \in S_i$ (resp. $\sigma'(x_i)+\sigma'(x_j) \in S_{i,j}$) for every $i,j \in [k]$. Consider the local decoding algorithm $\mathcal D_{\chi_i}$ (resp. $\mathcal D_{\psi_{i,j}}$). From the smoothness of $\mathcal D_{\chi_i}$ (resp. $\mathcal D_{\psi_{i,j}}$), the probability that the two queried positions $a_1,a_2\in[k']$ both lie in $I'$ can be lower bounded as:
    \begin{align*}
        & \Pr_{r \in [R_D]}[a_1\in I' \land a_2 \in I'] \\ = &\ 1-\Pr_{r \in [R_D]}[a_1 \notin I' \lor a_2 \notin I'] \\
        \ge &\ 1-2\sum_{i \notin I'} \Pr_{r \in [R_D]}[i\text{ is queried}] \\
        = &\ 1-2 (k'-|I'|)\cdot \frac{2}{k'} \\
        \ge &\ 1-12\delta>0.
    \end{align*}
    
    Thus, there exists some $r \in [R_D]$ such that the queried positions $a_1,a_2$ under randomness $r$ are both in $I'$. By the perfect completeness of $\mathcal D_{\chi_i}$ (resp. $\mathcal D_{\psi_{i,j}}$),  the decoding result from them is exactly $\sigma'(x_i)$ (resp. $\sigma'(x_i)+\sigma'(x_j)$). 
    Since $I'\subseteq \text{var}_D=\bigcup_{v \in D}\text{var}(v)$ and the clique size $\varepsilon_T K \ge 2$, there exists two different vertices $v,w \in D$ such that $a_1,a_2\in \mathrm{var}(v)\cup \mathrm{var}(w)$, and there is an edge between $v$ and $w$. Hence, by the item (2) in the edge description, we have that $\sigma'(x_i) \in S_i$ (resp. $\sigma'(x_i)+\sigma'(x_j) \in S_{i,j}$).

    Therefore, $\sigma'$ is a satisfying assignment of $\Pi$, contradicting the fact that $\Pi$ is not satisfiable.
\end{proof}
\section{Construction of PLTDCs}
\label{sec:eccgood}
In this section, we construct two new PLTDCs which generalize the Derivative code in \cite{WY07} (see Section \ref{sec:2pir}) from degree-3 polynomials to higher odd-degree polynomials. We first specify some notations in Section \ref{sec:pldtc-notation}. Then, we give a construction of the first PLTDC in Section \ref{sec:pldtc1}, which uses polynomials with constant degree. After that, we show the construction of the second PLTDC in Section \ref{sec:pldtc2}, which generalizes the first one by lifting the degree to super-constant, using a more sophisticated local testing regime.

\subsection{Notations}\label{sec:pldtc-notation}
Throughout this section, we consider $t$-tuple of $m$-variate degree-$d$ polynomials over a finite field $\mathbb F$ of prime size. Let $\mathbb{F}^{\le d}[x_1,\ldots,x_m]$ be the set of $m$-variable polynomials with degree at most $d$.
When $t,m,d,\mathbb F$ are clear in the context, we use 
\[
 \mathcal F= \left\{(f_1,\ldots,f_t) : \forall i\in [t], f_i\in\mathbb{F}^{\le d}[x_1,\ldots,x_m] \right\}
\]
to denote the set of such polynomial tuples. Furthermore, we use $\mathcal L^m=\{\ell(\mathbf x,\mathbf h)\left| \mathbf x,\mathbf h\in \mathbb F^m\right.\}$ to denote the set of all lines in $\mathbb F^m$, where $\ell(\mathbf x,\mathbf h)=\{\mathbf x+\lambda \mathbf h\left| \lambda \in \mathbb F\right.\}$. Note that different $(\mathbf x,\mathbf h)$ pairs may lead to the same line, for example, we have $\ell(\mathbf x,2\mathbf h)=\ell(\mathbf x,\mathbf h)=\ell(\mathbf x+\mathbf h,\mathbf h)$.

Let $f\in \mathcal F$ be an $m$-variate degree-$d$ polynomial tuple. For any $\mathbf x,\mathbf h \in \mathbb F^m$, the function $g:\mathbb F \to \mathbb F^t$ defined by $g(\lambda):=f(\mathbf x+\lambda \mathbf h)$ is a univariate degree-$d$ polynomial tuple \cite{RS96}. In other words, there exist coefficients tuples $\mathbf c_0,\ldots,\mathbf c_d \in \mathbb F^{t}$, such that for every $\lambda \in \mathbb F$, $g(\lambda)=f(\mathbf x+\lambda\mathbf h) = \sum_{i=0}^{d} \mathbf c_i \cdot \lambda^i$. 

For every $r \in \mathbb N$, we use $S^m_r$ to denote the set of non-negative integer vectors with sum equal to $r$, i.e., $S^m_r:=\{\mathbf v \in \mathbb N^m \mid \sum_{i=1}^m \mathbf v[i]=r\}$. Let $S^m_{\le r}=\bigcup_{0 \le i \le r} S^m_i$, we have $|S^m_r|=\binom{m+r-1}{r}$ and $|S^m_{\le r}|=\sum_{i=0}^r |S^m_i|=\binom{m+r}{r}$. 

We further use $\mathbf x^{\mathbf v}$ to denote the monomial $\prod_{i=1}^m \mathbf x[i]^{\mathbf v[i]}$, and let $f$ be an $m$-variate polynomial on $\mathbf x$, we use $\frac{\partial^r f}{\partial \mathbf x^{\mathbf v}}$ to denote the order-$r$ partial derivative $\frac{\partial^r f}{(\partial \mathbf x[1])^{\mathbf v[1]}\ldots(\partial \mathbf x[m])^{\mathbf v[m]}}$ for every $\mathbf v \in S^m_r$.

\subsection{PLTDC I: Derivative Code with Constant Degree}
\label{sec:pldtc1}
\begin{theorem}
For any integer $r>0$, let $d=2r+1$ and let $\mathbb F$ be a finite field of prime size $|\mathbb F| > 2 d$. For every integer $m>0$ satisfying $|\mathbb F|^m \ge \binom{m+d}{d}$, there is a PLTDC $C:\mathbb F^{\binom{m+d}{d}} \to \left(\mathbb F^{\binom{2m+r}{r}}\right)^{|\mathbb F|^{2m}}$ such that:
\begin{itemize}
    \item It is $\left(6d+7,|\mathbb F|^{18m},\delta,1-\frac{\delta}{2}\right)$-parallel locally testable for every $0<\delta<\frac{1}{(d+2)^2}$.
    \item It is $(2,|\mathbb F|^{2m+2},\delta,1-2\delta)$-smooth parallel locally decodable with respect to all $\chi_i$ and $\psi_{i,j}$, for every $0<\delta<\frac{1}{2}$.
\end{itemize}
Furthermore, $C$ can be constructed explicitly in time depending only on $|\mathbb F|,m,d$.
\end{theorem}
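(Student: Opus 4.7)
The plan is to generalize the derivative code of Section~\ref{sec:2pir} from cubic polynomials to arbitrary odd degree $d = 2r+1$. Fix $k = \binom{m+d}{d}$ interpolation points $\mathbf u_1,\ldots,\mathbf u_k \in \mathbb F^m$ in general position, so that any $m$-variate degree-$d$ polynomial is uniquely determined by its evaluations there (such points exist since $|\mathbb F|^m \ge \binom{m+d}{d}$). Given a message $\mathbf a \in \mathbb F^k$, let $p \in \mathbb F^{\le d}[x_1,\ldots,x_m]$ be the unique polynomial with $p(\mathbf u_i) = a_i$, and set $q(\mathbf x,\mathbf y) := p(\mathbf x) + p(\mathbf y)$. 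The codeword at position $(\mathbf x,\mathbf y) \in \mathbb F^{2m}$ stores $q(\mathbf x,\mathbf y)$ together with every higher-order partial derivative $\frac{\partial^s q}{\partial \mathbf z^{\mathbf v}}(\mathbf x,\mathbf y)$ for $\mathbf v \in S^{2m}_{\le r}$, giving an alphabet of size $|\mathbb F|^{\binom{2m+r}{r}}$. Decoding $a_i$ reduces to recovering $q(\mathbf u_i,\mathbf 0)$ and decoding $a_i + a_j$ reduces to recovering $q(\mathbf u_i,\mathbf u_j)$, so both decoding tasks are instances of recovering a single point value of $q$.

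\textbf{Local decoding.} To recover $q(\mathbf x_0,\mathbf y_0)$, I would pick a uniformly random direction $(\mathbf d_1,\mathbf d_2) \in \mathbb F^{2m}$ and two distinct nonzero scalars $\lambda_1,\lambda_2 \in \mathbb F \setminus \{0\}$, then query the codeword at the two points $(\mathbf x_0 + \lambda_i\mathbf d_1, \mathbf y_0 + \lambda_i\mathbf d_2)$. The univariate restriction $g(\lambda) := q(\mathbf x_0 + \lambda\mathbf d_1, \mathbf y_0 + \lambda\mathbf d_2)$ has degree at most $d = 2r+1$, and by the chain rule each $g^{(s)}(\lambda_i)$ for $0 \le s \le r$ is a fixed multinomial-weighted sum of the order-$s$ partial derivatives at the queried position, all of which are available in the codeword. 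This yields $2(r+1) = d+1$ linear equations in the $d+1$ coefficients of $g$, whose coefficient matrix is a confluent Vandermonde matrix with nodes $\lambda_1,\lambda_2$ each of multiplicity $r+1$. Its determinant equals $\bigl(\prod_{i=0}^{r} i!\bigr)^2 (\lambda_2 - \lambda_1)^{(r+1)^2}$; the hypothesis $|\mathbb F| > 2d$ ensures each $i!$ is nonzero in the prime field $\mathbb F$, so the matrix is invertible and we recover $g$ in full, in particular $g(0) = q(\mathbf x_0,\mathbf y_0)$. Smoothness is immediate from the uniform choice of direction, and a union bound shows that if the received word is $\delta$-close to a codeword then both queried points are clean with probability at least $1 - 2\delta$. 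The same argument applied coordinate-wise gives the parallel version.

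\textbf{Local testing.} The test will be the conjunction of three sub-tests, mirroring Section~\ref{sec:2pir} but adapted to degree $d$. First, a parallel low-degree test of degree $d$ (the parallel constant-degree Rubinfeld-Sudan test from~\cite{DBLP:conf/icalp/LinRSW22}), using $d+2$ queries along a random line to certify that every coordinate of the received word is close to a $(2m)$-variate polynomial of degree at most $d$; passing with probability $\ge 1 - \delta/2$ yields a unique polynomial tuple $\widetilde f$ agreeing with the received word on a $(1-\delta)$-fraction of positions, uniqueness following from $2\delta < 1 - d/|\mathbb F|$. Second, a directional-derivative consistency test using $d+1$ queries along a random line through a random base point $(\mathbf x,\mathbf y)$: interpolate the univariate restriction $g$ from the $d+1$ sampled values, and for each $0 \le s \le r$ check that $g^{(s)}(0)$ equals the multinomial-weighted sum of the order-$s$ partial-derivative coordinates stored at $(\mathbf x,\mathbf y)$. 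Third, a decomposition test using $4(d+1)$ queries that certifies $q(\mathbf x,\mathbf y) = q(\mathbf x,\mathbf 0) + q(\mathbf 0,\mathbf y)$ and $q(\mathbf x,\mathbf 0) = q(\mathbf 0,\mathbf x)$ by interpolating $q$ via random lines through the relevant points. The total query complexity is $(d+2) + (d+1) + 4(d+1) = 6d + 7$, matching the claimed bound.

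\textbf{Soundness and main obstacle.} Soundness of Tests 2 and 3 will follow from Schwartz-Zippel applied to polynomial identities in the variables $(\mathbf x,\mathbf y,\mathbf d_1,\mathbf d_2)$: if some stored partial-derivative coordinate of $\widetilde f$ is inconsistent with the value coordinate, the consistency equation reduces to a nonzero polynomial identity of total degree $O(d)$, which fails with probability at most $O(d/|\mathbb F|)$; combining with the $O(d)\cdot\delta$ probability that some queried point falls outside the $(1-\delta)$-agreement set of the low-degree-test conclusion gives the overall soundness $\varepsilon_T = 1 - \delta/2$ under the restriction $\delta < 1/(d+2)^2$. The main technical obstacle I foresee is the bookkeeping for the higher-order chain rule and the confluent-Vandermonde determinant, in particular ensuring that the factorial denominators remain invertible and that the Schwartz-Zippel error stays bounded under the hypothesis $|\mathbb F| > 2d$; once these are in place, the construction and analysis follow the template of the cubic case in Section~\ref{sec:2pir}.
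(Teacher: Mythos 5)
Your proposal is correct and matches the paper's proof essentially step for step: the same construction (interpolation through $\binom{m+d}{d}$ points, $q(\mathbf x,\mathbf y)=p(\mathbf x)+p(\mathbf y)$, storing all order-$\le r$ partials), the same three-stage tester (parallel low-degree test, directional-derivative consistency, decomposition test) with the same $(d+2)+(d+1)+4(d+1)=6d+7$ query budget, and the same 2-query decoder via a confluent Vandermonde system, whose determinant $(\lambda_2-\lambda_1)^{(r+1)^2}\prod_{i=1}^r(i!)^2$ the paper verifies in Appendix~\ref{lem:det-appendix} exactly as you anticipate, with the factorials invertible because $|\mathbb F|>2d>r$.
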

We first show the construction of $C$, then prove its local testability and local decodability, respectively.

\subsubsection{Construction}\label{sec:pldtc1-construct}
Since $|\mathbb F| >d$, the $m$-variate monomials $\{\mathbf x^{\mathbf v} \mid \mathbf v \in S^m_{\le d}\}$ are linearly independent functions. Together with the fact that $|\mathbb F|^m \ge \binom{m+d}{d}=|S^m_{\le d}|$, this implies the $(|\mathbb F|^m) \times \binom{m+d}{d}$ sample matrix $A$, where for every $\mathbf u \in \mathbb F^m, \mathbf v \in S^m_{\le d}$, $A[\mathbf u,\mathbf v]=\mathbf u^{\mathbf v}$, has full column rank. Thus there exists a set of points $\mathbf u_1,\ldots, \mathbf u_{\binom{m+d}{d}} \in \mathbb F^m$, such that the value on those points can uniquely determine an $m$-variate degree-$d$ polynomial. Such sets of points could be obtained in time $|\mathbb F|^{m\cdot \binom{m+d}{d}}$ by enumerating all such possible sets of points.

For a message $(a_1,\ldots,a_{\binom{m+d}{d}}) \in \mathbb F^{\binom{m+d}{d}}$, let $p$ be the unique $m$-variate degree-$d$ polynomial satisfying $p(\mathbf u_i)=a_i$ for every $1 \le i \le \binom{m+d}{d}$. We define a function $q:\mathbb F^{2m} \to \mathbb F$ such that $q(\mathbf x,\mathbf y)=p(\mathbf x)+p(\mathbf y)$ for every $\mathbf x,\mathbf y \in \mathbb F^m$. 
    
The code $C$ maps the message to a string of length $|\mathbb F|^{2m}$, where for every $\mathbf z \in \mathbb F^{2m}$, the $\mathbf z$-th symbol of the codeword is the collection of all order-$(\le r)$ partial derivatives of $q$ evaluated at $\mathbf z$. Specifically, denote the variables of $q$ as $\mathbf x$, then for every $\mathbf z \in \mathbb F^{2m}$,
\begin{equation}\label{eq:pldtc1}
    C\left(a_1,\ldots,a_{\binom{m+d}{d}}\right)[\mathbf z]
    =\left(\frac{\partial q^i}{\partial \mathbf x^{\mathbf v}}\ \middle|\  0 \le i \le r,\mathbf v \in S_i\right)[\mathbf z] \in \mathbb F^{|S^{2m}_{\le r}|}=  \mathbb F^{\binom{2m+r}{r}}.
\end{equation}

\subsubsection{Proof of Local Testability}
We first introduce the following parallel low-degree test, which is a key component in our local testing algorithm.

\begin{lemma}[\cite{DBLP:conf/icalp/LinRSW22}]\label{lem:pldt}
    Suppose $|\mathbb F|>2d$. There is an algorithm $\mathcal T$ which given access to a function $f:\mathbb F^m \to \mathbb F^t$, makes $(d+2)$ uniform queries, and outputs a bit such that
    \begin{itemize}
        \item (Completeness) If $f \in \mathcal F$, then $\Pr[\mathcal T(f)=1]=1$.
        \item (Soundness) If $\Delta(f,\mathcal F) \ge \delta$, then $\Pr[\mathcal T(f)=0]\ge \frac{\delta}{2}$ for every $0<\delta<\frac{1}{(d+2)^2}$.
    \end{itemize}
\end{lemma}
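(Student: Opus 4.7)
The algorithm I propose is the natural parallelization of the Rubinfeld--Sudan line test. Sample $\mathbf x, \mathbf h \in \mathbb F^m$ uniformly and independently, fix in advance $d+2$ distinct scalars $\lambda_0,\dots,\lambda_{d+1}\in \mathbb F$ (available since $|\mathbb F|>2d$), query $f(\mathbf x+\lambda_j \mathbf h)$ for each $j$, and accept iff there is a univariate degree-$d$ polynomial tuple $q(\lambda)\in (\mathbb F^{\le d}[\lambda])^t$ matching all $d+2$ evaluations---equivalently, interpolate $q$ coordinate-wise from the first $d+1$ queries and verify the check at $\lambda_{d+1}$. Each query $\mathbf x+\lambda_j\mathbf h$ is marginally uniform in $\mathbb F^m$, so the $d+2$ queries are uniform as required. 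Completeness is immediate because the restriction of any $f\in\mathcal F$ to an affine line is a univariate degree-$d$ tuple.

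For soundness I would follow the classical Rubinfeld--Sudan self-correction argument, but perform every step on $t$-tuples rather than scalars. Define $\tilde f:\mathbb F^m\to \mathbb F^t$ by plurality interpolation: for each $\mathbf x$, let $\tilde f(\mathbf x)$ be the value $\mathbf v \in \mathbb F^t$ maximizing the probability, over a random direction $\mathbf h\in\mathbb F^m$, that the coordinate-wise degree-$d$ interpolant of $f$ on $\ell(\mathbf x,\mathbf h)$ evaluates to $\mathbf v$ at $\mathbf x$. Assuming the parallel test rejects with probability less than $\delta/2$, I would first show $\tilde f$ agrees with $f$ on at least a $1-\delta$ fraction of $\mathbb F^m$ (via a Markov-type argument on the rejection probability conditioned on the base point), and then show that $\tilde f$ restricted to every line is a univariate degree-$d$ polynomial tuple, via a two-line argument: for a random line $L$, a random $\mathbf x\in L$, and a random second line $L'$ through $\mathbf x$, most of the time the interpolant on $L'$ reproduces $\tilde f(\mathbf x)$. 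Combining the two steps puts $\tilde f\in \mathcal F$ and yields $\Delta(f,\tilde f)\le \delta$, contradicting $\Delta(f,\mathcal F)\ge \delta$.

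The main obstacle---and the reason the proof must be written jointly over the $t$-tuple alphabet rather than reduced coordinate-wise to the scalar Rubinfeld--Sudan theorem---is to avoid a factor-$t$ blow-up in the final distance bound, since a naive per-coordinate union bound would only give $\Delta(f,\mathcal F)\le t\cdot\delta$, which is useless because $t$ is arbitrary. The resolution is that the plurality vote and the two-line consistency check are defined with respect to the whole vector $\mathbf v\in\mathbb F^t$ (two interpolants on two lines ``collide'' only when they agree in all $t$ coordinates simultaneously), so the Schwartz--Zippel bound is applied once per test rather than $t$ times, and the regime $\delta<1/(d+2)^2$ is exactly the threshold at which the two-line argument closes. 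Since this lemma is cited from~\cite{DBLP:conf/icalp/LinRSW22}, I would defer to their paper for the full quantitative bookkeeping; the $\delta/2$ rejection bound and the $1/(d+2)^2$ threshold are precisely the parallel analogues of the classical Rubinfeld--Sudan parameters.
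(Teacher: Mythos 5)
The paper does not prove Lemma~\ref{lem:pldt}; it cites it verbatim from~\cite{DBLP:conf/icalp/LinRSW22} and relies on it as a black box (the only low-degree-test proof actually carried out in this paper is the different, 2-query line-vs-point test of Lemma~\ref{lem:line-vs-point} in Appendix~\ref{app:line-vs-point}). So there is no internal proof to compare your argument against; what can be assessed is whether your outline is a plausible reconstruction of the cited result.

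It is. Your algorithm (query $f$ at $d+2$ points on a random line and accept iff they interpolate to a degree-$\le d$ univariate tuple) is exactly the parallel Rubinfeld--Sudan test, your completeness argument is correct, and your self-correction outline via the plurality-vote function $\tilde f$ and a two-line consistency step is the standard small-error analysis, applied to $\mathbb F^t$-valued interpolants. You also correctly pinpoint the one non-trivial adaptation needed in the parallel setting: the correction $\tilde f$ and the consistency events must be defined on the whole vector in $\mathbb F^t$ at once, because the coordinate-wise argument produces bad sets $B_i$ that can be disjoint across coordinates and a union bound would only yield $\Delta(f,\mathcal F)\le t\cdot\delta$, which is vacuous for large $t$. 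Defining $\tilde f(\mathbf x)$ as the $\mathbb F^t$-plurality of line interpolants makes the ``bad set'' a single set independent of $i$ and removes the $t$ factor, which is the whole point of the parallelization. Two small caveats worth being aware of if you were to write this out in full: (i) you need that the plurality winner actually has strict majority $>1/2$ in the regime $\delta<1/(d+2)^2$ (this is what the Markov-plus-two-line argument establishes, and it is also what makes the vector plurality and coordinate-wise pluralities coincide); and (ii) the ``$(d+2)$ uniform queries'' claim requires sampling $\mathbf x$ uniformly so that each $\mathbf x+\lambda_j\mathbf h$ is marginally uniform---you state this and it is correct, including the degenerate $\mathbf h=\mathbf 0$ case. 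Since the statement is imported from~\cite{DBLP:conf/icalp/LinRSW22}, deferring the quantitative constants to that paper, as you do, is appropriate here.
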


Now we prove the parallel local testability of the code $C$.

\begin{lemma}\label{lem:ltc1}
    The code $C$ defined in Section \ref{sec:pldtc1-construct} is $\left(6d+7,|\mathbb F|^{18m},\delta,1-\frac{\delta}{2}\right)$-parallel locally testable for every $0<\delta<\frac{1}{(d+2)^2}$.
\end{lemma}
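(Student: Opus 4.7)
The plan is to generalize the three-stage testing scheme from the degree-3 derivative code in Section~5.4 (the Reed--Muller/Derivative example) to arbitrary odd degree $d=2r+1$, using Lemma~\ref{lem:pldt} (the parallel low-degree test of \cite{DBLP:conf/icalp/LinRSW22}) as a black box in place of the Rubinfeld--Sudan test. Fix $\ell\ge 1$ and a purported codeword $f:\mathbb F^{2m}\to (\mathbb F^{\binom{2m+r}{r}})^{\ell}$. View each of the $\ell\cdot\binom{2m+r}{r}$ coordinate functions of $f$ as a putative $(2m)$-variate polynomial of degree $\le d$; on a genuine codeword the first ``layer'' coordinate is $q(\mathbf x,\mathbf y)=p(\mathbf x)+p(\mathbf y)$ and every other layer is a designated partial derivative $\partial^{i}q/\partial\mathbf x^{\mathbf v}$ with $0\le i\le r$ and $\mathbf v\in S^{2m}_{i}$.

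First, I would run Lemma~\ref{lem:pldt} as Stage~1 on the tuple of all $\ell\cdot\binom{2m+r}{r}$ coordinate functions, at a cost of $d+2$ queries. For any $\delta<1/(d+2)^{2}$, rejection probability below $\delta/2$ yields a unique polynomial tuple $\widetilde f\in\mathcal F^{\ell\cdot\binom{2m+r}{r}}$ with $\Delta(f,\widetilde f)\le\delta$ (uniqueness because two polynomial tuples of degree $\le d$ differ in a $\ge 1-d/|\mathbb F|>2\delta$ fraction of points). Stage~2 then checks, at a random $(\mathbf z,\mathbf h)\in\mathbb F^{2m}\times\mathbb F^{2m}$, that the ``derivative'' coordinates at the base point are really the partial derivatives of the top coordinate in direction $\mathbf h$: pick $d+1$ distinct scalars $\lambda_{0}=0,\lambda_{1},\dots,\lambda_{d}$, query $f(\mathbf z+\lambda_{i}\mathbf h)$ for $i=0,\dots,d$ ($d+1$ queries), interpolate the univariate degree-$d$ polynomial tuple $g(\lambda)$ supported by the top coordinate, and test whether $g'(0),g''(0),\dots,g^{(r)}(0)$ (read off the interpolated coefficients) agree with the directional derivatives $\sum_{\mathbf v\in S^{2m}_{i}}\binom{i}{\mathbf v}\mathbf h^{\mathbf v}\cdot f(\mathbf z)[\mathbf v]$ computed from the partial-derivative coordinates at $\mathbf z$, for every $1\le i\le r$. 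Stage~3 checks the additive split $q(\mathbf x,\mathbf y)=p(\mathbf x)+p(\mathbf y)$ by verifying the two identities $\widetilde f(\mathbf x,\mathbf y)=\widetilde f(\mathbf x,\mathbf 0)+\widetilde f(\mathbf 0,\mathbf y)$ and $\widetilde f(\mathbf x,\mathbf 0)=\widetilde f(\mathbf 0,\mathbf x)$ on the top coordinate; each access to $\widetilde f$ at a target point is implemented by $d+1$ queries along an independent random line and interpolating, so the four accesses use $4(d+1)$ queries in total. The total query count is $(d+2)+(d+1)+4(d+1)=6d+7$, matching the statement, and the randomness space is a product of the three stage randomness spaces, bounded comfortably by $|\mathbb F|^{18m}$.

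For completeness, every stage accepts with probability $1$ on codewords: Stage~1 by the completeness of Lemma~\ref{lem:pldt}; Stage~2 because the chain rule applied to the univariate restriction of a genuine polynomial tuple yields exactly the tested identity at $\lambda=0$; Stage~3 because $q=p(\mathbf x)+p(\mathbf y)$ trivially satisfies the two identities. For soundness, if acceptance probability is $\ge 1-\delta/2$, Stage~1 pins down $\widetilde f$ as above. With probability $\ge 1-(d+1)\delta$ in Stage~2, all $d+1$ line-points lie in the agreement set, so the interpolated $g$ equals $\widetilde f$ restricted to the line; if some derivative-coordinate of $\widetilde f$ fails to equal the correct partial derivative, then LHS $-$ RHS of the Stage~2 equation is a nonzero polynomial of degree $\le d+1$ in $(\mathbf z,\mathbf h)$ and is rejected with probability $\ge 1-(d+1)/|\mathbb F|$ by Schwartz--Zippel; an analogous argument applies to Stage~3 for the additive-split identities. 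Combining these, any $f$ passing with probability $\ge 1-\delta/2$ forces $f$ to be $\delta$-close to the codeword of $(a_{1},\dots,a_{\binom{m+d}{d}})$ where $a_{i}=p(\mathbf u_{i})$.

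The main obstacle is the bookkeeping in Stage~2: we must verify that the order-$i$ directional derivative of a degree-$d$ polynomial along $\mathbf h$ equals a specific linear combination of partial derivatives weighted by monomials in $\mathbf h$, and that after substituting $\widetilde f$ the identity reduces to a single polynomial identity in the joint variables $(\mathbf z,\mathbf h)$ of total degree $\le d+1$ (so Schwartz--Zippel gives a nontrivial rejection probability in the parallel/tuple setting even when only one of the $\ell\cdot\binom{2m+r}{r}$ coordinates of $\widetilde f$ is wrong). Extracting all $r$ derivative checks from a single line probe is what lets us stay within the $6d+7$ query budget rather than paying a factor of $r$ extra; handling $r$ up to order matching $d/2$ in one shot is the delicate point.
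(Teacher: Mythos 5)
Your proposal follows the paper's proof essentially step by step: Stage~1 is exactly the parallel low-degree test of Lemma~\ref{lem:pldt} on all $\ell\cdot\binom{2m+r}{r}$ coordinate layers; Stage~2 probes a random line with $d+1$ points, interpolates $g$, and checks the $r$ directional-derivative identities at $\lambda=0$; Stage~3 fetches $\widetilde f$ at the four points $(\mathbf x,\mathbf y),(\mathbf x,\mathbf 0),(\mathbf 0,\mathbf y),(\mathbf 0,\mathbf x)$ by line interpolation ($d+1$ queries each) and checks the additive-split identities, for a total of $(d+2)+(d+1)+4(d+1)=6d+7$ queries. This matches the paper's argument.

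One small correction: in the Stage~2 soundness analysis you assert that LHS$-$RHS is a nonzero polynomial in $(\mathbf z,\mathbf h)$ of degree $\le d+1$. The right-hand side $\sum_{\mathbf v\in S^{2m}_j}\widetilde f(\mathbf z)[i,\mathbf v]\cdot\mathbf h^{\mathbf v}$ has degree up to $d$ in $\mathbf z$ and degree $j\le r$ in $\mathbf h$, so the correct bound is $d+r\le\frac{3}{2}d$, not $d+1$ (your bound happens to be tight only for the degree-$3$ warm-up, where $r=1$). The conclusion is unaffected since $|\mathbb F|>2d>d+r$ still gives a Schwartz--Zippel rejection probability of at least $1-\frac{3d}{2|\mathbb F|}-(d+1)\delta\ge\frac{\delta}{2}$, which is exactly the bound the paper uses.
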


\begin{proof}
Let $f:\mathbb F^{2m}\to \mathbb F^{t \times \binom{2m+r}{r}}$ be the function representation of a plausible codeword in the $t$-parallelization of $C$. For any $\sigma \in \mathbb F^{t \times \binom{2m+r}{r}}$, we treat $\sigma$ as a $t \times |S^{2m}_{\le r}|$ table and use $\sigma[i,\mathbf v]$ to denote its $(i,\mathbf v)$-th entry for every $i \in [t],\mathbf v \in S^{2m}_{\le r}$. 

The local testing algorithm is as follows.
\begin{enumerate}
    \item First, we check whether for every $i \in [t], \mathbf v \in S_{\le r}^{2m}$, $f(\cdot)[i,\mathbf v]$ is close to a $(2m)$-variate degree-$d$ polynomial $\widetilde f(\cdot)[i,\mathbf v]$ in the parallel sense. This can be done via the parallel low-degree test in Lemma \ref{lem:pldt}. If the rejection probability is smaller than $\frac{\delta}{2}$ for some $0<\delta<\frac{1}{(d+2)^2}$, then there exists a $(2m)$-variate polynomial tuple $\widetilde f(\cdot)$ with degree at most $d$ and a set $T\subseteq \mathbb F^{2m}$ with $|T|\ge (1-\delta)|\mathbb F|^{2m}$, such that for every $i\in [t], \mathbf v\in S^{2m}_{\le r}$ and $\mathbf x\in T$, $f(\mathbf x)[i,\mathbf v]=\widetilde f(\mathbf x)[i,\mathbf v]$. Since the distance between any two different degree-$d$ polynomials is at least $1-\frac{d}{|\mathbb F|}>2\delta$ by Schwartz-Zippel Lemma, the polynomial tuple $\widetilde f(\cdot)$ is unique.
    
    \item Next, we test whether $\forall i \in [t], j \in [r]$ and $\mathbf v \in S^{2m}_{j}$, $\widetilde f(\cdot)[i,\mathbf v]=\frac{\partial^j \widetilde f(\cdot)[i,\mathbf 0]}{\partial \mathbf x^{\mathbf v}}$. To do this, we pick $\mathbf x,\mathbf h \in \mathbb F^{2m}$ uniformly at random, and query $f$ at $(d+1)$ positions $\{\mathbf x+\lambda \mathbf h| 0 \le \lambda \le d\}$. Using the results, we can interpolate a univariate degree-$d$ polynomial tuple $g(\lambda):=f(\mathbf x+\lambda \mathbf h) \in \mathbb F^{t \times \binom{2m+r}{r}}$. The algorithm accepts if $\forall i \in [t]$,
    \begin{equation}\label{eq:pldtc1-ltc}
    \left\{
        \begin{aligned}
        g'(0)[i,\mathbf 0] & =\sum_{\mathbf v \in S_1} f(\mathbf x)[i,
        \mathbf v] \cdot \mathbf h^{\mathbf v} \\
        g''(0)[i,\mathbf 0] & = \sum_{\mathbf v \in S_2} f(\mathbf x)[i,\mathbf v]\cdot \mathbf h^{\mathbf v} \\
        & \ldots \\
        g^{(r)}(0)[i,\mathbf 0] & = \sum_{\mathbf v \in S_r} f(\mathbf x)[i,\mathbf v]\cdot \mathbf h^{\mathbf v}
        \end{aligned}
    \right..
    \end{equation}
    For completeness, $f=\widetilde f$ and $\widetilde f$ satisfies the property that for every $i \in [t],j \in [r]$ and $\mathbf v \in S^{2m}_{j}$, $\widetilde f(\cdot)[i,\mathbf v]=\frac{\partial^j \widetilde f(\cdot)[i,\mathbf 0]}{\partial \mathbf x^{\mathbf v}}$. Then both left-hand-side (LHS) and right-hand-side (RHS) of the $k$-th equation in (\ref{eq:pldtc1-ltc}) equal to the order-$k$ directional derivative of $\widetilde f$ with respect to direction $\mathbf h$ on point $\mathbf x$.

    For soundness, note that with probability $\ge 1-(d+1)\delta$, $f$ agrees with $\widetilde f$ on the $(d+1)$ queried points. This means the interpolated univariate polynomial tuple $g(\lambda)$ is $\widetilde f(\mathbf x+\lambda \mathbf h)$ and we have $f(\mathbf x)=\widetilde f(\mathbf x)$. Note that for each $i \in [t], \mathbf v \in S^{2m}_{\le r}$, $\widetilde f(\cdot)[i,\mathbf v]$ is a polynomial of degree at most $d$. Thus, if $\exists i \in [t],j \in[r]$ and $\mathbf v\in S^{2m}_j$ such that $\widetilde f(\cdot)[i,\mathbf v]\neq \frac{\partial^j \widetilde f(\cdot)[i,\mathbf 0]}{\partial \mathbf x^{\mathbf v}}$, then the LHS and RHS of (\ref{eq:pldtc1-ltc}) would be different polynomials on variables $(\mathbf x,\mathbf h)$ with degree at most $d+r \le \frac{3}{2}d$. By Schwartz-Zippel Lemma, at least $1-\frac{3d}{2|\mathbb F|}$ fraction of $(\mathbf x,\mathbf h)$ will make the algorithm reject. Therefore, the rejection probability is at least $1-\frac{3d}{2|\mathbb F|}-(d+1)\delta$ in this case.

    \item At the end, we test if for every $i \in [t]$, $\widetilde f(\mathbf x,\mathbf y)[i,\mathbf 0]$ is of the form $p(\mathbf x)+p(\mathbf y)$ for some $m$-variate degree-$d$ polynomial $p$. This is equivalent to testing whether 
    \begin{equation}\label{eq:pldtc1:form}
        \left\{
        \begin{aligned}
            \widetilde f(\mathbf x,\mathbf y)[i,\mathbf 0] & =\widetilde f(\mathbf x,\mathbf 0)[i,\mathbf 0]+\widetilde f(\mathbf 0,\mathbf y)[i,\mathbf 0] \\
            \widetilde f(\mathbf x,\mathbf 0)[i,\mathbf 0] & =\widetilde f(\mathbf 0,\mathbf x)[i,\mathbf 0]
        \end{aligned}
        \right.
    \end{equation}
    for every $\mathbf x,\mathbf y \in \mathbb F^m$, since $\widetilde f(\cdot,\cdot)[i,\mathbf 0]$ is itself a $(2m)$-variate degree-$d$ polynomial. For any fixed $\mathbf z \in \mathbb F^{2m}$, by picking $\mathbf h \in \mathbb F^{2m}$ uniformly at random and do local interpolation from $\{f(\mathbf z+\lambda \mathbf h) \mid 1 \le \lambda \le d+1\}$, we can get $\widetilde f(\mathbf z)$ with probability at least $1-(d+1)\delta$. We pick $\mathbf x,\mathbf y \in \mathbb F^m$ uniformly at random and test whether (\ref{eq:pldtc1:form}) holds using $4(d+1)$ queries. If for either of the equation in (\ref{eq:pldtc1:form}), LHS and RHS are different functions on $\mathbf x,\mathbf y$, then they differ on at least $1-\frac{d}{|\mathbb F|}$ fraction of points by Schwartz-Zippel Lemma. The rejection probability is therefore at least $1-\frac{d}{|\mathbb F|}-4(d+1)\delta$ in this case.    
\end{enumerate}

Overall, let $\delta=\Delta(f,\text{Im}(C^{\odot t}))$, the local testing algorithm generates $2+2+5=9$ uniformly random elements in $\mathbb F^{2m}$, queries $f$ at $(d+2) + (1+4)\cdot (d+1)=6d+7$ positions, and rejects with probability at least $\min\left(\frac{\delta}{2},1-\frac{3d}{2|\mathbb F|}-(d+1)\delta,1-\frac{d}{|\mathbb F|}-4(d+1)\delta\right)=\frac{\delta}{2}$ since $|\mathbb F|>2d$ and $0<\delta <\frac{1}{(d+2)^2}$.
\end{proof}

\subsubsection{Proof of Local Decodability}
\label{sec:pldtc1-ldc}

\begin{lemma}\label{lem:ldc1}
    The code $C$ defined in Section \ref{sec:pldtc1-construct} is $(2,|\mathbb F|^{2m+2},\delta,1-2\delta)$-smooth parallel locally decodable with respect to all $\chi_i$ and $\psi_{i,j}$, for every $0<\delta<\frac{1}{2}$.
\end{lemma}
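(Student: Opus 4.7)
The plan: decoding either $\chi_i$ or $\psi_{i,j}$ reduces to evaluating $q$ at a fixed point $\mathbf z \in \mathbb F^{2m}$, since $a_i = p(\mathbf u_i) = q(\mathbf u_i, \mathbf 0)$ and $a_i + a_j = p(\mathbf u_i)+p(\mathbf u_j) = q(\mathbf u_i, \mathbf u_j)$ by construction. So I will describe a single 2-query randomized procedure that, given oracle access to a received word $f$ close to some $\ell$-parallel codeword $C^{\odot \ell}(\mathbf m)$, recovers $q(\mathbf z) \in \mathbb F^\ell$; the decoder for $\chi_i$ (resp.\ $\psi_{i,j}$) is obtained by taking $\mathbf z = (\mathbf u_i,\mathbf 0)$ (resp.\ $\mathbf z=(\mathbf u_i,\mathbf u_j)$) and reading off the ``value'' coordinate $\mathbf v=\mathbf 0$ of the output tuple.

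The decoder samples $\mathbf h \in \mathbb F^{2m}$ uniformly and a pair of distinct scalars $\lambda_1 \neq \lambda_2 \in \mathbb F$ uniformly, then queries $f$ at the two positions $\mathbf z + \lambda_1 \mathbf h$ and $\mathbf z + \lambda_2 \mathbf h$. Consider the univariate polynomial tuple $g(\lambda) := q(\mathbf z + \lambda \mathbf h) \in \mathbb F^\ell$ of degree $\le d = 2r+1$. By the multivariate chain rule,
\[
  g^{(k)}(\lambda) \;=\; \sum_{\mathbf v \in S^{2m}_{k}} \binom{k}{\mathbf v}\, \mathbf h^{\mathbf v}\, \frac{\partial^k q}{\partial \mathbf x^{\mathbf v}}(\mathbf z + \lambda \mathbf h), \qquad 0 \le k \le r,
\]
so the two queried codeword tuples, which by (\ref{eq:pldtc1}) carry all partial derivatives of $q$ of order $\le r$, directly yield the $2(r+1) = d+1$ values $g(\lambda_i), g'(\lambda_i), \ldots, g^{(r)}(\lambda_i)$ for $i \in \{1,2\}$. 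These form $d+1$ linear equations in the $d+1$ coefficients of $g$; the coefficient matrix is the confluent Vandermonde matrix with two distinct nodes $\lambda_1,\lambda_2$ of multiplicity $r+1$ each, whose determinant equals (up to sign) $(\lambda_2 - \lambda_1)^{(r+1)^2}\neq 0$ by the classical Hermite interpolation theorem, and the multinomial coefficients appearing above are nonzero in $\mathbb F$ because $k\le r<|\mathbb F|$. Solving this system coordinatewise in $[\ell]$ uniquely recovers $g$, and the decoder outputs $g(0) = q(\mathbf z)$.

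For smoothness, the total randomness has size $|\mathbb F|^{2m}\cdot |\mathbb F|^2 = |\mathbb F|^{2m+2}$; for each fixed nonzero $\lambda_i$, the point $\mathbf z + \lambda_i \mathbf h$ is uniform on $\mathbb F^{2m}$ as $\mathbf h$ varies, so each of the two queried positions is individually uniform and every codeword coordinate is queried with probability exactly $2/|\mathbb F|^{2m}$ (the measure-zero boundary cases $\lambda_i=0$ are absorbed into the sampling). For local decodability, if $\Delta(f, C^{\odot \ell}(\mathbf m))\le \delta$, a union bound over the two (individually uniform) queried positions shows both are uncorrupted with probability at least $1-2\delta$, in which case the interpolation step above exactly recovers $q(\mathbf z)$ and hence the desired decoded symbol.

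The only nontrivial step is the invertibility of the confluent Vandermonde matrix for two distinct nodes of multiplicity $r+1$, i.e., justifying the Hermite interpolation; once this is in hand, the rest of the proof (chain rule, smoothness, union bound) is routine. The extension to the $\ell$-parallel setting is immediate, since the same random triple $(\mathbf h,\lambda_1,\lambda_2)$ works simultaneously across all $\ell$ coordinates with the same coefficient matrix, so no additional argument beyond the single-coordinate case is needed.
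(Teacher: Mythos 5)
Your overall strategy matches the paper's: reduce decoding of $\chi_i$ and $\psi_{i,j}$ to locally correcting $q(\mathbf z)$ at a fixed $\mathbf z\in\mathbb F^{2m}$, query the word at $\mathbf z+\lambda_1\mathbf h$ and $\mathbf z+\lambda_2\mathbf h$, use the chain rule together with the order-$\le r$ partial derivatives stored in each symbol to compute $g^{(j)}(\lambda_1),g^{(j)}(\lambda_2)$ for $0\le j\le r$, and invert the confluent Vandermonde system (two-node Hermite interpolation) to recover $g(0)$. The invertibility fact you cite is exactly the paper's Lemma~\ref{lem:det-maintext}.

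There is, however, a real gap in your choice of randomness. You sample $\lambda_1\ne\lambda_2$ from all of $\mathbb F$ and dismiss $\lambda_i=0$ as ``measure-zero boundary cases absorbed into the sampling''; the paper restricts to $\lambda_1,\lambda_2\in\mathbb F\setminus\{0\}$, and this restriction is load-bearing. If $\lambda_1=0$, the first query is deterministically at the coordinate $\mathbf z$, so $\mathbf z$ is queried with probability at least $1/|\mathbb F|$ while every other coordinate is queried with probability strictly less than $1/|\mathbb F|^{2m}$. This breaks the exact smoothness condition in Definition~\ref{def:parallel-decode}, and --- more importantly --- it invalidates the $1-2\delta$ decodability bound: if $\mathbf z$ happens to be corrupted, the first query is corrupted with probability at least $1/|\mathbb F|$, a quantity independent of $\delta$ that dominates $2\delta$ when $\delta$ is small. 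In a finite probability space there is no measure zero; the event $\lambda_1=0$ has probability $\Theta(1/|\mathbb F|)$ and contributes a constant error term. Restricting to $\mathbb F\setminus\{0\}$ makes each query position individually uniform over $\mathbb F^{2m}$, after which the union-bound argument for $1-2\delta$ and the smoothness claim both go through. (Your randomness count $|\mathbb F|^{2m}\cdot|\mathbb F|^2$ should be adjusted accordingly, though $|\mathbb F|^{2m+2}$ as an upper bound remains correct.) A secondary slip: the determinant of the confluent Vandermonde matrix here is $(\lambda_2-\lambda_1)^{(r+1)^2}\prod_{i=1}^r(i!)^2$, not $(\lambda_2-\lambda_1)^{(r+1)^2}$ up to sign; the factorial factor is where the hypothesis that $|\mathbb F|$ is prime and $|\mathbb F|>r$ actually enters.
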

\begin{proof}
    For any message $(a_1,\ldots,a_{\binom{m+d}{d}}) \in \mathbb F^{\binom{m+d}{d}}$, we have $a_i=p(\mathbf u_i)=q(\mathbf u_i,\mathbf 0)$ and $a_i+a_j=p(\mathbf u_i)+p(\mathbf u_j)=q(\mathbf u_i,\mathbf u_j)$. Thus, to prove the parallel local decodability of $C$, it suffices to show we can locally correct $q(\mathbf x)$ for any $\mathbf x \in \mathbb F^{2m}$ in parallel. 
    
    Let $f:\mathbb F^{2m} \to \mathbb F^{t \times \binom{2m+r}{r}}$ be the function representation of a string which is $\delta$-close to a codeword in $\text{Im}(C^{\odot t})$, and let $\widetilde f:\mathbb F^{2m} \to \mathbb F^{t \times \binom{2m+r}{r}}$ be the function representation of that codeword. We still treat every $\sigma \in \mathbb F^{t \times \binom{2m+r}{r}}$ as a $t \times |S^{2m}_{\le r}|$ table and use $\sigma[i,\mathbf v]$ to denote its $(i,\mathbf v)$-th entry for every $i \in [t],\mathbf v \in S^{2m}_{\le r}$. 
    
    Our goal is to locally correct the $t$ function values $(\widetilde f(\mathbf x)[1,\mathbf 0],\ldots,\widetilde f(\mathbf x)[t,\mathbf 0])$ for any given $\mathbf x \in \mathbb F^{2m}$. To do this, we pick $\mathbf h \in \mathbb F^{2m}$ and different $\lambda_1,\lambda_2 \in \mathbb F \setminus \{0\}$ uniformly at random, and query $f(\mathbf x+\lambda_1\mathbf h)$ and $f(\mathbf x+\lambda_2\mathbf h)$. Suppose $f$ agrees with $\widetilde f$ on both positions, which happens with probability at least $1-2\delta$. Then by the definition of the code, we correctly get the value of 
    \begin{equation}\label{eq:pldtc1-ldc}
        \left(\frac{\partial^j \widetilde f(\cdot)[i,\mathbf 0]}{\partial \mathbf x^{\mathbf v}}\mid i \in [t],0 \le j \le r,\mathbf v \in S^{2m}_j\right)
    \end{equation} 
    on $\mathbf x+\lambda_1\mathbf h_1$ and $\mathbf x+\lambda_2 \mathbf h_2$.
    
    Define a function $g(\lambda):=(\widetilde f(\mathbf x+\lambda \mathbf h)[1,\mathbf 0],\ldots,\widetilde f(\mathbf x+\lambda \mathbf h)[t,\mathbf 0])$. For each $i \in [t]$, $g(\cdot)[i]$ is a univariate degree-$d$ polynomial and there exists $c_{i,0},\ldots,c_{i,d} \in \mathbb F$ such that $g(\lambda)[i]=\sum_{j=0}^dc_{i,j}\lambda^j$. From (\ref{eq:pldtc1-ldc}), we can compute $g(\lambda_1),g'(\lambda_1),\ldots,g^{(r)}(\lambda_1)$ and $g(\lambda_2),g'(\lambda_2),\ldots,g^{(r)}(\lambda_2)$ by the chain rule, and thus for each $i \in [t]$, we can obtain $c_{i,0},\ldots,c_{i,d}$ by solving a linear equation system. The  non-singularity of the $(d+1) \times (d+1)$ coefficient matrix is guaranteed by Lemma \ref{lem:det-maintext}, for which we defer the proof to Appendix \ref{lem:det-appendix}.

    After getting the coefficients of $g$, we can obtain $(\widetilde f(\mathbf x)[1,\mathbf 0],\ldots,\widetilde f(\mathbf x)[t,\mathbf 0])$ by computing $g(0)$. As a result, $C$ is $(2,|\mathbb F|^{2m+2},\delta,1-2\delta)$-smooth parallel locally decodable for every $0<\delta<\frac{1}{2}$.
    
\end{proof}

\begin{lemma}
\label{lem:det-maintext}
    Let $r > 0$ be an integer and $d = 2r+1$, $\mathbb F$ be a finite field with prime size $|\mathbb F| > r$, and $\lambda_1,\lambda_2$ be two distinct elements in $\mathbb F$. The values $g(\lambda_1), g'(\lambda_1), \ldots, g^{(r)}(\lambda_1)$ and $g(\lambda_2), g'(\lambda_2), \ldots, g^{(r)}(\lambda_2)$ uniquely determine a univariate polynomial $g: \mathbb{F}\to \mathbb{F}$ with degree at most $d$.
\end{lemma}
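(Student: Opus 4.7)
\textbf{Proof plan for Lemma~\ref{lem:det-maintext}.} The plan is to cast the claim as an invertibility statement for a linear map between two spaces of equal dimension, and then verify injectivity via a multiplicity-of-roots argument that is valid in the prescribed characteristic.

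First, I observe that the space of univariate polynomials $g : \mathbb F \to \mathbb F$ of degree at most $d = 2r+1$ has dimension $d+1 = 2r+2$ over $\mathbb F$, which matches exactly the number of prescribed data $g(\lambda_i), g'(\lambda_i), \dots, g^{(r)}(\lambda_i)$ for $i = 1,2$. The evaluation map sending $g$ to this tuple is $\mathbb F$-linear, so it suffices to show it is injective: any polynomial $g$ of degree $\le d$ whose value and first $r$ (formal) derivatives vanish at both $\lambda_1$ and $\lambda_2$ must be the zero polynomial.

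Second, I would argue that simultaneous vanishing forces $\lambda_1$ and $\lambda_2$ to be roots of $g$ of multiplicity at least $r+1$ each. Concretely, suppose $g \neq 0$ and write
\[
g(\lambda) = (\lambda - \lambda_1)^{m_1} (\lambda - \lambda_2)^{m_2}\, h(\lambda), \qquad h(\lambda_1) \neq 0,\ h(\lambda_2) \neq 0,
\]
with $m_1, m_2 \ge 0$. If, say, $m_1 \le r$, differentiating $m_1$ times and using the Leibniz rule gives
\[
g^{(m_1)}(\lambda_1) = m_1!\,(\lambda_1 - \lambda_2)^{m_2}\, h(\lambda_1),
\]
since all other terms retain a factor of $(\lambda - \lambda_1)$. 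Because $|\mathbb F|$ is a prime exceeding $r \ge m_1$, the factor $m_1!$ is nonzero in $\mathbb F$; combined with $\lambda_1 \neq \lambda_2$ and $h(\lambda_1) \neq 0$, this contradicts $g^{(m_1)}(\lambda_1) = 0$. Hence $m_1 \ge r+1$, and symmetrically $m_2 \ge r+1$.

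Third, I would conclude that $g$ is divisible by $(\lambda-\lambda_1)^{r+1}(\lambda-\lambda_2)^{r+1}$, a polynomial of degree $2r+2 > d$, contradicting $\deg g \le d$. Therefore $g = 0$, the linear map is injective and hence bijective, proving the lemma. The only delicate point, and the reason the hypothesis $|\mathbb F| > r$ is needed, is the nonvanishing of $m_1!$ (and $m_2!$) in $\mathbb F$; without such a bound, formal derivatives in positive characteristic can vanish without the corresponding factor appearing with high multiplicity, and the argument would collapse. Everything else is a standard dimension-counting and multiplicity argument.
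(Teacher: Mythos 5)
Your proposal is correct and takes a genuinely different route from the paper. You argue via dimension counting: the space of polynomials of degree $\le d$ has dimension $d+1=2r+2$, which equals the number of prescribed values $g^{(j)}(\lambda_i)$ for $0\le j\le r$, $i\in\{1,2\}$, so it suffices to prove the linear evaluation map is injective. You then establish injectivity by a root-multiplicity argument: writing a hypothetical nonzero $g$ as $(\lambda-\lambda_1)^{m_1}(\lambda-\lambda_2)^{m_2}h(\lambda)$, the Leibniz rule gives $g^{(m_1)}(\lambda_1)=m_1!\,(\lambda_1-\lambda_2)^{m_2}h(\lambda_1)$, and since $\mathbb F$ has prime order $>r\ge m_1$ the factor $m_1!$ is a unit, forcing $m_1\ge r+1$ (and symmetrically $m_2\ge r+1$), which contradicts $\deg g\le 2r+1$. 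The paper instead directly computes the determinant of the $(d+1)\times(d+1)$ confluent-Vandermonde coefficient matrix, reducing it via binomial-transform column operations to a block-triangular form and ultimately showing $\det(A)=(\lambda_2-\lambda_1)^{(r+1)^2}\prod_{i=1}^r(i!)^2$, nonzero in $\mathbb F$ for the same reason ($|\mathbb F|>r$ prime). Your argument is shorter and more conceptual---essentially the standard uniqueness proof for two-point Hermite interpolation, with the positive-characteristic caveat about $m_1!$ properly flagged---whereas the paper's computation has the modest side benefit of producing the explicit determinant value, though that value is not used elsewhere.
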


\subsection{PLTDC II: Derivative Code Based on Line vs. Point Test}
\label{sec:pldtc2}
For any $m\in\mathbb{N}$, let $\mathcal L^m$ be the set of lines in $\mathbb F^m$. 
\begin{theorem}\label{thm:pldtc2}
For any integer $r>0$, let $d=2r+1$ and let $\mathbb F$ be a finite field of prime size $|\mathbb F|> \max(3d,9216)$. For every integer $m>0$ satisfying $|\mathbb F|^m \ge \binom{m+d}{d}$, there is a PLTDC $C:\mathbb F^{\binom{m+d}{d}} \to \left(\mathbb F^{(d+1)\cdot \binom{2m+r}{r}}\right)^{\mathcal L^{2m}}$ such that:
\begin{itemize}
    \item It is  $(11,|\mathbb F|^{34m+8},\delta,1-\frac{2}{13}\delta)$-parallel locally testable for every $0.035<\delta<0.06$.
    \item It is $(2,|\mathbb F|^{6m+2},\delta,1-2\delta)$-smooth parallel locally decodable with respect to all $\chi_i$ and $\psi_{i,j}$ for every $0<\delta<\frac{1}{2}$.
\end{itemize}
Furthermore, $C$ can be constructed explicitly in time depending only on $\mathbb F,m,d$.    
\end{theorem}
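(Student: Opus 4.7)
The plan is to construct $C: \mathbb F^{\binom{m+d}{d}} \to \left(\mathbb F^{(d+1)\cdot \binom{2m+r}{r}}\right)^{\mathcal L^{2m}}$ by essentially the same recipe as in Section~\ref{sec:pldtc1}, but indexing the codeword by lines rather than points. As before, given a message $(a_1,\dots,a_{\binom{m+d}{d}})$, let $p \in \mathbb F^{\le d}[x_1,\dots,x_m]$ be the unique polynomial with $p(\mathbf u_i) = a_i$, and set $q(\mathbf x,\mathbf y) := p(\mathbf x) + p(\mathbf y)$. For every line $\ell \in \mathcal L^{2m}$ and every partial derivative $\frac{\partial^i q}{\partial \mathbf x^{\mathbf v}}$ with $0 \le i \le r$ and $\mathbf v \in S^{2m}_i$, the codeword stores the $d+1$ coefficients of the univariate degree-$d$ polynomial obtained by restricting this derivative to $\ell$. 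This yields $(d+1)\binom{2m+r}{r}$ field elements per line, matching the advertised alphabet. The advantage over the point-indexed construction in Section~\ref{sec:pldtc1} is that a single line query now reveals the restricted polynomial in full, which is what lets us bypass the $(d+1)$-query interpolation step in the decoder and handle super-constant $d$.

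For the parallel local testability, the idea is to replace the Rubinfield--Sudan test, which is limited to constant degree, with a parallel line-vs-point-style test (Lemma~\ref{lem:line-vs-point}). Since the codeword only provides line oracles, I would implement this as a line-vs-line consistency check: pick two uniformly random lines in $\mathbb F^{2m}$ that share a common intersection point $\mathbf w$, query both lines, and check that their univariate restrictions agree at $\mathbf w$. Applying this to the ``main'' slot (the $\mathbf v = \mathbf 0$ derivative) yields, by the parallel line-vs-point analysis, a unique global degree-$d$ polynomial tuple $\widetilde q$ such that the main slot agrees with $\widetilde q$ on a $(1-O(\delta))$ fraction of lines. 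I then add the two auxiliary tests from Section~\ref{sec:pldtc1}: a derivative-consistency test that checks each slot $\mathbf v$ indeed records the partial derivative $\frac{\partial^{|\mathbf v|} \widetilde q}{\partial \mathbf x^{\mathbf v}}$, and a symmetry test that checks the $p(\mathbf x) + p(\mathbf y)$ sum structure. Each of these is now cheaper because one line query recovers a whole univariate restriction, so directional derivatives and evaluations at multiple points can be read off from the same line query. The query budget of $11$ and the window $0.035 < \delta < 0.06$ come from carefully reusing line queries across tests and combining soundness losses by a union bound.

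For the parallel local decodability, I follow Section~\ref{sec:pldtc1-ldc} but lift from points to lines. To decode $q(\mathbf z)$ for $\mathbf z = (\mathbf u_i, \mathbf 0)$ (handling $\chi_i$) or $\mathbf z = (\mathbf u_i, \mathbf u_j)$ (handling $\psi_{i,j}$), choose uniformly random $\mathbf h \in \mathbb F^{2m}$ and $\lambda \in \mathbb F \setminus \{0\}$, together with fresh uniform directions, and query the two lines $\ell_1$ and $\ell_2$ that pass through $\mathbf z + \lambda \mathbf h$ and $\mathbf z - \lambda \mathbf h$ respectively, with random directions; each stored polynomial may be evaluated at $\mathbf z$ to recover $q(\mathbf z)$ directly, so it suffices that at least one of the two line queries is uncorrupted, and the success probability is $1 - 2\delta$ by a union bound. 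Smoothness follows because the pair $(\text{base point},\text{direction})$ defining each queried line is uniform in $\mathbb F^{2m} \times \mathbb F^{2m}$, so the induced distribution on $\mathcal L^{2m}$ is uniform; the randomness count $|\mathbb F|^{6m+2}$ matches $|\mathbb F|^{4m}$ for the two directions times $|\mathbb F|^{2m}$ for the shift times a small factor for $\lambda$.

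The main obstacle will be the soundness analysis of the parallel line-vs-point/line-vs-line test at super-constant degree $d = \Theta(\log k)$, because the query count must remain an absolute constant independent of $d$. Porting the classical line-vs-point test~\cite{fstest} to the parallel setting (analogous to what Lemma~\ref{lem:pldt} does for the Rubinfield--Sudan test) is the conceptual heart of the theorem, and the tight numerical constants in the $\delta$ window presumably come from the interaction between this test and the two auxiliary consistency checks. Once the parallel line-vs-point soundness is established, the rest of the argument — derivative consistency via directional derivatives, sum structure via two-point evaluation, and the 2-query interpolation-free line decoder — ports over almost verbatim from the constant-degree proofs of Lemmas~\ref{lem:ltc1} and~\ref{lem:ldc1}.
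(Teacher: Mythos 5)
Your construction of the code and the overall testing architecture match the paper's (a parallel line-vs-point test for the main slot, a derivative-consistency check, and a symmetry check), but your decoder is broken. You propose to query two random lines $\ell_1, \ell_2$ through the random points $\mathbf z + \lambda\mathbf h$ and $\mathbf z - \lambda\mathbf h$ and then ``evaluate the stored polynomial at $\mathbf z$.'' This is impossible: the codeword at position $\ell$ only records the restriction $q_{|\ell}$ (and its derivatives) as a univariate polynomial in the line parameter, and a line through $\mathbf z \pm \lambda\mathbf h$ with a fresh random direction does not pass through $\mathbf z$, so $\mathbf z$ is not in its domain. If instead you forced both lines to pass through $\mathbf z$, smoothness would fail: the positions are indexed by $\mathcal L^{2m}$ and a smooth decoder must query every line with equal probability $2/|\mathcal L^{2m}|$, whereas lines through a fixed $\mathbf z$ are a vanishing fraction of all lines. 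Calling the decoder ``interpolation-free'' is also the wrong intuition and contradicts the very lemma (Lemma~\ref{lem:ldc1}) you say it ports from.

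The correct decoder, which is what the paper actually does, is forced to use the stored derivative data. You query a uniform random line $\ell_1$ through $\mathbf z_1 := \mathbf z + \lambda_1\mathbf h$ and a uniform random line $\ell_2$ through $\mathbf z_2 := \mathbf z + \lambda_2\mathbf h$ (so the queried lines are uniform in $\mathcal L^{2m}$, which gives smoothness). From the codeword symbol at $\ell_i$ you read off, via the chain rule, the value and all directional derivatives of $g(\lambda) := q(\mathbf z + \lambda\mathbf h)$ up to order $r$ at $\lambda = \lambda_i$. These $2(r+1) = d+1$ Hermite conditions determine the degree-$d$ univariate $g$ uniquely; the non-singularity of the resulting linear system is exactly Lemma~\ref{lem:det-maintext}, and one then outputs $g(0) = q(\mathbf z)$. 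This is why order-$\le r$ derivatives must live in the alphabet in the first place: they are the only thing that lets two smooth line queries pin down a degree-$(2r+1)$ polynomial. Without this interpolation step you need either $d+1$ queries or non-smooth queries, neither of which is allowed, so the gap in your decoder is not cosmetic — it removes the mechanism that makes the whole 2-query scheme work at super-constant $d$.

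One smaller point: in the testing step the paper does not literally run a line-vs-line test; it runs the line-vs-point test of Lemma~\ref{lem:line-vs-point}, implementing the ``point'' oracle $f(\mathbf x)$ as the evaluation of the degenerate-line entry $P_{\ell(\mathbf x,\mathbf 0)}(\mathbf x)$. A genuine line-vs-line consistency check is a plausible alternative, but it would need its own soundness analysis; you cannot simply cite Lemma~\ref{lem:line-vs-point} for it as stated.
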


We remark that the block length is bounded by $|\mathcal L^{2m}| \le |\{(\mathbf x,\mathbf h) \in \mathbb F^{2m} \times \mathbb F^{2m}\}| \le |\mathbb F|^{4m}$.

\subsubsection{Construction}\label{sec:pldtc2-construct}
%Recall that $\mathcal L^m$ is the set of lines in $\mathbb F^m$. 
For a line\footnote{Here we abuse the notation a little by using $\ell$ to denote an unspecified line and using $\ell(\mathbf x,\mathbf h)$ to denote the line $\{\mathbf x+\lambda \mathbf h \mid \lambda \in \mathbb F\}$.} $\ell \in \mathcal L^m$, we fix the lexicographically smallest pair $(\mathbf x,\mathbf h) \in \mathbb F^m \times \mathbb F^m$ such that $\ell(\mathbf x,\mathbf h)=\ell$, and say a function $f:\ell \to \mathbb F^t$ is of degree $d$ if and only if the function $g:\mathbb F \to \mathbb F^t$ defined by $g(\lambda):=f(\mathbf x+\lambda \mathbf h)$ is a $t$-tuple of univariate degree-$d$ polynomials. Any degree-$d$ function $f:\ell \to \mathbb F^t$ can be encoded by an element in $\mathbb F^{t\times (d+1)}$ by writing down the $(d+1)$ coefficient tuples of $g$. 

 Recall the function $q:\mathbb F^{2m} \to \mathbb F$ defined in Section \ref{sec:pldtc1-construct}. Our previous code encodes the table of values of every $\frac{\partial^i q}{\partial \mathbf x^{\mathbf v}}$ where $0 \le i \le r,\mathbf v \in S^{2m}_i$. Here we generalize this construction by further including the restriction of those polynomials on every line in $\mathcal L^{2m}$. Specifically, for each $\ell \in \mathcal L^{2m}$,
 \[
    C(a_1,\ldots,a_M)[\ell]=\left(\left(\frac{\partial^i q}{\partial \mathbf x^{\mathbf v}}\right)_{|\ell} \ \middle|\ 0 \le i \le r, \mathbf v \in S^{2m}_i\right),
\]
where the restriction of each of the $|S^{2m}_{\le r}|=\binom{2m+r}{r}$ polynomials on $\ell$ is of degree at most $d$, and thus can be stored in $\mathbb F^{(d+1)}$ as mentioned  above. The alphabet of $C$ is therefore $\left(\mathbb F^{d+1}\right)^{\binom{2m+r}{r}}$.

\subsubsection{Proof of Local Testability}

We first introduce the following line vs. point test, which can tell whether a given function $f:\mathbb F^m \to \mathbb F^t$ is close to an $m$-variate degree-$d$ polynomial tuple, using only 2 queries.

\begin{lemma}[Line vs. Point Test]\label{lem:line-vs-point}
Suppose $|\mathbb F|> \max(3d,9216)$. Define an algorithm $\mathcal T$ as follows. 
\begin{enumerate}
    \item Take a function $f:\mathbb F^m \to \mathbb F^t$ and a proof $\pi: \mathcal L^m \to \mathbb F^{t \times (d+1)}$ as input.
    \item Pick random $\mathbf x,\mathbf h \in \mathbb F^m$, and decode a degree-$d$ function $P_{\ell(\mathbf x,\mathbf h)}:\ell(\mathbf x,\mathbf h) \to \mathbb F^t$ from the univariate degree-$d$ polynomial tuple defined by $\pi[\ell(\mathbf x,\mathbf h)] \in \mathbb F^{t \times (d+1)}$. Note that when $\mathbf h=\mathbf 0$, we decode $P_{\ell(\mathbf x,\mathbf 0)}(\mathbf x)$ by plugging 0 into the univariate polynomial tuple.
    \item Output whether $f(\mathbf x)=P_{\ell(\mathbf x,\mathbf h)}(\mathbf x)$.
\end{enumerate}

Then $\mathcal T$ has the following properties.
\begin{itemize}
    \item (Completeness) If $f \in \mathcal F$, then let $\pi[\ell]$ encodes $f_{|\ell}:\ell \to \mathbb F^m$ for every $\ell \in \mathcal L^m$, we have $\Pr[\mathcal T(f,\pi)=1]=1$. 
    \item (Soundness) If $\Delta(f,\mathcal F)\ge \delta$, then for any proof $\pi$, $\Pr[\mathcal T(f,\pi)=0] \ge \frac{\delta}{2}$ for every $0<\delta<\frac{1}{24}$.
\end{itemize}
\end{lemma}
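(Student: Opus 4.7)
The plan is to adapt the classical Arora--Sudan-style line vs.\ point soundness analysis, originally for scalar-valued functions $\mathbb F^m\to\mathbb F$, to the parallel setting of $\mathbb F^t$-valued functions. Completeness is immediate: if $f=(f_1,\ldots,f_t)\in\mathcal F$ and we take $\pi[\ell]$ to encode the univariate degree-$d$ restrictions $((f_i)_{|\ell})_{i\in[t]}$, then for every $(\mathbf x,\mathbf h)$ the decoded $P_{\ell(\mathbf x,\mathbf h)}(\mathbf x)$ equals $f(\mathbf x)$ by construction. So the task reduces to establishing the soundness bound $\Pr[\mathcal T(f,\pi)=0]\ge \delta/2$ whenever $\Delta(f,\mathcal F)\ge\delta$.

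For soundness, I would run the classical line vs.\ point proof directly on $\mathbb F^t$-valued polynomial tuples, exploiting the fact that the quantitative ingredients of that argument depend only on the coordinate field size $|\mathbb F|$ and the degree $d$, not on the value space. The three ingredients that carry over verbatim are: (i) \emph{uniqueness}, namely any two distinct elements of $\mathcal F$ disagree on at least a $1-d/|\mathbb F|$ fraction of $\mathbb F^m$, which holds coordinate-wise by Schwartz--Zippel; (ii) \emph{combinatorial sampling estimates} about random lines through a fixed point in $\mathbb F^m$, which are purely about lines in $\mathbb F^m$; and (iii) the \emph{plane-based self-correction step} that, given the tester passes with probability $1-\epsilon$, upgrades local agreement into a globally consistent polynomial tuple $P^*\in\mathcal F$. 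Concretely, I would define $g:\mathbb F^m\to\mathbb F^t$ by $g(\mathbf x):=\operatorname{plurality}_{\mathbf h}\bigl(P_{\ell(\mathbf x,\mathbf h)}(\mathbf x)\bigr)$, where the plurality is taken in $\mathbb F^t$, show via a Markov argument that $\Delta(f,g)\le 2\epsilon$, and then show $g\in\mathcal F$ by restricting to random planes and running a bivariate low-degree test argument on each coordinate while using the joint structure of $g$. The hypotheses $|\mathbb F|>9216$ and $0<\delta<1/24$ then match the concrete thresholds from the scalar Arora--Sudan analysis controlling the plane-pair consistency step.

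The main obstacle is that the analysis must be joint rather than coordinate-wise. The naive reduction---applying the scalar line vs.\ point theorem to each $f_i$ separately---gives $t$ closest polynomials $P_1^*,\ldots,P_t^*$ and only yields $\Delta(f,P^*)\le\sum_i 2\epsilon_i\le 2t\epsilon$, a factor-$t$ loss, since the scalar rejections $\epsilon_i=\Pr[f_i(\mathbf x)\ne P_{\ell,i}(\mathbf x)]$ satisfy only $\max_i\epsilon_i\le\epsilon$ and the per-coordinate ``bad sets'' need not overlap. To sidestep this, both the plurality $g$ and the downstream plane-consistency bookkeeping must operate on $\mathbb F^t$-valued votes; the critical use is that the same random line is shared across all $t$ coordinates in the parallel test, so a single rejection event already witnesses a joint mismatch. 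The algebraic steps (interpolation on planes, polynomial identity arguments) are value-space agnostic and transfer unchanged; the delicate point is checking that the combinatorial majority/consistency-graph accounting in Arora--Sudan stays quantitatively tight with $\mathbb F^t$-valued votes, since collisions of distinct vote values are now in $\mathbb F^t$ rather than $\mathbb F$ and potentially rarer, not more common. Granting this, we obtain $P^*\in\mathcal F$ with $\Delta(f,P^*)\le 2\epsilon$, and the contrapositive $\Delta(f,\mathcal F)\ge\delta\Rightarrow\Pr[\mathcal T(f,\pi)=0]\ge\delta/2$ follows.
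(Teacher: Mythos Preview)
Your proposal is sound at the architectural level and correctly identifies the key difficulty---the analysis must treat $\mathbb F^t$-valued votes jointly rather than coordinate-wise, exploiting that all $t$ coordinates share the same random line---and the plurality-plus-Markov opening matches the paper exactly. Where you diverge is in how you conclude that the plurality $g=\MAJ^f$ lies in $\mathcal F$. You propose a direct plane-based argument in the Arora--Sudan style; the paper instead runs a Friedl--Sudan style self-improvement induction on the rejection probability $\delta(f)$. Concretely, the paper shows (Lemma~\ref{lem:ldt-lvp-3}) that $\delta(\MAJ^f)<\delta(f)$ whenever $0<\delta(f)\le 1/48$, using four random vectors $\mathbf x,\mathbf h_1,\mathbf h_2,\mathbf h_3$ to generate a pairwise-independent family of lines, Chebyshev's inequality (this is precisely where $9216=4\cdot 48^2$ comes from), and the Polishchuk--Spielman bivariate reconstruction theorem \cite{PS94} to obtain a \emph{single scalar} error-locator $E(i,j)$ together with a degree-$(d,d)$ polynomial \emph{tuple} $Q(i,j)$; the scalar nature of $E$ is exactly what makes the parallel case go through with no loss in $t$. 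Since $\delta(f)$ takes only finitely many values, iterating drives it to $0$, after which \cite{RS96} gives membership in $\mathcal F$, and a short stability lemma (Lemma~\ref{lem:ldt-lvp-2}) pins down $\MAJ^f$ as the limiting polynomial tuple. Your direct route should also succeed, but the specific constants $9216$ and $1/24$ are artifacts of the paper's Chebyshev/Polishchuk--Spielman machinery rather than ``the scalar Arora--Sudan analysis,'' so you would likely obtain different thresholds; to make your argument complete you would still need to spell out the plane-gluing step for $\mathbb F^t$-valued functions, which you left schematic.
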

We defer the proof of Lemma \ref{lem:line-vs-point} to Appendix \ref{app:line-vs-point}.

\begin{lemma}
    The code $C$ defined in Section \ref{sec:pldtc2-construct} is $(11,|\mathbb F|^{34m+8},\delta,1-\frac{2}{13}\delta)$-parallel locally testable for every $0.035<\delta<0.06$.
\end{lemma}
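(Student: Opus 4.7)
The proof runs in close parallel with Lemma~\ref{lem:ltc1} but makes two critical substitutions to handle the super-constant degree $d=2r+1$. First, the parallel low-degree test (Lemma~\ref{lem:pldt}), which would now cost $d+2$ queries, is replaced by the $2$-query line vs. point test of Lemma~\ref{lem:line-vs-point}. Second, because the new code stores the full restriction polynomial on every line as a coefficient vector in $\mathbb{F}^{d+1}$, a single query at the line $\ell(\mathbf{z},\mathbf{h})$ already yields all values $\widetilde f(\mathbf{z}+\lambda\mathbf{h})$ for $\lambda\in\mathbb{F}$ together with every univariate derivative evaluated at $\lambda=0$; in particular, the $(d+1)$-point interpolations that appear in Steps~2 and~3 of Lemma~\ref{lem:ltc1} collapse to single line queries.

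Concretely, let $f:\mathcal{L}^{2m}\to\mathbb{F}^{t(d+1)\binom{2m+r}{r}}$ be a plausible codeword, interpreted as a table of univariate degree-$d$ polynomials indexed by the parallel copy $i\in[t]$ and the derivative index $\mathbf{v}\in S^{2m}_{\le r}$. Stage~1 (low-degree check): sample uniform $\mathbf{z},\mathbf{h}_0,\mathbf{h}_1\in\mathbb{F}^{2m}$, query $P_0:=f[\ell(\mathbf{z},\mathbf{h}_0)]$ and $P_1:=f[\ell(\mathbf{z},\mathbf{h}_1)]$, and accept iff $P_0[i,\mathbf{v}](0)=P_1[i,\mathbf{v}](0)$ for every $(i,\mathbf{v})$; viewing $P_0(\mathbf{z})$ as a randomized point oracle and $P_1$ as the line oracle, a componentwise application of Lemma~\ref{lem:line-vs-point} guarantees that if this stage rejects with probability below $\delta/2$ then each component of $f$ is $\delta$-close to a unique $(2m)$-variate degree-$d$ polynomial tuple $\widetilde f$. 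Stage~2 (derivative consistency): on the same pair of lines, additionally verify, for each $i\in[t]$ and $1\le k\le r$, the chain-rule identity $\tfrac{1}{k!}\tfrac{d^k P_j[i,\mathbf{0}]}{d\lambda^k}(0)=\sum_{\mathbf{v}\in S^{2m}_k} P_j[i,\mathbf{v}](0)\cdot\mathbf{h}_j^{\mathbf{v}}$ for $j\in\{0,1\}$. Stage~3 (sum-form check): as in Step~3 of Lemma~\ref{lem:ltc1}, sample $\mathbf{x},\mathbf{y}\in\mathbb{F}^m$ uniformly, realize $\widetilde f$ at the four points $(\mathbf{x},\mathbf{y}),(\mathbf{x},\mathbf{0}),(\mathbf{0},\mathbf{y}),(\mathbf{0},\mathbf{x})$ by one line query through each, and test the two identities in~(\ref{eq:pldtc1:form}).

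For the soundness I would union-bound, as in Lemma~\ref{lem:ltc1}, the event that any queried line falls in the $\delta$-fraction on which $f$ disagrees with $\widetilde f$. Conditioned on all queried lines being ``good,'' Stage~2 fails under Schwartz--Zippel on polynomials of degree at most $d+r\le 3d/2$ in $(\mathbf{z},\mathbf{h})$, while Stage~3 fails on polynomials of degree $d$ in $(\mathbf{x},\mathbf{y})$; the standing hypothesis $|\mathbb{F}|>3d$ makes both contributions $O(d/|\mathbb{F}|)$, which is dominated by $\delta$ in the stated regime $0.035<\delta<0.06$. Counting queries: Stages~1 and~2 share the two lines $\ell(\mathbf{z},\mathbf{h}_0),\ell(\mathbf{z},\mathbf{h}_1)$; Stage~3 uses one line through each of the four points plus a few consistency-checking lines, totaling at most $11$ line queries, and combining the per-stage rejection probabilities yields the $1-\tfrac{2}{13}\delta$ bound after balancing.

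The main obstacle is reconciling the two-query line vs. point test, which in Lemma~\ref{lem:line-vs-point} assumes two \emph{independent} oracles (one for points, one for lines), with the fact that our code provides only line values. My plan is to simulate the point oracle by a second random line query and argue, essentially as in the proof of Lemma~\ref{lem:line-vs-point}, that this substitution costs only a constant factor in the soundness; the second delicate point is keeping the cumulative probability of a ``bad'' queried line below Stage~1's soundness gap across all $11$ queries, which is precisely what pins down the lower threshold $\delta>0.035$ and the constant $\tfrac{2}{13}$ in the conclusion.
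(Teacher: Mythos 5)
Your overall architecture (line vs.\ point test for Stage~1, directional-derivative chain-rule check for Stage~2, sum-form check for Stage~3, with a query budget of $11$) matches the paper's, but two of your steps have concrete gaps.

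\textbf{Stage~1 does not need a second random line, and your replacement would require a new lemma.} You propose to simulate the missing point oracle by querying a \emph{second} random line $\ell(\mathbf{z},\mathbf{h}_1)$ through the same $\mathbf{z}$, turning Lemma~\ref{lem:line-vs-point} into a ``line vs.\ line'' test and asserting this costs only a constant factor in soundness. This does not follow from the lemma as stated, and you would have to reprove it. The paper avoids this entirely by observing that the degenerate single-point lines $\ell(\mathbf{x},\mathbf{0})=\{\mathbf{x}\}$ are themselves members of $\mathcal{L}^{2m}$, hence legitimate codeword positions. Defining the point oracle $f(\mathbf{x}):=P_{\ell(\mathbf{x},\mathbf{0})}(\mathbf{x})$ gives a \emph{deterministic} function on which Lemma~\ref{lem:line-vs-point} applies verbatim, with one query to $\pi[\ell(\mathbf{x},\mathbf{0})]$ and one to $\pi[\ell(\mathbf{x},\mathbf{h})]$. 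Once this is done, the paper derives (via a Markov argument and the fact that for a uniform $\mathbf{h}\neq\mathbf{0}$ the line $\ell(\mathbf{x},\mathbf{h})$ is uniform over the nondegenerate lines) the key inequality $\Delta(\pi,\mathrm{Im}(C^{\odot t}))\le\frac{13}{2}\varepsilon$, which is where the $\frac{2}{13}$ constant comes from. You flag the constant as ``pinned down by balancing'' but do not derive it, and the derivation goes through the degenerate-line trick you replaced.

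\textbf{Stage~3 cannot use a single line query per point.} You propose to ``realize $\widetilde f$ at the four points $(\mathbf{x},\mathbf{y}),(\mathbf{x},\mathbf{0}),(\mathbf{0},\mathbf{y}),(\mathbf{0},\mathbf{x})$ by one line query through each.'' The soundness guarantee from Stage~1 is $\Pr_{\mathbf{x},\mathbf{h}}[P_{\ell(\mathbf{x},\mathbf{h})}\neq\widetilde f_{|\ell(\mathbf{x},\mathbf{h})}]\le\frac{9}{2}\varepsilon$ for \emph{uniform} $(\mathbf{x},\mathbf{h})$. For a fixed, non-uniform point such as $(\mathbf{x},\mathbf{0})$, a random line through it is not uniform in $\mathcal{L}^{2m}$ and the $\frac{9}{2}\varepsilon$ bound does not apply; the adversarial proof $\pi$ can be wrong on \emph{every} line through a structured point like $(\mathbf{x},\mathbf{0})$ while still passing Stage~1. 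The paper handles this by reusing the $2$-query decoding procedure of Section~\ref{sec:pldtc1-ldc}: to read $\widetilde f(\mathbf{z})$ at an arbitrary $\mathbf{z}$, pick a random direction $\mathbf{h}$ and scalars $\lambda_1,\lambda_2$, note that $\mathbf{z}_1=\mathbf{z}+\lambda_1\mathbf{h}$ and $\mathbf{z}_2=\mathbf{z}+\lambda_2\mathbf{h}$ are uniform, query a random line through each of $\mathbf{z}_1,\mathbf{z}_2$, and then recover $\widetilde f(\mathbf{z})$ by solving the $(d+1)\times(d+1)$ derivative system (Lemma~\ref{lem:det-maintext}). This costs $2$ queries per point and $8$ total in Stage~3, giving $2+1+8=11$, and each point decoding fails with probability $\le 9\varepsilon$; your one-query-per-point version would not match the stated budget or the stated soundness if one patched the first gap.
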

\begin{proof}
Let $\pi:\mathcal L^{2m} \to \mathbb F^{t \times \binom{2m+r}{r} \times (d+1)}$ be the function representation of a plausible codeword in $\text{Im}(C^{\odot t})$. For every $\ell \in \mathcal L^{2m}$, we use $P_{\ell}(\cdot):\ell \to \mathbb F^{t \times \binom{2m+r}{r}}$ to denote the degree-$d$ function decoded from $\pi(\ell)$. Furthermore, we define a function $f:\mathbb F^{2m} \to \mathbb F^{t \times \binom{2m+r}{r}}$ by setting $f(\mathbf x):=P_{\ell(\mathbf x,\mathbf 0)}(\mathbf x)$ for every $\mathbf x \in \mathbb F^{2m}$. For each $\sigma \in \mathbb F^{t \times \binom{2m+r}{r}}$, we treat $\sigma$ as a $t \times |S^{2m}_{\le r}|$ table and use $\sigma[i,\mathbf v]$ to denote its $(i,\mathbf v)$-th entry for every $i \in [t],\mathbf v \in S^{2m}_{\le r}$.

The local testing algorithm is as follows.
\begin{enumerate}
    \item First, we invoke the line vs. point test in Lemma \ref{lem:line-vs-point} by setting $m\gets 2m$, $t\gets t\times\binom{2m+r}{r}$ and feeding $(f,\pi)$ into the algorithm $\mathcal T$. If $\mathcal T$ rejects with probability at most $\varepsilon$ for some $0<\varepsilon<\frac{1}{48}$, then we have:
    \begin{itemize}
        \item $\Pr_{\mathbf x,\mathbf h \in \mathbb F^{2m}}\left[f(\mathbf x)\neq P_{\ell(\mathbf x,\mathbf h)}(\mathbf x)\right]\le  \varepsilon$.
        \item There exists a $(2m)$-variate degree-$d$ polynomial tuple $\widetilde f:\mathbb F^{2m} \to \mathbb F^{t \times \binom{2m+r}{r}}$ and a set $T \subseteq \mathbb F^{2m}$ with $|T| \ge (1-2\varepsilon) |\mathbb F|^{2m}$, such that $f$ agrees with $\widetilde f$ on $T$.
    \end{itemize}
    Thus, by a union bound we get
    \[\Pr_{\mathbf x,\mathbf h \in \mathbb F^{2m}}\left[\widetilde f(\mathbf x)\neq P_{\ell(\mathbf x,\mathbf h)}(\mathbf x)\right]\le 3\varepsilon.\]
    Note that the distribution of $\mathbf y$, where we uniformly pick $\mathbf x,\mathbf h \in \mathbb F^{2m}$ and $\mathbf y \in \ell(\mathbf x,\mathbf h)$, is also uniform over $\mathbb F^{2m}$. Thus,
    \begin{align*}
        & \Ex_{\mathbf x,\mathbf h \in \mathbb F^{2m}}\left[\Pr_{\mathbf y \in \ell(\mathbf x,\mathbf h)}\left[\widetilde f(\mathbf y) \neq P_{\ell(\mathbf x,\mathbf h)}(\mathbf y)\right]\right]\\
        = & \Ex_{\mathbf x,\mathbf h \in \mathbb F^{2m}}\left[\Pr_{\mathbf y \in \ell(\mathbf x,\mathbf h)}\left[\widetilde f(\mathbf y) \neq P_{\ell(\mathbf y,\mathbf h)}(\mathbf y)\right]\right]\\
        = & \Pr_{\mathbf y,\mathbf h \in \mathbb F^{2m}}\left[\widetilde f(\mathbf y) \neq P_{\ell(\mathbf y,\mathbf h)}(\mathbf y)\right] \le 3\varepsilon.
    \end{align*}
    Let
    {\footnotesize
    \[
    B=\left\{(\mathbf x,\mathbf h)\in \mathbb F^{2m}\times \mathbb F^{2m} \mid \Pr_{\mathbf y \in \ell(\mathbf x,\mathbf h)}\left[\widetilde f(\mathbf y) \neq P_{\ell(\mathbf x,\mathbf h)}(\mathbf y)\right]\ge \frac{2}{3}\right\}.
    \]}
    By Markov's inequality, we have \[\Pr_{\mathbf x,\mathbf h \in \mathbb F^{2m}}\left[(\mathbf x,\mathbf h) \in B\right] \le \frac{9}{2} \varepsilon.\]
    For any $(\mathbf x,\mathbf h) \notin B$, we have $\Delta(\widetilde f_{|\ell(\mathbf x,\mathbf h)},P_{\ell(\mathbf x,\mathbf h)}) <\frac{2}{3}$. Since the distance between the table of values of two different degree-$d$ polynomials is at least $1-\frac{d}{|\mathbb F|}>\frac{2}{3}$, we have $P_{\ell(\mathbf x,\mathbf h)}=\widetilde f_{|\ell(\mathbf x,\mathbf h)}$.

    Conditioned on $\mathbf h \neq \mathbf 0$, $\ell(\mathbf x,\mathbf h)$ is uniform over 
    $\mathcal L^{2m}_* := \mathcal L^{2m} \setminus \{\ell(\mathbf x,\mathbf 0) \mid \mathbf x \in \mathbb F^{2m}\}$. Therefore,
    \begin{align*}
        & 1 - \Delta(\pi, \mathrm{Im}(C^{\odot t}))\\
        \ge &\Pr_{\ell \in \mathcal L^{2m}}\left[P_{\ell}=\widetilde f_{|\ell}\right] \\
        \ge & \Pr_{\mathbf x,\mathbf h \in \mathbb F^{2m}}\left[P_{\ell(\mathbf x,\mathbf h)}=\widetilde f_{|\ell(\mathbf x,\mathbf h)}\mid \mathbf h \neq \mathbf 0\right]\cdot \frac{|\mathcal L^{2m}_*|}{|\mathcal L^{2m}|} \\
        \ge & \left(\Pr_{\mathbf x,\mathbf h \in \mathbb F^{2m}} \left[ P_{\ell(\mathbf x,\mathbf h)} = \widetilde f_{\ell(\mathbf x,\mathbf h)}\right]-\Pr_{\mathbf x,\mathbf h\in\mathbb F^{2m}}\left[\mathbf h=\mathbf 0\right]\right)\cdot \\
        &\left(1-\frac{|\mathbb F|^{2m}}{|\mathbb F|^{4m-2}-|\mathbb F|^{2m-2}+|\mathbb F|^{2m}}\right) \\
        \ge & \left(1-\frac{9}{2}\cdot \varepsilon - \frac{1}{|\mathbb F|^{2m}}\right)\left(1-\frac{1}{|\mathbb F|^{2m-2}}\right) \\
        \ge& 0.99-\frac{9}{2}\varepsilon\\
        >& 1-\frac{13}{2}\varepsilon,
    \end{align*}
    thus $\Delta(\pi, \mathrm{Im}(C^{\odot t}))\le \frac{13}2\varepsilon$. For the last inequality above, we further assumed $\varepsilon>0.005$.

    \item Second, we test whether for every $i \in [t],j \in [r]$ and $\mathbf v \in S^{2m}_j$, $\widetilde f(\cdot)[i,\mathbf v]=\frac{\partial^j \widetilde f(\cdot)[i,\mathbf 0]}{\partial \mathbf x^{\mathbf v}}$. To do this, we pick $\mathbf x,\mathbf h \in \mathbb F^{2m}$ uniformly at random, decode the univariate degree-$d$ polynomial $g(\lambda):=P_{\ell(\mathbf x,\mathbf h)}(\mathbf x+\lambda \mathbf h)\in \mathbb F^{t \times \binom{2m+r}{r}}$, and accepts if $\forall i \in [t]$,
    \begin{equation}\label{eq:pldtc2-ltc}
    \left\{
        \begin{aligned}
        g'(0)[i,\mathbf 0] & =\sum_{\mathbf v \in S_1} g(0)[i,
        \mathbf v] \cdot \mathbf h^{\mathbf v} \\
        g''(0)[i,\mathbf 0] & = \sum_{\mathbf v \in S_2} g(0)[i,\mathbf v]\cdot \mathbf h^{\mathbf v} \\
        & \ldots \\
        g^{(r)}(0)[i,\mathbf 0] & = \sum_{\mathbf v \in S_r} g(0)[i,\mathbf v]\cdot \mathbf h^{\mathbf v}
        \end{aligned}
    \right..
    \end{equation}
    For completeness, there exists a $(2m)$-variate degree-$d$ polynomial tuple $\widetilde f:\mathbb F^{2m} \to \mathbb F^{t \times \binom{2m+r}{r}}$ satisfying $\widetilde f(\cdot)[i,\mathbf v]=\frac{\partial^j \widetilde f(\cdot)[i,\mathbf 0]}{\partial \mathbf x^{\mathbf v}}$ for every $i \in [t],j \in [r],\mathbf v \in S^{2m}_j$, and we have $P_{\ell(\mathbf x,\mathbf h)}=\widetilde f_{|\ell(\mathbf x,\mathbf h)}$ for every $\mathbf x,\mathbf h \in \mathbb F^{2m}$. Therefore, both LHS and RHS of the $k$-th equation in \ref{eq:pldtc2-ltc} equal to the order-$k$ directional derivative of $\widetilde f$ with respect to direction $\mathbf h$ on point $\mathbf x$.

    For soundness, recall from step 1 that
    \[\Pr_{\mathbf x,\mathbf h \in \mathbb F^{2m}}\left[P_{\ell(\mathbf x,\mathbf h)}\neq \widetilde f_{|\ell(\mathbf x,\mathbf h)}\right] \le \frac{9}{2}\varepsilon,\]
    where $\widetilde f:\mathbb F^{2m} \to \mathbb F^{t \times \binom{2m+r}{r}}$ is a $(2m)$-variate degree-$d$ polynomial tuple.
    Thus, with probability at least $1-\frac{9}{2}\varepsilon$, we have $g(\lambda)=\widetilde f(\mathbf x+\lambda \mathbf h)$. If $\exists i \in [t],j \in [r]$ and $\mathbf v \in S^{2m}_j$ such that $\widetilde f(\cdot)[i,\mathbf v]\neq \frac{\partial^j \widetilde f(\cdot)[i,\mathbf 0]}{\partial \mathbf x^{\mathbf v}}$, then the LHS and RHS of (\ref{eq:pldtc2-ltc}) are different polynomials on variables $(\mathbf x,\mathbf h)$ with degree at most $d+r \le \frac{3}{2}d$. By Schwartz-Zippel Lemma, they differ on at least $1-\frac{3d}{2|\mathbb F|}$ fraction of points $(\mathbf x,\mathbf h)$. Therefore, the algorithm rejects with probability at least $1-\frac{3d}{2|\mathbb F|}-\frac{9}{2}\varepsilon$.

    \item At the end, we test if for every $i \in [t]$, $\widetilde f(\mathbf x,\mathbf y)$ is of the form $p(\mathbf x)+p(\mathbf y)$ for some $m$-variate degree-$d$ polynomial $p$. This is equivalent to testing if 
    \begin{equation}\label{eq:pldtc2:form}
        \left\{
        \begin{aligned}
            \widetilde f(\mathbf x,\mathbf y)[i,\mathbf 0] & =\widetilde f(\mathbf x,\mathbf 0)[i,\mathbf 0]+\widetilde f(\mathbf 0,\mathbf y)[i,\mathbf 0] \\
            \widetilde f(\mathbf x,\mathbf 0)[i,\mathbf 0] & =\widetilde f(\mathbf 0,\mathbf x)[i,\mathbf 0]
        \end{aligned}
        \right.
    \end{equation}
    holds for every $\mathbf x,\mathbf y \in \mathbb F^m$. Thus, we want to locally get the value of $(\widetilde f(\mathbf z)[1,\mathbf 0],\ldots,\widetilde f(\mathbf z)[t,\mathbf 0])$ for any $\mathbf z \in \mathbb F^{2m}$. To do this, we follow the 2-query local decoding algorithm in Section \ref{sec:pldtc1-ldc}. More specifically, we pick $\mathbf h \in \mathbb F^{2m}$ and different $\lambda_1,\lambda_2 \in \mathbb F \setminus \{0\}$ uniformly at random. Let $\mathbf z_1:=\mathbf z+\lambda_1 \mathbf h$ and $\mathbf z_2:=\mathbf z+\lambda_2 \mathbf h$. We know $\mathbf z_1,\mathbf z_2$ follow the uniform distribution over $\mathbb F^{2m}$. Thus, we further pick random $\mathbf h_1,\mathbf h_2 \in \mathbb F^{2m}$ and read out $P_{\ell(\mathbf z_1,\mathbf h_1)}(\mathbf z_1)$ and $P_{\ell(\mathbf z_2,\mathbf h_2)}(\mathbf z_2)$. By similar analysis above, with probability at least $1-9\varepsilon$, we have
    \[\left\{\begin{aligned}
        P_{\ell(\mathbf z_1,\mathbf h_1)}(\mathbf z_1) =\widetilde f(\mathbf z_1) = \widetilde f(\mathbf z+\lambda_1 \mathbf h) \\
        P_{\ell(\mathbf z_2,\mathbf h_2)}(\mathbf z_2) =\widetilde f(\mathbf z_2) = \widetilde f(\mathbf z+\lambda_2 \mathbf h)
    \end{aligned}\right..\]
    Therefore, we can solve the coefficients of $g(\lambda):=(\widetilde f(\mathbf z+\lambda \mathbf h)[1,\mathbf 0],\ldots,\widetilde f(\mathbf z+\lambda \mathbf h)[t,\mathbf 0])$ by Gaussian elimination, and obtain $(\widetilde f(\mathbf z)[1,\mathbf 0],\ldots,\widetilde f(\mathbf z)[t,\mathbf 0])=g(0)$ as described in Section \ref{sec:pldtc1-ldc}.

    If LHS and RHS are two different functions on $\mathbf x,\mathbf y$ in either of the two equations in (\ref{eq:pldtc2:form}), then they differ on at least $1-\frac{d}{|\mathbb F|}$ fraction of $(\mathbf x,\mathbf y)$ by Schwartz-Zippel Lemma. Thus, the rejection probability in this step is as least $1-\frac{d}{|\mathbb F|}-4\cdot 9\varepsilon$. 
\end{enumerate}
Overall, let $\varepsilon$ be the rejection probability of the line vs. point test. Assume $0.005<\varepsilon<0.01$. Our whole parallel local testing algorithm generates at most $2+2+13=17$ random elements in $\mathbb F^{2m}$ and $8$ elements in $\mathbb F \setminus \{0\}$, and makes $2+1+8=11$ queries, such that if the entire test rejects with probability at most $\min(\varepsilon,1-\frac{3d}{2|\mathbb F|}-\frac{9}{2}\varepsilon, 1-\frac{d}{|\mathbb F|}-4\cdot 9\varepsilon)=\varepsilon$, then the given string is at most $(\frac{13}{2}\varepsilon)$-far from a codeword in $C^{\odot t}$. In other words, if the rejection probability is at most $\frac{2}{13}\delta$ for some $0.035<\delta < 0.06$, then the given string is at most $\delta$-far from a codeword. Therefore, $C$ is $(11,|\mathbb F|^{34m+8},\delta,1-\frac{2}{13}\delta)$-parallel locally testable for any $0.035<\delta<0.06$.

\end{proof}

\subsubsection{Proof of Local Decodability}

\begin{lemma}
    The code $C$ defined in Section \ref{sec:pldtc2-construct} is $(2,|\mathbb F|^{6m+2},\delta,1-2\delta)$-smooth parallel locally decodable with respect to all $\chi_i$ and $\psi_{i,j}$ for every $0<\delta<\frac{1}{2}$.
\end{lemma}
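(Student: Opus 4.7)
The plan is to lift the 2-query smooth decoder of Lemma~\ref{lem:ldc1} to the line-indexed setting of Section~\ref{sec:pldtc2-construct}: each codeword symbol of $C$ now encodes the entire restriction of every order-$\le r$ partial-derivative table of $q$ to a line $\ell\in\mathcal L^{2m}$ as a tuple of $t\cdot(d+1)\cdot\binom{2m+r}{r}$ coefficients, so a single symbol query lets us evaluate $\widetilde f$ (and all its partial derivatives in $\mathbb F^{t\times\binom{2m+r}{r}}$) at every point of that line. Exactly as in Section~\ref{sec:pldtc1-ldc}, the identities $a_i=q(\mathbf u_i,\mathbf 0)$ and $a_i+a_j=q(\mathbf u_i,\mathbf u_j)$ reduce decoding with respect to every $\chi_i$ and $\psi_{i,j}$ to one atomic task: given a point $\mathbf z\in\mathbb F^{2m}$ and a word $\pi$ that is $\delta$-close to $C^{\odot t}(\widetilde f)$, recover the tuple $(\widetilde f(\mathbf z)[i,\mathbf 0])_{i\in[t]}$.

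To perform this task, I would sample $\mathbf h\in\mathbb F^{2m}$ uniformly, distinct $\lambda_1,\lambda_2\in\mathbb F\setminus\{0\}$ uniformly, and auxiliary directions $\mathbf h_1,\mathbf h_2\in\mathbb F^{2m}$ uniformly, then issue the two queries $\pi(\ell(\mathbf z+\lambda_1\mathbf h,\mathbf h_1))$ and $\pi(\ell(\mathbf z+\lambda_2\mathbf h,\mathbf h_2))$. From each query, evaluating the recovered line polynomials at the point $\mathbf z+\lambda_j\mathbf h$ (which lies on the queried line) yields all order-$\le r$ partial derivatives of $\widetilde f[\cdot,\mathbf 0]$ at that point, and the chain rule turns these into $g(\lambda_j),g'(\lambda_j),\ldots,g^{(r)}(\lambda_j)$ for the univariate degree-$d$ polynomial tuple $g(\lambda):=(\widetilde f(\mathbf z+\lambda\mathbf h)[i,\mathbf 0])_{i\in[t]}$. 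This is precisely the $2(r+1)=d+1$ pieces of evaluation-and-derivative data required by the interpolation matrix in Lemma~\ref{lem:det-maintext}; solving the resulting $(d+1)\times(d+1)$ non-singular linear system componentwise gives all coefficients of $g$, and the decoder returns $g(0)=(\widetilde f(\mathbf z)[i,\mathbf 0])_{i\in[t]}$.

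The three properties then follow. \emph{Perfect completeness} is immediate, since on an uncorrupted codeword the line restrictions, and hence the univariate tuple $g$, are recovered exactly. \emph{Local decodability} follows from a union bound: if $\pi$ is $\delta$-close to $C^{\odot t}(\widetilde f)$, then each individual queried line is the genuine codeword symbol with probability at least $1-\delta$, so both are correct with probability at least $1-2\delta$, in which case the decoder succeeds. The point requiring the most care is \emph{smoothness}: since $\mathbf h$ is uniform and each $\lambda_j\neq 0$, the offset $\mathbf z+\lambda_j\mathbf h$ is uniform on $\mathbb F^{2m}$, hence $(\mathbf z+\lambda_j\mathbf h,\mathbf h_j)$ is uniform on $\mathbb F^{2m}\times\mathbb F^{2m}$, which under the line-parameterization convention of Section~\ref{sec:pldtc2-construct} makes each of the two queried symbols uniformly distributed over $\mathcal L^{2m}$; the indirection through $\mathbf z+\lambda_j\mathbf h$ is essential, because querying only lines through $\mathbf z$ would have put zero probability mass on lines avoiding $\mathbf z$ and thereby violated smoothness. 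Counting randomness, we use $|\mathbb F|^{2m}$ choices for $\mathbf h$, at most $|\mathbb F|(|\mathbb F|-1)$ choices for $(\lambda_1,\lambda_2)$, and $|\mathbb F|^{4m}$ choices for $(\mathbf h_1,\mathbf h_2)$, bounded by $|\mathbb F|^{6m+2}$ as claimed; everything else is a mechanical lift of Lemma~\ref{lem:ldc1}.
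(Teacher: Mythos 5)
Your proposal follows the paper's proof essentially verbatim: both reduce to locally correcting $(\widetilde f(\mathbf z)[i,\mathbf 0])_{i\in[t]}$ at $\mathbf z=(\mathbf u_i,\mathbf 0)$ or $(\mathbf u_i,\mathbf u_j)$, sample $\mathbf h\in\mathbb F^{2m}$ and distinct nonzero $\lambda_1,\lambda_2$, query random lines through $\mathbf z+\lambda_1\mathbf h$ and $\mathbf z+\lambda_2\mathbf h$, and interpolate via the $(d+1)\times(d+1)$ system of Lemma~\ref{lem:det-maintext}, with the same $|\mathbb F|^{6m+2}$ randomness count. The only cosmetic difference is that you realize the random-line-through-a-point step by drawing an explicit auxiliary direction $\mathbf h_j\in\mathbb F^{2m}$ while the paper samples uniformly from the set of lines containing the point; both parameterizations carry the same small slack around singleton lines, so this is not a divergence of approach.
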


\begin{proof}
    Let $\pi:\mathcal L^{2m} \to \mathbb F^{t \times \binom{2m+r}{r} \times (d+1)}$ be the function representation of a string which is $\delta$-close to $\text{Im}(C^{\odot t})$. For every $\ell \in \mathcal L^{2m}$, we use $P_{\ell}(\cdot):\ell \to \mathbb F^{t \times \binom{2m+r}{r}}$ to denote the degree-$d$ function decoded from $\pi(\ell)$. Since $\Delta(\pi,\text{Im}({C^{\odot t}})) \le \delta$, we have
    \[\Pr_{\ell \in \mathcal L^{2m}}\left[P_{\ell}=\widetilde f_{|\ell}\right]\ge 1-\delta.\]

    To decode $t$-parallelized messages, it suffices to locally correct $(\widetilde f(\mathbf z)[1,\mathbf 0],\ldots,\widetilde f(\mathbf z)[t,\mathbf 0]) \in \mathbb F^t$ for $\mathbf z=(\mathbf u_i,\mathbf 0)$ or $\mathbf z=(\mathbf u_i,\mathbf u_j)$. We follow a similar procedure and analysis in step 3 of our local testing algorithm, where we pick random $\mathbf h \in \mathbb F^{2m}$ and $\lambda_1,\lambda_2 \in \mathbb F \setminus\{0\}$, and define $\mathbf z_1:=\mathbf z+\lambda_1 \mathbf h, \mathbf z_2:=\mathbf z+\lambda_2 \mathbf h$. We first read out $P_{\ell}(\mathbf z_1)$, where $\ell$ is a random line in $\mathcal L^{2m}$ passing through $\mathbf z_1$, to get the value $\widetilde f(\mathbf z_1)$.
    We apply this to $\mathbf z_2$, too. Note that $\mathbf z_1,\mathbf z_2$ are 2 uniform points in $\mathbb F^{2m}$, thus the 2 queried lines are also uniform in $\mathcal L^{2m}$. The remaining proof is the same as that in Section \ref{sec:pldtc1-ldc}, and we omit the details. In short, we can recover $(\widetilde f(\mathbf z)[1,\mathbf 0],\ldots,\widetilde f(\mathbf z)[t,\mathbf 0]) \in \mathbb F^t$ with probability at least $1-2\delta$, by randomly drawing 1 element in $\mathbb F^{2m}$, 2 elements in $\mathbb F \setminus \{0\}$,  2 elements in $\{\ell \in \mathcal L^{2m} \mid \ell \text{ passes through a given point}\}$, and making 2 uniform queries. Therefore, $C$ is $(2,|\mathbb F|^{6m+2},\delta,1-2\delta)$-smooth parallel locally decodable with respect to all $\chi_i$ and $\psi_{i,j}$ for every $0<\delta<\frac{1}{2}$.
\end{proof}
\section{Applications}
\subsection{Stronger ETH Lower Bounds for Constant-Gap \kclique{}}
\label{sec:stronger}

In this section, we apply Theorem \ref{thm:vcsp2gapclique} with the PLDTC constructed in Section \ref{sec:pldtc2} to prove a stronger lower bound of constant-gap \kclique{} based on ETH (Theorem \ref{thm:eth-gap-clique-lowerbound}). By further applying a disperser to amplify the gap, we can rule out any FPT algorithm that approximates \kclique{} within $k/(\log k)^{\lambda\log\log\log k}$ ratio under ETH (Theorem \ref{thm:eth-gap-clique-fpt}). 

\begin{theorem}\label{thm:eth-gap-clique-lowerbound}
Assuming ETH, for some constant $\lambda>0$, constant-gap \kclique{} admits no algorithm with running time $f(k)n^{k^{\lambda/\log \log k}}$ for any computable function $f$.
\end{theorem}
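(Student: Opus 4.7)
The plan is to combine the vector 2CSP hardness (Lemma \ref{lem:vcsp-intro} / Corollary \ref{cor:eth}), the reduction to constant-gap \kclique{} through PLTDCs (Theorem \ref{thm:vcsp2gapclique}), and the polynomial-based PLTDC construction of Theorem \ref{thm:pldtc2}, tuning the parameters so that the codeword length grows like $k^{\Theta(\log\log k)}$ while the alphabet size blow-up stays $|\mathbb F|^{O(k^{\mu})}$ for some $\mu<1$.

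First, I would fix a sufficiently large $d$, set $r=(d-1)/2$, $m=2^{100}\cdot d$, $k=\binom{m+d}{d}$, and take $\mathbb F$ to be the smallest prime field with $|\mathbb F|>\max(3d,9216)$; Bertrand's postulate then gives $|\mathbb F|=\Theta(d)$. The elementary estimates $(m/d)^d\le\binom{m+d}{d}\le(e(m+d)/d)^d$ yield $d=\Theta(\log k)$ together with $|\mathbb F|^m\ge\binom{m+d}{d}$, so the hypotheses of Theorem \ref{thm:pldtc2} are met. Plugging these parameters into Theorem \ref{thm:pldtc2} produces a PLTDC $C:\mathbb F^{k}\to(\Sigma_2)^{k'}$ with block length $k'\le|\mathbb F|^{4m}=k^{\Theta(\log\log k)}$ and alphabet $|\Sigma_2|=|\mathbb F|^{(d+1)\binom{2m+r}{r}}$, where a direct binomial estimate shows $\binom{2m+r}{r}=O(k^{\mu})$ for some $\mu<1$ (the overview quotes $\mu=0.54$ for the concrete choice $m=2^{100}d$). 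Thus $|\Sigma_2|=|\mathbb F|^{O(k^{\mu})}$.

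Next, I would invoke Corollary \ref{cor:eth}: under ETH there is no $f(k)\cdot n^{o(k)}$ algorithm for vector 2CSP with $k$ variables, alphabet $\mathbb F^{t}$ and dimension $t=\Theta(\log n/\log|\mathbb F|)$. Feeding this vector 2CSP instance together with the PLTDC $C$ into Theorem \ref{thm:vcsp2gapclique} yields, in time polynomial in the output size, a constant-gap $K$-\textsc{Clique} instance with $K=R_T=(k')^{O(1)}=k^{\Theta(\log\log k)}$ parts and $N\le K\cdot|\Sigma_2|^{tq}=n^{O(k^{\mu})}$ vertices.

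Finally I would derive the contradiction: suppose there existed an algorithm solving constant-gap $K$-\textsc{Clique} in time $f(K)\cdot N^{K^{\lambda/\log\log K}}$ for every computable $f$ and for every $\lambda>0$. Since $K=k^{\Theta(\log\log k)}$, we have $\log\log K=\Theta(\log\log k)$, hence $K^{\lambda/\log\log K}=k^{\Theta(\lambda)}$. Plugging in $N=n^{O(k^{\mu})}$ would solve the vector 2CSP instance in time
\[
f(K)\cdot n^{O(k^{\mu+\Theta(\lambda)})}.
\]
Choosing $\lambda>0$ small enough that $\mu+\Theta(\lambda)<1$ makes this $f'(k)\cdot n^{o(k)}$, contradicting Corollary \ref{cor:eth}. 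The main obstacle is the arithmetic accounting that $(d+1)\binom{2m+r}{r}$ really is $O(k^{\mu})$ with $\mu<1$, which requires controlling the binomial ratio $\binom{2m+r}{r}/\binom{m+d}{d}$ via the asymptotic entropy formula and justifies the specific choice $m=2^{100}d$; once this inequality is established, the rest of the argument is just chaining the parameters through the two black-box theorems.
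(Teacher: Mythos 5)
Your proposal matches the paper's proof essentially line by line: same choice of $m=2^{100}d$, $k=\binom{m+d}{d}$, field of prime size $\Theta(d)$, same invocation of Theorem~\ref{thm:pldtc2} to get $k'=k^{\Theta(\log\log k)}$ with alphabet blow-up $|\mathbb F|^{O(k^{\mu})}$ for $\mu\approx 0.54$, same composition of Corollary~\ref{cor:eth} with Theorem~\ref{thm:vcsp2gapclique}, and the same final parameter-chasing to conclude the $N^{K^{\Omega(1/\log\log K)}}$ lower bound. The only detail you elide (and flag as ``the main obstacle'') is the explicit binomial-ratio estimate showing $(d+1)\binom{2m+r}{r}\le |\mathbb F|^{k^{1/2+\varepsilon}}$ with $\varepsilon<0.04$, which the paper verifies directly from $\binom{2m+r}{r}\le(e(2m+r)/r)^r$ and $m\le dk^{1/d}$; everything else is the paper's argument.
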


\begin{theorem}\label{thm:eth-gap-clique-fpt}
Assuming ETH, for some constant $\lambda>0$, there is no FPT algorithm which can solve the $e(k)$-gap \kclique{} problem, where
\[e(k):=k/(\log k)^{\lambda\log\log\log k}.\]
\end{theorem}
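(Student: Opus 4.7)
\textbf{Proof sketch for Theorem~\ref{thm:eth-gap-clique-fpt}.}
The plan is to amplify the constant-gap lower bound of Theorem~\ref{thm:eth-gap-clique-lowerbound} into FPT-inapproximability via a standard disperser-based reduction, in the spirit of~\cite{CCK+17,DBLP:conf/icalp/LinRSW22}. Let $\lambda_0>0$ be the constant from Theorem~\ref{thm:eth-gap-clique-lowerbound} and $c>1$ the corresponding constant-gap factor; I will prove the theorem for any fixed $\lambda$ with $\lambda\lambda_0>1$. Suppose for contradiction that an FPT algorithm $\mathcal{A}$ decides $e(k)$-gap \kclique{} in time $f(k)\cdot N^{O(1)}$ on $N$-vertex instances.

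First, given a sufficiently large $k$, set $K:=\lceil(\log k)^{\lambda\log\log\log k+1}\rceil$ and $t:=\lceil C_0 K e(k)/k\rceil = \Theta(K/(\log k)^{\lambda\log\log\log k})$, for a large constant $C_0$ depending on $c$. By the standard probabilistic argument (each $S_i$ an independent uniform $t$-subset of $[K]$), followed by derandomization via the method of conditional expectations, I would construct a family $S_1,\dots,S_k\subseteq[K]$ such that for every $I\subseteq[k]$ with $|I|=k/e(k)$, $|\bigcup_{i\in I}S_i|\ge(1-\tfrac{1}{2c})K$. The expectation bound $K(1-t/K)^{k/e(k)}\le K\exp(-t(k/e(k))/K)\ll(1-\tfrac{1}{2c})K$ forces $t\gtrsim Ke(k)/k$, while the union bound $\binom{k}{k/e(k)}\exp(-\Omega(K))<1$ forces $K\gtrsim(k/e(k))\log(e\cdot e(k))\approx(\log k)^{\lambda\log\log\log k+1}$; both are consistent with my choices.

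Second, I would carry out the product reduction through the disperser. Given a constant-gap $K$-\textsc{Clique} instance $G=(V_1\dot\cup\cdots\dot\cup V_K,E)$ with $n$ vertices per part, build a $k$-partite graph $G'=(W_1\dot\cup\cdots\dot\cup W_k,E')$ where $W_i$ consists of all tuples $(u_j)_{j\in S_i}\in\prod_{j\in S_i}V_j$ whose underlying vertex set is a clique in $G$, so $|W_i|\le n^t$. For $i\ne i'$, place an edge between $u\in W_i$ and $v\in W_{i'}$ iff $\{u_j\}_{j\in S_i}\cup\{v_j\}_{j\in S_{i'}}$ is a clique in $G$; since each $V_j$ is independent, this automatically enforces consistency on $S_i\cap S_{i'}$. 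Completeness is immediate. For soundness, a clique of size $k/e(k)$ in $G'$ has support $I\subseteq[k]$ with $|I|=k/e(k)$, and its underlying vertex set is a clique in $G$ covering $|\bigcup_{i\in I}S_i|\ge(1-\tfrac{1}{2c})K>K/c$ parts of $G$, contradicting the constant-gap soundness of the input.

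Finally, running $\mathcal{A}$ on $G'$ solves constant-gap $K$-\textsc{Clique} in time $f(k)\cdot(kn^t)^{O(1)}=g(K)\cdot n^{O(t)}$ for a computable $g$. From $K=(\log k)^{\lambda\log\log\log k+1+o(1)}$ one derives $\log\log K\sim\log\log\log k$ and $\log k\sim K^{1/(\lambda\log\log K)}$, hence $t=\Theta(\log k)\sim K^{1/(\lambda\log\log K)}$. Since $\lambda\lambda_0>1$, this gives $O(t)=o(K^{\lambda_0/\log\log K})$, so the running time is asymptotically smaller than $n^{K^{\lambda_0/\log\log K}}$, contradicting Theorem~\ref{thm:eth-gap-clique-lowerbound}. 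The main subtlety, and the reason for the specific form $e(k)=k/(\log k)^{\lambda\log\log\log k}$, is the simultaneous balancing of the disperser-existence condition (lower bound on $K$) and the FPT-contradiction condition (upper bound on $t$), which are compatible exactly when $\lambda\lambda_0>1$.
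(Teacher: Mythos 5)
Your proof is correct and takes essentially the same route as the paper: disperse a constant-gap instance through a family of subsets, taking a product over each subset, to stretch the gap to $k/\mathrm{polylog}$. The paper packages this as a general Claim~\ref{claim:A} (matching your "Second" paragraph almost exactly, including using the independence of each $V_j$ to enforce consistency) and then plugs in Theorem~\ref{thm:eth-gap-clique-lowerbound}; you inline the parameter calculation. Your parameter choices are internally consistent and you correctly identify the two binding constraints (the disperser lower bound on $K$ and the time upper bound on $t$) and the resulting condition $\lambda\lambda_0>1$, which even gives a marginally sharper constant than the paper's $\lambda>2/\lambda_0$.

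Two small glosses worth flagging. First, you fix $k$ and then derive $K:=\lceil(\log k)^{\lambda\log\log\log k+1}\rceil$, but the reduction must go the other way: you are handed an arbitrary constant-gap $K$-\textsc{Clique} instance and need to choose $k$. The inversion works -- pick the largest $k$ with $(\log k)^{\lambda\log\log\log k+1}\le K$; since $K(k+1)/K(k)\to 1$ the slack is $1+o(1)$ and both $t=\Theta(\log k)$ and the disperser condition $\ln k\le K/r$ survive -- but as written the direction is backwards, and without the inversion the argument only produces an algorithm for the sparse set $\{K(k)\}$, which does not by itself yield a contradiction with Theorem~\ref{thm:eth-gap-clique-lowerbound}. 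Second, your disperser parameter $\delta=\tfrac{1}{2c}$ only gives $(1-\tfrac{1}{2c})K>K/c$ when $c>3/2$; either take $\delta<\tfrac{c-1}{c}$ or appeal to standard gap amplification (as the paper's footnote does) to assume $c$ is large. Neither point affects the substance of the argument.
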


Below we first prove Theorem \ref{thm:eth-gap-clique-lowerbound}.

\begin{proof}[Proof of Theorem~\ref{thm:eth-gap-clique-lowerbound}]
Fix a parameter $\alpha=2^{100}$. Pick any odd number $d>3072$ and set $m=\alpha d,k=\binom{m+d}{d}$. Since $\left(\frac{m}{d}\right)^d\le \binom{m+d}{d} \le \left(\frac{e(m+d)}{d}\right)^d$, we have $\frac{\log k}{\log e(1+\alpha)}\le d \le \frac{\log k}{\log \alpha}$ and $m \le d k^{\frac{1}{d}}$. Let $\mathbb F$ be the smallest field with prime size such that $|\mathbb F| > 3d$. By Bertrand's postulate, we have $|\mathbb F|\le 6d$.

By Corollary \ref{cor:eth}, assuming ETH, no algorithm can decide the satisfiability of a vector 2CSP instance with $k$ variables and dimension $\frac{5\log n}{\log |\mathbb F|}$ in $f(k)\cdot n^{o(k)}$ time for any computable function $f$. Below we reduce such an instance $\Pi=(X,\Sigma,\Phi)$ with size $n$ to constant-gap {$K$-\textsc{Clique}} via Theorem \ref{thm:vcsp2gapclique}.

Recall from Theorem \ref{thm:pldtc2} that the code $C$ encodes a message in $\mathbb F^{\binom{m+d}{d}}$ to $\left(\mathbb F^{(d+1)\cdot \binom{2m+r}{r}}\right)^{\mathcal L^{2m}}$ where $r=\frac{d-1}{2}$, and $C$ can be constructed in time depending only on $\mathbb F,m,d$. The alphabet size $|\Sigma_2|$ of a codeword is at most
\begin{align*}
|\mathbb F|^{(d+1)\cdot \binom{2m+r}{r}} &\le |\mathbb F|^{(d+1)\cdot\left(\frac{e\left(2m+\frac{d-1}{2}\right)}{\frac{d-1}{2}}\right)^{\frac{d-1}{2}}} \\
& \le  |\mathbb F|^{\left(6ek^{\frac{1}{d}}\right)^{\frac{d}{2}}} = |\mathbb F|^{k^{\frac{1}{2}+\varepsilon}}
\end{align*}
for $\varepsilon =\frac{\log (6e)}{200}<0.04$. 
By plugging the following parameters into Theorem \ref{thm:vcsp2gapclique}:
\begin{itemize}
    \item $|\Sigma_1|=|\mathbb F|$,
    \item $|\Sigma_2| \le |\mathbb F|^{k^{\frac{1}{2}+\varepsilon}}$,
    \item $t=\frac{5 \log n}{\log |\mathbb F|}$,
    \item $q=11$,
    \item $R_T = |\mathbb F|^{34m+8}$,
    \item $R_D = |\mathbb F|^{6m+2}$,
    \item $k'=|\mathcal L^{2m}| \le |\mathbb F|^{4m} \le R_T,R_D \le (k')^{\Theta(1)}$,
    \item $\delta=0.04$,
    \item $\varepsilon_T=1-\frac{2}{13}\delta < 0.994$,
    \item $\varepsilon_D=1-2\cdot 5\delta=0.6$,
\end{itemize}
we get a {$K$-\textsc{Clique}} instance $G=(V,E)$ with $K=R_T$, $N:=|V|=K \cdot n^{O(k^{\frac{1}{2}+\varepsilon})}$ in $N^{O(1)}$ time such that:
\begin{itemize}
    \item If $\Pi$ is satisfiable, then $G$ contains a size-$K$ clique.
    \item If $\Pi$ is not satisfiable, then $G$ contains no clique of size $\varepsilon_T K$.
\end{itemize}
Note that 
\begin{align*}
    & K=|\mathbb F|^{34m+8} = \Theta(\log k)^{\Theta(\log k)}=2^{\Theta(\log k\log \log k)},
\end{align*} 
and thus
\begin{align*}
    & k \ge K^{\Theta\left(\frac{1}{\log \log k}\right)} \ge K^{\Theta\left(\frac{1}{\log \log K}\right)}.
\end{align*} 

Therefore, for any computable function $g$, we set $f=g(K(k))\cdot K(k)^k$, then the $f(k)n^{\Omega(k)}$ lower bound for vector 2CSP implies the $g(K)N^{\Omega(k^{\frac{1}{2}-\varepsilon})}=g(K)N^{K^{\Omega\left(\frac{1}{\log \log K}\right)}}$ lower bound for constant-gap {$K$-\textsc{Clique}}.
\end{proof}

To prove Theorem~\ref{thm:eth-gap-clique-fpt}, we need to introduce the disperser, a combinatorial object used to amplify the gap of \kclique{}.

\begin{definition}[Disperser \cite{CW89, Zuc96a, Zuc96b}]
	    For positive integers $k,m,\ell, r\in \mathbb N$ and constant $\varepsilon\in(0,1)$, an $(k,m,\ell,r,\varepsilon)$-disperser is a collection $\mathcal I$ of $m$ subsets $I_1,\ldots,I_m \subseteq [k]$, each of size $\ell$, such that the union of any $r$ different subsets from the collection has size at least $(1-\varepsilon)k$.
    \end{definition}

    The following lemma shows the existence of dispersers, and how to construct them probabilistically.
 
    \begin{lemma}[\cite{CCK+17, DBLP:conf/icalp/LinRSW22}]
    \label{lem:disperser}
	For every $k,r \in \mathbb N$ and $\varepsilon \in (0,1)$, let $\ell=\lceil \frac{3k}{\varepsilon r}\rceil$ and let $I_1,\ldots,I_m$ be random $\ell$-subsets of $[k]$. If $\ln m \le \frac{k}{r}$ then $\mathcal I=\{I_1,\ldots,I_m\}$ is a $(k,m,\ell,r,\varepsilon)$-disperser with probability at least $1-e^{-k}$.
    \end{lemma}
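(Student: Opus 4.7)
The plan is to prove this via a straightforward probabilistic (union bound) argument. I want to show that the random collection $\mathcal I = \{I_1, \ldots, I_m\}$ fails to be a disperser only with tiny probability. The collection fails exactly when there exist $r$ indices $j_1 < \cdots < j_r$ such that $|I_{j_1} \cup \cdots \cup I_{j_r}| < (1-\varepsilon)k$, which is equivalent to the existence of a set $T \subseteq [k]$ of size $\lceil(1-\varepsilon)k\rceil$ containing all $r$ of these subsets. So the strategy is to union-bound over (a) the choice of $r$ subsets out of $m$, and (b) the choice of a ``bad'' containing set $T$.

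The key estimate is the containment probability: for any fixed $T \subseteq [k]$ with $|T| = \lceil(1-\varepsilon)k\rceil$ and a uniform random $\ell$-subset $I$ of $[k]$,
\[
\Pr[I \subseteq T] \;=\; \frac{\binom{|T|}{\ell}}{\binom{k}{\ell}} \;=\; \prod_{i=0}^{\ell-1}\frac{|T|-i}{k-i} \;\le\; \left(\frac{|T|}{k}\right)^{\ell} \;\le\; (1-\varepsilon)^{\ell} \;\le\; e^{-\varepsilon \ell}.
\]
Since the $I_j$'s are independent, for a fixed choice of $r$ indices and a fixed $T$ the probability that all chosen subsets lie in $T$ is at most $e^{-\varepsilon \ell r}$. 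Plugging in $\ell \ge 3k/(\varepsilon r)$ gives $\varepsilon \ell r \ge 3k$, so this probability is at most $e^{-3k}$.

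Now I apply the union bound. The number of $r$-index choices is $\binom{m}{r} \le m^{r}$, and the number of bad sets $T$ is $\binom{k}{\lceil(1-\varepsilon)k\rceil} \le 2^{k}$. Combining,
\[
\Pr[\mathcal I \text{ is not a disperser}] \;\le\; m^{r}\cdot 2^{k}\cdot e^{-3k}.
\]
Using the hypothesis $\ln m \le k/r$ gives $m^{r} \le e^{k}$, and $2^{k} = e^{k \ln 2}$, so the whole bound is at most $e^{(1+\ln 2 - 3)k} = e^{-(2-\ln 2)k} \le e^{-k}$, since $2-\ln 2 > 1$.

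There is no real obstacle here; the one place to be a bit careful is the containment probability calculation (ensuring $(|T|-i)/(k-i) \le |T|/k$ for each $i$, which is immediate since $|T| \le k$), and ensuring the union bound constants work out so that the final exponent is at least $-k$. The slack $2-\ln 2 \approx 1.31$ comes precisely from choosing the constant $3$ in $\ell = \lceil 3k/(\varepsilon r)\rceil$, which is what makes the numbers line up cleanly.
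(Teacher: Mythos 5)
Your approach — union bound over the choice of $r$ indices and a witnessing "small" set $T$, plus the containment-probability estimate — is the standard argument for this kind of disperser existence claim (the paper itself does not prove the lemma but cites \cite{CCK+17, DBLP:conf/icalp/LinRSW22}, where the proof has this same shape), and the final constant bookkeeping ($m^{r}\cdot 2^{k}\cdot e^{-3k}\le e^{-(2-\ln 2)k}\le e^{-k}$) is correct.

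There is, however, one slip in the containment-probability step. You set $|T|=\lceil(1-\varepsilon)k\rceil$ and then write $\left(|T|/k\right)^{\ell}\le(1-\varepsilon)^{\ell}$, but $\lceil(1-\varepsilon)k\rceil\ge(1-\varepsilon)k$, so in fact $|T|/k\ge 1-\varepsilon$ and that inequality points the wrong way as written. The fix is immediate: a union of size strictly less than $(1-\varepsilon)k$ has integer cardinality at most $\lceil(1-\varepsilon)k\rceil-1$, so it is always contained in some $T$ with $|T|=\lceil(1-\varepsilon)k\rceil-1<(1-\varepsilon)k$. With that choice $|T|/k<1-\varepsilon$, and then $\Pr[I\subseteq T]=\binom{|T|}{\ell}/\binom{k}{\ell}\le\left(|T|/k\right)^{\ell}<(1-\varepsilon)^{\ell}\le e^{-\varepsilon\ell}$ as you intended; the rest of your argument (independence of the $I_j$, $\binom{m}{r}\le m^{r}\le e^{k}$, $\binom{k}{|T|}\le 2^{k}$) is unchanged and the conclusion follows.
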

    
    Next, we prove the following claim, which applies the disperser to amplify the gap in \kclique{} at the cost of increasing the instance size. This approach has been used in \cite{CCK+17,DBLP:conf/icalp/LinRSW22}.

\begin{claim}\label{claim:A}
For any constant $\lambda>0$ and $\varepsilon > 0$, given a $\left(\frac{1}{\varepsilon}\right)$-gap \kclique{} instance $G = (V_1\dot\cup \ldots \dot\cup V_k,E)$ with $n$ vertices, we can construct an $e(m)$-gap {$m$-\textsc{Clique}} instance $G' = (V_1'\dot\cup \ldots \dot\cup V_m',E')$ with $N\le n^{k^{\lambda / \log \log k}}$ vertices in $N^{O(1)}=n^{O\left(k^{\lambda/\log \log k}\right)}$ time, where \[e(m)\ge \frac{m}{(\log m)^{\frac{2}{\lambda}\cdot \log \log \log m}}\] for sufficiently large $m$.
\end{claim}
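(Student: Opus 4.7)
The plan is to apply the classical disperser-based gap amplification used in~\cite{CCK+17,DBLP:conf/icalp/LinRSW22}, with carefully tuned parameters. I may assume without loss of generality that $\varepsilon < 1/2$, since otherwise a preliminary constant-factor gap amplification via expanders (as alluded to in the paper's footnote) reduces to this case. Set $\ell := \lceil k^{\lambda/\log\log k}\rceil$, $r := \lceil 3k/(\varepsilon\ell)\rceil$, and $m := \lfloor e^{k/r}\rfloor$. By Lemma~\ref{lem:disperser}, a $(k,m,\ell,r,\varepsilon)$-disperser $\mathcal{I}=\{I_1,\ldots,I_m\}$ exists and can be found in $f(k)$ time for some computable $f$ (e.g., by derandomizing via conditional probabilities, or by brute enumeration since the search only depends on $k$).

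Next I would construct $G'$ as follows. For each $i \in [m]$, let $V_i'$ be the set of all tuples $(v_j)_{j \in I_i}$ with $v_j \in V_j$ such that $\{v_j : j \in I_i\}$ is an $\ell$-clique of $G$. Note $|V_i'| \le n^{\ell}$. Put an edge in $E'$ between $u \in V_i'$ and $w \in V_{i'}'$ (with $i \ne i'$) iff (i) $u$ and $w$ agree on any shared coordinate and (ii) every vertex of $u$ is adjacent in $G$ to every vertex of $w$; equivalently, the union of the two tuples forms a clique in $G$. This yields $N = |V'| \le m \cdot n^{\ell} \le n^{O(k^{\lambda/\log\log k})}$, and the graph can be written down in $N^{O(1)}$ time.

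Completeness and soundness follow the standard pattern. If $G$ has a $k$-clique $\{v_1,\ldots,v_k\}$, then picking the restriction $(v_j)_{j \in I_i}$ in each $V_i'$ yields an $m$-clique in $G'$. Conversely, if $G'$ contains a clique $C'$ consisting of $r$ vertices drawn from $r$ distinct parts $V_{i_1}',\ldots,V_{i_r}'$, then by the edge definition the union of their tuples forms a clique in $G$ of size $|\bigcup_{j} I_{i_j}| \ge (1-\varepsilon)k > \varepsilon k$, contradicting the NO case. Hence in the NO case $G'$ has no clique of size $r$, so the gap of $G'$ is at least $m/r$.

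The main obstacle, and where the calculation is the most delicate, is verifying $m/r \ge m/(\log m)^{(2/\lambda)\log\log\log m}$, i.e., $\log r \le (2/\lambda)\log\log\log m \cdot \log\log m$. From $\ln m = k/r$ and the choice of $r, \ell$, we get $\log m = \Theta(\ell) = \Theta(k^{\lambda/\log\log k})$, whence $\log\log m = (\lambda/\log\log k)\log k + O(1)$ and $\log\log\log m = \log\log k - \log\log\log k + O(1) = (1-o(1))\log\log k$. Substituting, $(2/\lambda)\log\log\log m \cdot \log\log m = (2-o(1))\log k$. On the other hand $\log r = (1 - \lambda/\log\log k)\log k + O(1) \le \log k$, which is comfortably below $(2-o(1))\log k$ for sufficiently large $k$ (equivalently, sufficiently large $m$). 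This establishes the required gap bound and completes the proof.
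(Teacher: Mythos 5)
Your proposal is correct and follows essentially the same disperser-based gap-amplification route as the paper: the same construction of $G'$ from an $(k,m,\ell,r,\cdot)$-disperser, the same completeness/soundness argument, and the same style of parameter balancing to obtain $r\le(\log m)^{(2/\lambda)\log\log\log m}$. The only cosmetic differences are that you invoke Lemma~\ref{lem:disperser} with parameter $\varepsilon$ (forcing the WLOG $\varepsilon<1/2$), whereas the paper uses parameter $1-\varepsilon$ and avoids that step, and you verify the final inequality by directly comparing $\log r$ against $(2/\lambda)\log\log\log m\cdot\log\log m$ rather than via the paper's auxiliary fact $\frac{x}{\log x}\le y\Rightarrow x\le 1.01\,y\log y$.
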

\begin{proof}
We construct a disperser with the following parameters:
\begin{itemize}
    \item $r=\frac{3}{1-\varepsilon}\cdot k^{1-\frac{\lambda}{\log \log k}}$,
    \item $\ell= k^{\frac{\lambda}{\log \log k}}$,
    \item $m=\exp\left(\frac{1-\varepsilon}{3}\cdot r^{\frac{\lambda}{\log \log r}}\right)$.
\end{itemize}
Note that $\ell = \frac{3k}{(1-\varepsilon)r}$ and $\ln m = \frac{1-\varepsilon}{3}\cdot r^{\frac{\lambda}{\log \log r}}\le \frac{1-\varepsilon}{3}\cdot k^{\frac{\lambda}{\log \log k}}= \frac{k}{r}$. Hence, our choice of parameters fulfills the constraints in Lemma~\ref{lem:disperser}, and the construction can be done by enumerating all possible set systems in time only depending on $k$. Thus, we obtain a collection of sets $\mathcal I = \{I_1,\dots,I_m\}$ over universe $[k]$, such that each $I_i$ has size $\ell$, and every size-$r$ subcollection of $\mathcal I$ has a union with size at least $\varepsilon k$.

Next, for every set $I_i = \{j_1,j_2,\ldots,j_{\ell}\}$ 
in $\mathcal I$, we construct $V_i'$ as all tuples of vertices $(u_{j_1},\ldots,u_{j_{\ell}})$, where for each $k \in [\ell]$, $u_{j_k}\in V_{j_k}$. We add an edge between two vertices $(u_{k_1},\dots, u_{k_{\ell}})\in V_{i_1}'$ and $(w_{j_1},\ldots, w_{j_{\ell}})\in V_{i_2}'$ if and only if the set $\{u_{k_1},\dots, u_{k_{\ell}},w_{j_1},\ldots, w_{j_{\ell}}\}$ is a clique in $G$. It is easy to see that $G'$ contains at most $N \le n^{\ell}=n^{k^{\lambda / \log \log k}}$ vertices, and the reduction has running time $N^{O(1)}$. 

If $G$ has a clique $D$ of size $k$, then for every $j \in [k]$, denote the vertex in $D \cap V_j$ as $v_j$. For each set $I_i = \{j_1,j_2,\dots,j_{\ell}\}$, we choose $(v_{j_1},\ldots,v_{j_{\ell}}) \in V_i'$. It is straightforward that these vertices form a clique in $G'$.

If $G$ has no clique with size $\varepsilon k$, then we prove that $G'$ has no clique with size $r$. 
Suppose by contradiction $G'$ has a clique $D$ with size at least $r$, then we consider the following set of vertices $T$:
\[T:=\{w\in G\mid \exists (u_{j_1},\dots,u_{j_{\ell}})\in D, w \in \{u_{j_1},\ldots,u_{j_{\ell}}\}\}.\]
By our construction, $T$ is a clique in $G$. Since the union of every $r$ distinct sets in $\mathcal I$ has size at least $\varepsilon k$, we have that $|T|\ge \varepsilon k$, contradicting the fact that $G$ has no clique with size $\ge \varepsilon k$. 

Therefore, $G'$ is an $\left(\frac{m}{r}\right)$-gap {$m$-\textsc{Clique}} instance. 

Note that $\frac{x}{\log x}\le y \implies  x \le 1.01 y\log y$ for sufficiently large $y$. Since 
\begin{align*}
    r^{\frac{\lambda}{\log \log r}} & =\frac{3}{1-\varepsilon}\cdot \ln m, \\
    \frac{\lambda\log r}{\log \log r} & = C+\log \log m
\end{align*}
where $C=\log \left(\frac{3}{(1-\varepsilon)\log e}\right)$ is a constant, we have
\begin{align*}
\log r & \le \frac{1.01}{\lambda} \left(C+\log \log m\right) \log (C+\log 
    \log m)\\
    &\le \frac{2}{\lambda} \log \log m \log \log \log m
\end{align*}
and thus
\[
    r \le (\log m)^{\frac{2}{\lambda}\cdot \log \log \log m}
\]
for sufficiently large $m$.
\end{proof}

\begin{proof}[Proof of Theorem~\ref{thm:eth-gap-clique-fpt}]

By Theorem~\ref{thm:eth-gap-clique-lowerbound}, assuming ETH, there exists constants $\lambda_0>0$ and $c_0>1$, such that $c_0$-gap \kclique{} cannot be solved in $f(k)n^{k^{\lambda_0/\log \log k}}$ time for any computable function $f$. 

Suppose for contradiction that for every $\lambda>0$, there exists an algorithm which can solve the $(k/(\log k)^{\frac{2}{\lambda}\cdot \log \log \log k})$-gap {$k$-\textsc{Clique}} with $N$ vertices in FPT time, then we apply the reduction in Claim \ref{claim:A} with $\lambda < \lambda_0$ and $\varepsilon=\frac{1}{c_0}$ to obtain an algorithm, which can solve $c_0$-gap \kclique{} with running time $f(k)n^{O(k^{\lambda/\log \log k})}<f(k)n^{k^{\lambda_0/\log \log k}}$ for some computable function $f$, contradicting Theorem \ref{thm:eth-gap-clique-lowerbound}.
\end{proof}

\subsection{Review of Previous Proofs}
\label{sec:review}

\noindent\textbf{Constant-gap \kclique.} Let $\mathbb{F}_2$ be the binary field.
Recall that the Hadamard Code over $\mathbb{F}_2$ is $\left(3,2^{2k}, \delta,1-\frac{\delta}{2}\right)$-parallel locally testable and $(2,2^k,\delta,1-2\delta)$-parallel smooth locally decodable   for any $0<\delta<\frac{1}{2}$. Using Theorem~\ref{thm:clique2vcsp} and Theorem~\ref{thm:vcsp2gapclique} with this code, we obtain a reduction which takes as input an integer $k$ and a \kclique{} instance $G$, outputs a {$k'$-\textsc{Clique}} instance $G'$ with $k'=2^{\Theta(k)}$ in $2^{\Theta(k)}\cdot |G|^{O(1)}$-time such that
\begin{itemize}
    \item If $G$ contains a size-$k$ clique, then $G'$ contains a size-$k'$ clique.
    \item If $G$ contains no size-$k$ clique, then $G'$ contains no clique of size $(1-\varepsilon) k'$ for some constant $0<\varepsilon<1$.
\end{itemize}
This reduction improves the parameter $k'=2^{k^6}$ in the gap-reduction of~\cite{Lin21} to $k'=2^{\Theta(k)}$. The reduction in~\cite{DBLP:conf/icalp/LinRSW22}  also achieved  this parameter by using the Reed-Muller Code. Our technique generalizes both results and simplifies the proof.

\medskip

\noindent\textbf{$k^{o(1)}$-Gap \kclique{}.} In~\cite{DBLP:conf/coco/SK22}, the authors proved the $k^{1/H(k)}$-inapproximability for \kclique{} for any increasing computable function $H$ assuming $\mathsf{W[1]}\neq \mathsf{FPT}$.  Theorem~\ref{thm:vcsp2gapclique} does not generalize this reduction because the reduction in~\cite{DBLP:conf/coco/SK22} relies on a property of Hadamard Code that is stronger than the smooth property in Definition~\ref{def:parallel-decode}. More precisely, let $\mathbb{F}$ be a finite field with $|\mathbb{F}|> k$. Suppose $f:\mathbb{F}^k\to\mathbb{F}^{\mathbb{F}^k}$ is a function that is $(1/k^{1/H(k)})$-close to some Hadamard code $C: \mathbb{F}^k\to\mathbb{F}^{\mathbb{F}^k}$ for some increasing computable function $H$. For any ${\alpha}\in\mathbb{F}^k$, we can still use $f$ to decode $C({{\alpha}})$ by sampling ${\beta}\in\mathbb{F}^k$  and querying the values of $f$ at any two points in the set $S_{{\beta}}:=\{{\beta}+t{\alpha}\mid t\in\mathbb{F}\}$. This is because if $f$ is $(1/k^{1/H(k)})$-close to $C$, then $|\{\mathbf x\in\mathbb{F}^k\mid f(\mathbf x)=C(\mathbf x)\}|>|\mathbb{F}|^{k-1}$. Observe that there are $|\mathbb{F}|^{k-1}$ distinct sets in the set family $\mathcal F:=\{S_{\beta}\mid \beta\in\mathbb{F}^k\}$ and any two of them are disjoint, by the pigeonhole principle, there must exist two good points in the same $S_{\beta}$. 

On the other hand, the smooth property in Definition~\ref{def:parallel-decode} only guarantees that to decode $C(\alpha)$, every $\mathbf{z}\in\mathbb{F}^k$ is equally likely to be queried. For example, to decode $C(\alpha)$, a decoder algorithm that samples $\beta\in\mathbb{F}^k,t\in\mathbb{F}$ and queries $f(\beta+t\alpha)$ and $f(\beta+(t+1)\alpha)$ is still smooth. But the existence of two points in the same $S_{\beta}$ is not sufficient to decode $C(\alpha)$. Nevertheless, our Theorem~\ref{thm:clique2vcsp} might be used to simplify the proof in \cite{DBLP:conf/coco/SK22} by avoiding the analysis of multiplying random matrices. 

\bibliographystyle{alpha}
\bibliography{main}

\appendix
\section{Optimizing the Parameters in~\cite{DBLP:conf/icalp/LinRSW22}}\label{app:lrsw22}

\begin{theorem}\label{thm:app-b}
If constant-gap \kclique{} has lower bound $f(k)\cdot n^{\omega(\sqrt{k})}$, then PIH is true.
\end{theorem}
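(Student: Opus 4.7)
The plan is to prove the contrapositive: assume $\neg$PIH, i.e., that $(1-\varepsilon_0)$-gap 2CSP admits an FPT algorithm $A$ with running time $g(k')\cdot(n')^{c_0}$ for some constant $\varepsilon_0>0$ and constant $c_0$. From this I will derive an algorithm for constant-gap \kclique{} running in time $f(k)\cdot n^{O(\sqrt k)}$, contradicting the assumed lower bound.

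My reduction from constant-gap \kclique{} to constant-gap 2CSP operates in two stages: a gap amplification on the \kclique{} side, followed by a grouping reduction to 2CSP. In the first stage I apply the disperser of Claim~\ref{claim:A}, invoked via Lemma~\ref{lem:disperser} with the parameter $r$ tuned on the order of $\sqrt k$. This transforms the input constant-gap $k$-Clique instance into an $m$-Clique instance with a much larger (roughly $m/r$) gap and with $N\le n^{\ell}$ vertices per part, where $\ell=\Theta(k/r)$. In the second stage I partition the $m$ new parts into $m'=\Theta(m/r)$ groups of $r$ parts each, and construct a 2CSP with $m'$ variables whose alphabet in each group ranges over sub-cliques within that group's parts; the binary constraints enforce that the union of two sub-cliques from different groups is itself a clique. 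A Tur\'an-style counting argument in the spirit of Lemma~\ref{lem:ltc_num_var} then shows this 2CSP has constant gap, with alphabet size $N^{m/m'}=N^{\Theta(r)}$.

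Applying $A$ to this constant-gap 2CSP gives running time $g(m')\cdot N^{O(r)}$. After carefully substituting the two-stage parameter estimates and absorbing $g(m')$ into a function of $k$ (valid because $m'$ depends only on $k$), this yields a $f(k)\cdot n^{O(\sqrt k)}$ algorithm for constant-gap \kclique{}, contradicting the hypothesis.

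The main obstacle is precisely the tight joint optimization of the two stages. A coarse analysis using the disperser constraint $\ell\cdot r=\Theta(k)$ together with the grouping alphabet cost $N^{m/m'}=N^{\Theta(r)}$ produces an alphabet of size $n^{\Theta(k)}$, which is far too large to achieve the $n^{O(\sqrt k)}$ target. Pushing this down to $n^{O(\sqrt k)}$ requires exploiting the fact that Claim~\ref{claim:A}'s disperser provides an effective gap $e(m)$ that is much closer to $m$ than the naive $m/r$ bound, so that $m/m'$ is in fact sub-polynomial rather than linear in $r$. The improvement from the $\omega(k/\log k)$ bound of~\cite{DBLP:conf/icalp/LinRSW22} to $\omega(\sqrt k)$ comes exactly from this sharper amortization: where the earlier work could afford a $\log k$ factor of slack in the alphabet growth, the $\sqrt k$-exponent regime demands a tight matching of the disperser's output gap to the grouping's alphabet cost, and verifying this tight calculation is the chief technical content of the appendix proof.
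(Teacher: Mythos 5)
Your proposal has a genuine gap, and you partially diagnose it yourself but then paper over it with a fix that does not work. The two-stage structure (Claim~\ref{claim:A} gap amplification in terms of \emph{clique size}, followed by a further grouping of the $m$ parts into $m'$ groups of $r$ parts) inherently yields an alphabet of size $N^{m/m'}=n^{\ell\cdot(m/m')}$. Since you also need $m/m'$ on the order of $r$ to extract a constant density gap in the NO case, and the disperser forces $\ell\cdot r=\Theta(k)$, you land at $n^{\Theta(k)}$, as you note. The speculative remedy — that the ``effective gap $e(m)$'' somehow allows $m/m'$ to be sub-polynomial in $r$ — cannot be made precise: the grouping alphabet cost and the disperser cost are independent multiplicative factors in the exponent, and nothing about a sharper clique-size gap lets you shrink the group size without losing the density argument. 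Also, Lemma~\ref{lem:ltc_num_var} is a coding-theoretic statement about query coverage and has nothing to do with the edge-counting you would need here.

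The missing idea is that the disperser-based graph (the $G'$ of Claim~\ref{claim:A}, equivalently $H$ in the appendix's Lemma~\ref{lem:A}) is \emph{already} the gap 2CSP instance — there is no second grouping stage at all. One views the $m$ parts $\mathcal V_1,\ldots,\mathcal V_m$, each of size $n^{\ell}=n^{\Theta(\sqrt k)}$, as the variables and alphabet of a Multi-colored Densest $m$-Subgraph (equivalently 2CSP) instance directly. The constant-fraction gap on \emph{constraint density} (not clique size) then comes from the K\H{o}v\'ari–S\'os–Tur\'an theorem: the disperser property ensures that if $G$ has no $(1-\varepsilon)k$-clique then $H$ has no multi-colored $K_{r,r}$, and KST converts the absence of $K_{r,r}$ (with $r\le\frac{1}{\beta}\log m$, which holds since $m=2^{\Theta(\sqrt k)}$ and $r=\Theta(\sqrt k)$) into an upper bound of $(1-\varepsilon')\binom{m}{2}$ on the number of edges in any multi-colored $m$-subgraph. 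This single-step use of KST is exactly what replaces your grouping stage; it is the reason the alphabet stays at $n^{\Theta(\sqrt k)}$ rather than ballooning to $n^{\Theta(k)}$, and it is the actual technical content that lets the exponent drop from the earlier $k/\log k$ to $\sqrt k$.
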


Recall that a $(k,m,\ell,r,\varepsilon)$-disperser is a collection $\mathcal I$ of $m$ subsets $I_1,\ldots,I_m\subseteq [k]$, each of size $\ell$, such that the union of any $r$ different subsets has size at least $(1-\varepsilon)k$. For every sufficiently large $k$, we can construct a $(k,m=2^{\sqrt k},\ell=\Theta_{\varepsilon}(\sqrt{k}),r=\sqrt{k},\varepsilon/2)$-disperser by Lemma~\ref{lem:disperser}. To prove this theorem, we need the following lemma, which is implicitly extracted from~\cite{DBLP:conf/icalp/LinRSW22}. 
\begin{lemma}\label{lem:A}
 For every constants $\beta>0,\varepsilon\in (0,1)$, if there exists:
\begin{itemize}
    \item a $(k,m(k),\ell(k),r(k),\varepsilon/2)$-disperser such that $m(k)\ge 2^{\beta r(k)}$ for every sufficiently large $k$.
    \item a lower bound $f(k)\cdot n^{\omega(\ell(k))}$ for $(1-\varepsilon)$-approximation of \kclique{} for any computable function $f$.
\end{itemize}
Then {\sc Multi-colored Densest $k$-Subgraph} does not have any $f(k)\cdot n^{O(1)}$-time $(1-\varepsilon')$-approximation algorithm, for some constant $\varepsilon'>0$ and any computable function $f$.
\end{lemma}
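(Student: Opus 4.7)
The approach is to adapt the disperser-based construction of Claim~\ref{claim:A} and upgrade its exact-clique soundness analysis to a $(1-\varepsilon')$-dense-subgraph analysis. Fix a sufficiently small absolute constant $\varepsilon'>0$ (to be calibrated) and assume for contradiction that Multi-colored Densest $k$-Subgraph admits an $f(k)\cdot n^{O(1)}$-time $(1-\varepsilon')$-approximation algorithm for some computable $f$. Given a $(1-\varepsilon)$-gap \kclique{} instance $G=(V_1\dot\cup\cdots\dot\cup V_k,E)$ with $n$ vertices per part, plug the hypothesised $(k,m(k),\ell(k),r(k),\varepsilon/2)$-disperser $\mathcal I=\{I_1,\ldots,I_m\}$ into the construction of Claim~\ref{claim:A} to obtain an $m(k)$-partite graph $G'$ with $N\le n^{\ell(k)}$ vertices in $N^{O(1)}$ time. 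Completeness is identical to Claim~\ref{claim:A}: a $k$-clique in $G$ induces an $m(k)$-clique across the parts of $G'$. Once soundness is established, the assumed $f(m)\cdot N^{O(1)}$-time algorithm decides $(1-\varepsilon)$-gap \kclique{} in $f(m(k))\cdot n^{O(\ell(k))}$ time; since $m(k)$ is a computable function of $k$, this contradicts the hypothesised $f(k)\cdot n^{\omega(\ell(k))}$ lower bound.

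The main technical task is soundness: from any multi-colored assignment $(v_1',\ldots,v_m')$ whose induced subgraph in $G'$ has density $\ge 1-\varepsilon'$, extract a $(1-\varepsilon)k$-clique in $G$. The plan is plurality-vote decoding. For each $j\in[k]$ let $T_j=\{i:j\in I_i\}$, set $\bar v[j]\in V_j$ to the plurality value among $\{v_i'[j]\}_{i\in T_j}$ with count $M_j$, and call $j$ \emph{good} if $M_j$ is sufficiently large (relative to $|T_j|$ and to $m$); let $J$ denote the good coordinates. Two claims drive the proof: (a) $\{\bar v[j]:j\in J\}$ is a clique in $G$, and (b) $|J|\ge(1-\varepsilon)k$. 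Claim (a) follows from a pigeonhole: when $j_1,j_2\in J$, the $\ge M_{j_1}M_{j_2}$ super-vertex pairs $(i_1,i_2)$ that realise both pluralities exceed the $\le \varepsilon'\binom{m}{2}$ non-edges available in the induced subgraph of $G'$, so at least one such pair is an edge, and the clique property of every $G'$-edge (from Claim~\ref{claim:A}) forces $(\bar v[j_1],\bar v[j_2])\in E(G)$.

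Claim (b) is more delicate and is the main obstacle. Bad coordinates split into two kinds. Coordinates $j$ with large $|T_j|$ but low plurality generate $\Omega(|T_j|^2)$ internal disagreement pairs inside $T_j$, every one of which is a non-edge of $G'$; amortising the non-edge budget (each non-edge $(i_1,i_2)$ is a disagreement for at most $|I_{i_1}\cap I_{i_2}|\le\ell$ coordinates) bounds this group by $O(\ell\cdot\varepsilon')$. Coordinates $j$ with atypically small $|T_j|$ are controlled by the disperser itself: if too many $j$'s had $|T_j|$ far below the mean $m\ell/k$, one could select $r$ sets avoiding them and violate the $(1-\varepsilon/2)$-coverage guarantee. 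Combining the two estimates and calibrating the goodness threshold and $\varepsilon'$ using $\ell(k)=\Theta(k/r(k))$ together with $m(k)\ge 2^{\beta r(k)}$ (which forces $r(k)\to\infty$, since otherwise $m(k)$ stays bounded and the ambient lower-bound hypothesis becomes trivial) yields $|J|\ge(1-\varepsilon)k$ for a single absolute constant $\varepsilon'>0$.

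The hard part is therefore step (b): picking one absolute constant $\varepsilon'$ that simultaneously (i) keeps the witness count $M_{j_1}M_{j_2}$ in (a) above the non-edge budget, (ii) makes the $\ell$-fold amortisation penalty in the first group tolerable, and (iii) leaves enough slack for the disperser argument to dispose of small-$|T_j|$ coordinates, all uniformly in $k$. The remaining pieces---construction, completeness, and the derivation of the final contradiction via the lower bound for \kclique{}---are direct adaptations of the exact-clique argument in Claim~\ref{claim:A}.
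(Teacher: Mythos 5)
Your construction of $G'$ matches the paper's graph $H$, and the completeness direction and the final reduction to the clique lower bound are correct. However, the soundness analysis is where your approach diverges from the paper's, and the route you sketch does not actually yield an \emph{absolute constant} $\varepsilon'$; it requires $\varepsilon'\to 0$ as $k\to\infty$.

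The quantitative problem is already visible in your claim~(a). Each coordinate $j\in[k]$ appears in roughly $|T_j|\approx m\ell/k$ of the $m$ sets of the disperser, so the plurality count $M_j$ is at most of order $m\ell/k$. Your pigeonhole requires the number $M_{j_1}M_{j_2}\approx(m\ell/k)^2$ of witness pairs to exceed the non-edge budget $\varepsilon'\binom{m}{2}\approx\varepsilon'm^2/2$; the adversary may indeed place all non-edges exactly on those witness pairs. Cancelling $m^2$ gives $\varepsilon'\lesssim(\ell/k)^2=\Theta(1/r(k)^2)$, which tends to $0$ since $r(k)\to\infty$. The same scaling kills claim~(b): your amortisation gives at most $\ell\cdot\varepsilon'\binom{m}{2}$ disagreement pairs against $\Omega((m\ell/k)^2)$ per low-plurality coordinate, so the number of such coordinates is bounded by $O(\varepsilon' k^2/\ell)=O(\varepsilon' k\,r(k))$, and to keep this below $\varepsilon k$ you again need $\varepsilon'\lesssim 1/r(k)$. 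Since the hypothesis forces $r(k)$ unbounded (otherwise $m(k)\ge 2^{\beta r(k)}$ stays bounded and the statement is vacuous), plurality decoding as sketched cannot produce a single constant $\varepsilon'$ working for all $k$.

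The paper's proof avoids decoding entirely and instead applies the K{\~o}v{\'a}ri--S{\'o}s--Tur{\'a}n theorem. It first shows that any multi-colored $K_{r,r}$-subgraph in $H$ lifts, via the $(1-\varepsilon/2)$-coverage guarantee of the disperser and the clique structure of every $H$-edge, to a $(1-\varepsilon)k$-clique in $G$; thus a no-instance $G$ gives an $H$ with no multi-colored $K_{r,r}$. By KST, an $m$-vertex multi-colored subgraph without $K_{r,r}$ has at most $\frac{1}{2}(r-1)^{1/r}m^{2-1/r}+\frac{1}{2}(r-1)m$ edges; using $r\le\frac{1}{\beta}\log m$ one gets $m^{-1/r}\le 2^{-\beta}$, so the density is bounded by a constant strictly below $1$ depending only on $\beta$, yielding the desired $1-\varepsilon'$ with $\varepsilon'$ an absolute constant. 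The crucial gain of KST over plurality voting is that it converts the global density bound directly into a structured biclique (without tracking the per-coordinate vote counts $m\ell/k$ that sink your argument), and the biclique in turn gives coordinate-wise consistency for free from the edge definition of $H$. I would recommend replacing your soundness analysis with this extremal-graph-theoretic step.
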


It is not hard to see that Theorem~\ref{thm:app-b} follows from  Lemma~\ref{lem:A} and Lemma~\ref{lem:disperser}  by setting $\ell(k)=\sqrt{k}$. Below we first introduce the K{\~o}v{\'a}ri-S{\'o}s-Tur{\'a}n Theorem, then provide a proof for Lemma~\ref{lem:A}.

\begin{lemma}[K{\~o}v{\'a}ri-S{\'o}s-Tur{\'a}n Theorem \cite{KST54}]
    For any graph $G$ on $n$ vertices and any $a \ge 2$, if $G$ does not contain $K_{a,a}$ as a subgraph, then $G$ has at most $\frac{1}{2}(a-1)^{\frac{1}{a}}n^{2-\frac{1}{a}}+\frac{1}{2}(a-1)n$ edges.
\end{lemma}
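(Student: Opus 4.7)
The plan is to carry out the classical double-counting argument of Kővári, Sós, and Turán. Write $n = |V(G)|$, $e = |E(G)|$, and $d_v$ for the degree of each vertex $v$, so $\bar d := 2e/n$ is the average degree; I would count in two ways the number $N$ of pairs $(v, S)$ where $v \in V(G)$ and $S \in \binom{V(G) \setminus \{v\}}{a}$ is an $a$-subset all of whose members are adjacent to $v$. Summing over the center $v$ gives $N = \sum_v \binom{d_v}{a}$, while summing over the leaf set $S$ gives $N = \sum_{S \in \binom{V(G)}{a}} |\{v \in V(G) \setminus S : S \subseteq N(v)\}|$. The $K_{a,a}$-freeness assumption enters directly here: for any $a$-subset $S$, its outer common neighborhood $\{v \in V(G) \setminus S : S \subseteq N(v)\}$ has at most $a-1$ vertices, else any $a$ of those common neighbors together with $S$ would form a $K_{a,a}$ subgraph. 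Consequently,
\[
\sum_v \binom{d_v}{a} \;\le\; (a-1)\binom{n}{a}.
\]

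Next I would bound the left-hand side from below by convexity. Extend $\binom{x}{a}$ to the real line by $\frac{x(x-1)\cdots(x-a+1)}{a!}$ for $x \ge a-1$ and by $0$ otherwise; this extension is continuous, nonnegative, and convex on $[0,\infty)$ (direct from computing the polynomial's second derivative and checking the left/right slope at the junction), and agrees with the usual binomial coefficient at every non-negative integer. Jensen's inequality then gives $\sum_v \binom{d_v}{a} \ge n \binom{\bar d}{a}$. The trivial case $\bar d < a-1$ already yields $e \le \tfrac{1}{2}(a-1)n$, so we may assume $\bar d \ge a-1$; substituting the elementary estimates $\binom{\bar d}{a} \ge \frac{(\bar d-a+1)^a}{a!}$ and $\binom{n}{a} \le \frac{n^a}{a!}$ reduces the double-counting inequality to $(\bar d-a+1)^a \le (a-1)\,n^{a-1}$. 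Taking $a$-th roots and rearranging gives $\bar d \le (a-1)^{1/a} n^{1-1/a} + (a-1)$, and multiplying by $n/2$ delivers the stated bound on $e$.

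The only real subtlety is establishing convexity of the extended $\binom{x}{a}$ on $[0,\infty)$, in particular across the junction $x = a-1$ where the left and right derivatives are $0$ and $1/a$ respectively (consistent with a convex corner). Once that is in place, the remaining steps are routine arithmetic, and the proof is essentially self-contained.
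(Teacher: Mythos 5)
The paper cites this lemma directly from \cite{KST54} and gives no proof of its own, so there is no in-paper argument to compare against; your job here is just to supply the standard proof, and you have done so correctly. Your argument is the classical K\H{o}v\'ari--S\'os--Tur\'an double count: $\sum_v\binom{d_v}{a}$ counts pairs $(v,S)$ with $S\subseteq N(v)$, $|S|=a$; $K_{a,a}$-freeness caps the common neighborhood of each $a$-set at $a-1$, giving $\sum_v\binom{d_v}{a}\le(a-1)\binom{n}{a}$; Jensen gives $\sum_v\binom{d_v}{a}\ge n\binom{\bar d}{a}$; and the estimates $\binom{\bar d}{a}\ge(\bar d-a+1)^a/a!$ (once $\bar d\ge a-1$) and $\binom{n}{a}\le n^a/a!$ yield $\bar d\le(a-1)^{1/a}n^{1-1/a}+(a-1)$, hence the stated edge bound after multiplying by $n/2$. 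Your flagged subtlety is handled correctly: the extension of $\binom{x}{a}$ by zero on $[0,a-1)$ is convex because $p''$ (a degree-$(a-2)$ polynomial with positive leading coefficient whose roots all lie in $(0,a-1)$ by Rolle) is nonnegative on $[a-1,\infty)$, and the one-sided slopes at $x=a-1$ go from $0$ to $1/a$, a convex corner; this, together with the agreement with $\binom{m}{a}$ at all nonnegative integers and the separate disposal of the case $\bar d<a-1$, makes the Jensen step airtight.
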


\begin{proof}[Proof for Lemma~\ref{lem:A}]
Let $G=(V=V_1\dot\cup\ldots\dot\cup V_k,E)$ be a \kclique{} instance, it suffices to construct a colored graph $H$ in $f(k)\cdot n^{O(\ell(k))}$-time, with $m$ bounded by some function of $k$, such that 
\begin{itemize}
\item if $G$ contains a size-$k$ clique, then $H$ contains a size-$m$ multi-colored clique;
\item if $G$ does not contain a size-$(1-\varepsilon)k$ clique, then any $m$-vertex multi-colored subgraph in $H$ has at most $(1-\varepsilon')\binom{m}{2}$ edges.
\end{itemize}

Let $\mathcal I=\{I_1,\ldots,I_m\}$ be a $(k,m,\ell,r,\varepsilon/2)$-disperser. We construct $H$ using $\mathcal I$ as follows.
\begin{itemize}
    \item Let $V(H)=\dot\bigcup_{i \in [m]}\mathcal V_i$.
    \item For every $i \in [m]$ and $I_i=\{a_1,\ldots,a_{\ell}\}$, let $\mathcal V_i=\{\{v_1,\ldots,v_{\ell}\}:v_1 \in V_{a_1},\ldots,v_\ell \in V_{a_\ell}\}$.
    \item For any $\mathbf u=\{u_1,\ldots,u_{\ell}\} \in \mathcal V_i$ and $\mathbf w=\{w_1,\ldots,w_{\ell}\} \in \mathcal V_j$, $(\mathbf u,\mathbf w) \in E(H)$ if and only if $\{u_1,\ldots,u_{\ell},w_1,\ldots,w_{\ell}\}$ forms a clique in $G$.
\end{itemize}

We say a subgraph $F$ of $H$ is multi-colored if it contains at most one vertex in each $\mathcal V_i$, i.e., $|V(F) \cap \mathcal V_i| \le 1$ for every $i \in [m]$. Next, we show $H$ satisfies the following properties.
\begin{itemize}
    \item If $G$ contains a size-$k$ clique, then $H$ contains a size-$m$ multi-colored clique. In fact, let $v_1 \in V_1,\ldots,v_k \in V_k$ be a clique in $G$, then for every $i \in [m]$, let $\mathbf u_i=\{v_j:j\in I_i\}$. The vertex set $\{\mathbf u_i:i \in [m]\}$ induces a complete subgraph in $H$.

    \item If $H$ contains a multi-colored $K_{r,r}$-subgraph, then there is a clique in $G$ of size at least $(1-\varepsilon)k$. Suppose the vertices $\mathbf u_{a_1}\in \mathcal V_{a_1},\ldots,\mathbf u_{a_r} \in \mathcal V_{a_r}$ and $\mathbf w_{b_1} \in \mathcal V_{b_1},\ldots,\mathbf w_{b_r}\in \mathcal V_{b_r}$ induce a multi-colored $K_{r,r}$-subgraph. Let $X=\bigcup_{i \in [r]} I_{a_i}$ and $Y=\bigcup_{i \in [r]} I_{b_i}$. By the property of $(k,m,\ell,r,\varepsilon/2)$-disperser, we have $|X|\ge (1-\varepsilon/2)k$ and $|Y|\ge (1-\varepsilon/2)k$. Thus $|X \cap Y| \ge (1-\varepsilon)k$. For every index $i \in X \cap Y$, let $U_i=\left(\bigcup_{i \in [r]} \mathbf u_{a_i}\right) \cap V_i$, and $W_i=\left(\bigcup_{i \in [r]} \mathbf w_{a_i}\right)\cap V_i$. By the fact that $\{\mathbf u_{a_1},\ldots,\mathbf u_{a_r},\mathbf w_{a_1},\ldots,\mathbf w_{a_r}\}$ forms a $K_{r,r}$-subgraph and our construction of $E(H)$, we know $U_i\cup W_i$ is a clique in $G$. Since each $V_i$ in $G$ is an independent set, we have $U_i=W_i=\{v_i\}$ for some $v_i \in V_i$. Using similar analysis, for every distinct $i,j\in X \cap Y$, there is an edge $(v_i,v_j)$ in $G$. Therefore, $G$ contains a clique of size $|X \cap Y|\ge (1-\varepsilon)k$.
\end{itemize}

Putting things together, if $G$ contains a size-$k$ clique, then $H$ contains a multi-colored clique of size $m$. On the other hand, if $G$ contains no size-$(1-\varepsilon)k$ clique, then $H$ contains no multi-colored $K_{r,r}$-subgraph. By the K{\~o}v{\'a}ri-S{\'o}s-Tur{\'a}n Theorem, any $m$-vertex multi-colored subgraph of $H$ has at most $\frac{1}{2}(r-1)^{\frac{1}{r}}m^{2-\frac{1}{r}}+\frac{1}{2}(r-1)m\le(1-\varepsilon')\binom{m}{2}$ edges for some constant $\varepsilon'\in(0,1)$ given $r\le \frac{1}{\beta}\log m$, which proves the hardness of $(1-\varepsilon')$-approximating {\sc Multi-colored Densest $m$-Subgraph} problem.

\end{proof}

\section{Proof of Lemma \ref{lem:det-maintext}}
\label{lem:det-appendix}
Before proving Lemma \ref{lem:det-maintext}, we first state a folklore lemma about binomial transform. See e.g. \cite{bernstein1995some} for a reference.
\begin{lemma}
\label{lem:binom-transform}
    Let $\mathbb{F}$ be a field and $f: \mathbb{N}\to \mathbb{F}, g: \mathbb{N}\to \mathbb{F}$ be two functions. Then
    \[
        f(n) = \sum_{i=0}^n (-1)^{n-i} \binom{n}{i} g(i), \forall n\in \mathbb{N},
    \]
    if and only if
    \[
        g(n) = \sum_{i=0}^n \binom{n}{i} f(i), \forall n\in \mathbb{N}.
    \]
\end{lemma}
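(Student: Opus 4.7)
The plan is to prove the two implications by direct substitution, reducing everything to the standard binomial identities
\[
\binom{n}{i}\binom{i}{k}=\binom{n}{k}\binom{n-k}{i-k}\qquad\text{and}\qquad \sum_{j=0}^{m}(-1)^{m-j}\binom{m}{j}=[m=0],
\]
the second being the expansion of $(1-1)^m$. These hold over any field of characteristic zero, and more generally over any commutative ring when the binomial coefficients are interpreted as integers mapped into $\mathbb F$; both sides of the claim only ever use nonnegative integer binomial coefficients, so no characteristic assumption is needed.

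For the ``if'' direction, I would assume $g(n)=\sum_{i=0}^{n}\binom{n}{i}f(i)$ for every $n\in\mathbb N$, substitute this into $\sum_{i=0}^{n}(-1)^{n-i}\binom{n}{i}g(i)$, swap the order of summation, and apply the two identities above. Explicitly,
\begin{align*}
\sum_{i=0}^{n}(-1)^{n-i}\binom{n}{i}g(i)
&=\sum_{i=0}^{n}(-1)^{n-i}\binom{n}{i}\sum_{k=0}^{i}\binom{i}{k}f(k)\\
&=\sum_{k=0}^{n}f(k)\binom{n}{k}\sum_{i=k}^{n}(-1)^{n-i}\binom{n-k}{i-k}\\
&=\sum_{k=0}^{n}f(k)\binom{n}{k}\sum_{j=0}^{n-k}(-1)^{n-k-j}\binom{n-k}{j}\\
&=\sum_{k=0}^{n}f(k)\binom{n}{k}[n-k=0]=f(n),
\end{align*}
which recovers the first equation.

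For the ``only if'' direction, I would run the same computation with the signs moved: assume $f(n)=\sum_{i=0}^{n}(-1)^{n-i}\binom{n}{i}g(i)$, substitute into $\sum_{i=0}^{n}\binom{n}{i}f(i)$, swap the order, and again collapse the inner sum via the alternating-sum identity. The manipulation is essentially identical; the only change is a bookkeeping of the signs, where a factor $(-1)^{i-k}$ appears next to $\binom{n-k}{i-k}$ so that one ends up summing $\sum_{j=0}^{n-k}(-1)^{j}\binom{n-k}{j}=[n-k=0]$, again yielding $g(n)$.

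There is no real obstacle here; the only thing to be careful about is exchanging the order of the double sum and applying the absorption identity correctly, together with the observation that the binomial inversion identity is self-dual so both directions reduce to the same alternating-sum evaluation. Since the statement is purely formal and does not interact with any of the coding-theoretic or PLTDC machinery, the proof is a short two-line calculation in each direction.
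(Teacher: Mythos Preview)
Your proof is correct. The paper does not actually prove this lemma: it is stated as a folklore result with a citation (``See e.g.\ \cite{bernstein1995some} for a reference'') and no argument is given. Your direct double-sum computation using $\binom{n}{i}\binom{i}{k}=\binom{n}{k}\binom{n-k}{i-k}$ and the alternating-sum identity $\sum_{j=0}^{m}(-1)^{j}\binom{m}{j}=[m=0]$ is the standard proof of binomial inversion and is entirely self-contained, so in that sense you have supplied more than the paper does.
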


\begin{lemma}
    Let $r > 0$ be an integer and $d = 2r+1$, $\mathbb F := \mathbb F_q$ be a finite field with prime size $q > r$, and $\lambda_1,\lambda_2$ be two distinct elements in $\mathbb F$. The values $g(\lambda_1), g'(\lambda_1), \ldots, g^{(r)}(\lambda_1)$ and $g(\lambda_2), g'(\lambda_2), \ldots, g^{(r)}(\lambda_2)$ uniquely determine a univariate polynomial $g: \mathbb{F}\to \mathbb{F}$ with degree at most $d$.
\end{lemma}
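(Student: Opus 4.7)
The plan is to recast the claim as injectivity of a linear map. Consider the evaluation-plus-derivatives map
\[
\mathsf{ev}:\{g\in\mathbb F[\lambda]:\deg g\le d\}\;\longrightarrow\;\mathbb F^{2r+2},\qquad
g\mapsto \bigl(g(\lambda_1),g'(\lambda_1),\ldots,g^{(r)}(\lambda_1),\,g(\lambda_2),g'(\lambda_2),\ldots,g^{(r)}(\lambda_2)\bigr).
\]
Both the source and target are $\mathbb F$-vector spaces of dimension $d+1=2r+2$, so uniqueness (and simultaneously existence) of the interpolating polynomial follows once we show $\ker(\mathsf{ev})=0$. So I would fix an arbitrary $g$ in the kernel and argue that $g\equiv 0$.

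The key step is to translate ``the first $r+1$ derivatives vanish at $\lambda_i$'' into ``$(\lambda-\lambda_i)^{r+1}$ divides $g$''. Expand $g$ around $\lambda_i$ in the basis $\{(\lambda-\lambda_i)^j\}_{j=0}^d$, writing $g(\lambda)=\sum_{j=0}^d a_j^{(i)}(\lambda-\lambda_i)^j$. A direct differentiation gives $g^{(k)}(\lambda_i)=k!\,a_k^{(i)}$ for every $0\le k\le d$. Here I would invoke the hypothesis $|\mathbb F|=q>r$, which implies that $k!\in\mathbb F^{\times}$ for all $k\le r$ (since the prime $q$ is larger than every prime factor of $r!$). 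Consequently, the vanishing of $g(\lambda_i),g'(\lambda_i),\ldots,g^{(r)}(\lambda_i)$ is equivalent to $a_0^{(i)}=a_1^{(i)}=\cdots=a_r^{(i)}=0$, i.e.\ to $(\lambda-\lambda_i)^{r+1}\mid g(\lambda)$.

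Applying this at both $i=1$ and $i=2$, and using $\lambda_1\neq\lambda_2$ so that $(\lambda-\lambda_1)^{r+1}$ and $(\lambda-\lambda_2)^{r+1}$ are coprime in $\mathbb F[\lambda]$, I conclude
\[
(\lambda-\lambda_1)^{r+1}(\lambda-\lambda_2)^{r+1}\ \big|\ g(\lambda).
\]
The left-hand side has degree $2r+2=d+1>\deg g$, forcing $g=0$, which proves $\ker(\mathsf{ev})=0$ and finishes the argument.

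\paragraph{Main subtlety.}
The only non-routine point is the identity ``vanishing of formal derivatives up to order $r$ at $\lambda_i$ $\Longleftrightarrow$ a zero of order $\ge r+1$ at $\lambda_i$'', which is false in general in positive characteristic (for example, $\tfrac{d^p}{d\lambda^p}\lambda^p=p!=0$ in characteristic $p$). It is precisely the hypothesis $|\mathbb F|>r$ that makes $k!$ invertible for $k\le r$ and hence restores this equivalence; I expect to flag this explicitly when asserting $g^{(k)}(\lambda_i)=k!\,a_k^{(i)}$ determines $a_k^{(i)}$.
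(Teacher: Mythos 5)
Your proof is correct, and it takes a genuinely different and substantially shorter route than the paper's. The paper sets up the $(d+1)\times(d+1)$ coefficient matrix $A$ in the monomial basis and computes $\det(A)=(\lambda_2-\lambda_1)^{(r+1)^2}\prod_{i=1}^r(i!)^2$ explicitly via a chain of column operations and two binomial-transform steps (Lemma~\ref{lem:binom-transform}), reducing to a block lower-triangular matrix and then a further lower-triangular matrix; nonvanishing of this determinant under the hypotheses $\lambda_1\neq\lambda_2$ and $q>r$ then gives uniqueness. You instead work basis-free: reduce uniqueness to triviality of the kernel of the evaluation-plus-derivatives map, expand $g$ in the shifted basis $\{(\lambda-\lambda_i)^j\}$, use $g^{(k)}(\lambda_i)=k!\,a^{(i)}_k$ together with invertibility of $k!$ for $k\le r<q$ to conclude $(\lambda-\lambda_i)^{r+1}\mid g$, and then use coprimality plus a degree count. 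This is the Hermite-interpolation uniqueness argument, and it buys brevity and conceptual clarity, making the two hypotheses do exactly one job each: $q>r$ restores the Taylor-coefficient correspondence (which, as you correctly flag, fails in general in positive characteristic), and $\lambda_1\neq\lambda_2$ gives coprimality. What it does not give you, and what the paper's longer computation does, is the closed-form value of the determinant; but that value is never used elsewhere in the paper, so nothing is lost. One could even note that the two hypotheses in your argument correspond precisely to the two factors $(\lambda_2-\lambda_1)^{(r+1)^2}$ and $\prod_{i=1}^r(i!)^2$ in the paper's determinant, so the proofs are in a sense dual.
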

\begin{proof}
    Let $f(\lambda) = \sum_{i=0}^{d} a_i \lambda^i$. The constraints can be rewritten as:
    \[
        \begin{pmatrix} f(\lambda_1) \\ f'(\lambda_1) \\ \vdots \\ f^{(r)}(\lambda_1) \\ f(\lambda_2) \\ f'(\lambda_2) \\ \vdots \\ f^{(r)}(\lambda_2)\end{pmatrix} = \begin{pmatrix} 1 & \lambda_1 & \lambda_1^2 & \ldots & \lambda_1^{d} \\ 0 & 1 & 2\lambda_1 & \ldots & d\lambda_1^{d-1} \\ \vdots & \vdots & \vdots & \ddots & \vdots \\ 0 & 0 & 0 & \ldots & \binom{d}{r} r! \lambda_1^{d-r} \\ 1 & \lambda_2 & \lambda_2^2 & \ldots & \lambda_2^{d} \\ 0 & 1 & 2\lambda_2 & \ldots & d\lambda_2^{d-1} \\ \vdots & \vdots & \vdots & \ddots & \vdots \\ 0 & 0 & 0 & \ldots & \binom{d}{r} r! \lambda_2^{d-r} \end{pmatrix}  \begin{pmatrix} a_0 \\ a_1 \\ a_2 \\ \vdots \\ a_{d} \end{pmatrix} 
    \]
    We further denote the coefficient matrix as $A$, which can be formally defined as follows: For every $0\le i,j\le d$:
    \[
        A_{ij} = \begin{cases} \binom{j}{i} i! \lambda_1^{j-i} & 0\le i\le r \\ \binom{j}{i - (r+1)} (i - (r+1))! \lambda_2^{j - (i-(r+1))} & r < i\le d \end{cases}
    \]
    We claim that $\det(A) = (\lambda_2 - \lambda_1)^{(r+1)^2} \prod_{i=1}^r (i!)^2\not\equiv 0\pmod q$ since $q > r$ is a prime. If our claim is true, then $A^{-1}$ exists and hence $a_0, a_1, \ldots, a_{d}$ are uniquely determined.

    To prove our claim, we first divide the $i$-th row by $i!$ if $0\le i\le r$ or $(i-(r+1))!$ otherwise. The new matrix $A'$ becomes \[A'_{ij} = \begin{cases} \binom{j}{i} \lambda_1^{j-i} & 0\le i\le r \\ \binom{j}{i - (r+1)} \lambda_2^{j - (i-(r+1))} & r < i\le d \end{cases}, \forall 0\le i, j\le d\] and $\det(A) = \det(A') \prod_{i=1}^r (i!)^2$.

    Let $A_0', A_1', \ldots, A_{d}'$ be the columns of $A'$, and let $B = (B_0, B_1, \ldots, B_{d})\in \mathbb{F}_q^{(d+1)\times (d+1)}$ be a matrix defined as follows:
    \[
        B_j  = \sum_{k=0}^j (-1)^k \lambda_1^k \binom{j}{k} A_{j-k}'.
    \]
    Then $\det(B) = \det(A')$. For the $i$-th row where $0\le i\le r$, we have
    \begin{align*}
        B_{ij} &= \sum_{k=0}^j (-1)^k \lambda_1^k \binom{j}{k} \binom{j-k}{i} \lambda_1^{j-k-i} \\
        &= \sum_{k=0}^{j-i} (-\lambda_1)^k \lambda_1^{j-k-i} \binom{j}{k} \binom{j-k}{i} \\
        &= \sum_{k=0}^{j-i} (-\lambda_1)^k \lambda_1^{j-k-i} \binom{j}{i} \binom{j-i}{k} \\
        &= \binom{j}{i} \sum_{k=0}^{j-i} \binom{j-i}{k} (-\lambda_1)^k \lambda_1^{j-i-k} \\
        &= \binom{j}{i} (\lambda_1 - \lambda_1)^{j-i} \\
        &= \mathbf 1_{\{i = j\}}
    \end{align*}
    where the third equality is from $\binom{j}{k}\binom{j-k}{i} = \frac{j!}{k!i!(j-k-i)!} = \binom{j}{i} \binom{j-i}{k}$.

    Similarly, for the $i$-th row where $r < i\le d$, we have
    \begin{align*}
        B_{ij} &= \sum_{k=0}^j (-1)^k \lambda_1^k \binom{j}{k} \binom{j-k}{i-(r+1)} \lambda_2^{j-k-(i-(r+1))} \\&= \binom{j}{i-(r+1)} (\lambda_2 - \lambda_1)^{j-(i-(r+1))}.
    \end{align*}
    Note that $B$ is in the form of
    \[
        B =: \begin{pmatrix} I_{(r+1)\times (r+1)} & O_{(r+1)\times (r+1)} \\ C_{(r+1)\times (r+1)} & D_{(r+1)\times (r+1)} \end{pmatrix}
    \]
    and hence $\det(B) = \det(D)$, where
    \[
        D_{ij} = \binom{j + (r+1)}{i} (\lambda_2 - \lambda_1)^{j-i + (r+1)}, \forall 0\le i,j\le r.
    \]
    Let $D' = (\lambda_2 - \lambda_1)^{-(r+1)}D$, then $\det(D) = \det(D') (\lambda_2 - \lambda_1)^{(r+1)^2}$. Let $D_0', D_1', \ldots, D_r'$ be the columns of $D'$, and let $E = (E_0, E_1, \ldots, E_r)\in \mathbb{F}_q^{(r+1)\times (r+1)}$ be a matrix defined as follows:
    \[
        E_j =  \sum_{k=0}^j (-1)^k (\lambda_2 - \lambda_1)^k \binom{j}{k} D_{j-k}'.
    \]
    Then $\det(E) = \det(D')$. Moreover, we have
    \begin{align*}
        E_{ij} &= \sum_{k=0}^j (-1)^k (\lambda_2 - \lambda_1)^k \binom{j}{k} \binom{j-k+(r+1)}{i} (\lambda_2 - \lambda_1)^{j-k-i} \\
        &= (\lambda_2 - \lambda_1)^{j-i} \sum_{k=0}^j (-1)^k \binom{j}{k} \binom{j-k+(r+1)}{i} \\
        &= (\lambda_2 - \lambda_1)^{j-i} \sum_{k=0}^j (-1)^{j-k} \binom{j}{k} \binom{k+(r+1)}{i}.
    \end{align*}

    For any $j\in \mathbb{N}$, define
    \begin{align*}
        f(j) &:= \sum_{k=0}^j (-1)^{j-k} \binom{j}{k} \binom{k+(r+1)}{i},\\g(j) &:= \binom{j+(r+1)}{i}.
    \end{align*}
    By Lemma~\ref{lem:binom-transform}, we have
    \[
        g(j) = \sum_{k=0}^j \binom{j}{k} f(k).
    \]
    Note that
    \[
        \binom{j+(r+1)}{i} = \sum_{k=0}^{\min\{i, j\}} \binom{j}{k} \binom{r+1}{i-k},
    \]
    so $f(j) = \begin{cases} \binom{r+1}{i-j} & j\le i \\ 0 & j > i\end{cases}$ is the unique solution. Hence
    \[
        E_{ij} = \begin{cases} (\lambda_2 - \lambda_1)^{j-i} \binom{r+1}{i-j} & j\le i \\ 0 & j > i\end{cases}
    \]
    is a lower triangular matrix, and $\det(E) = 1$. Finally, we have
    \begin{align*}
        \det(A) &= \det(B) \prod_{i=1}^r (i!)^2 = \det(D') (\lambda_2 - \lambda_1)^{(r+1)^2} \prod_{i=1}^r (i!)^2 \\&= (\lambda_2 - \lambda_1)^{(r+1)^2} \prod_{i=1}^r (i!)^2,
    \end{align*}
    finishing the proof.
\end{proof}
\section{Proof of Lemma \ref{lem:line-vs-point}}
\label{app:line-vs-point}
The completeness case is straightforward, and we focus on the soundness case.

\begin{lemma}\label{thm:ldt-lvp}
If $\mathcal T$ rejects with probability $\delta$, then $\Delta(f,\widetilde f) \le 2\delta$ for some $\widetilde f\in \mathcal F$.
\end{lemma}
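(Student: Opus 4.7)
The plan is to adapt the standard soundness analysis of the line-vs-point low-degree test (in the style of Arora--Sudan and Polishchuk--Spielman) to the parallel ($t$-tuple) setting. Fix the proof $\pi$ and write $\delta := \Pr_{\mathbf x, \mathbf h}[f(\mathbf x) \ne P_{\ell(\mathbf x, \mathbf h)}(\mathbf x)] < 1/24$. Define a candidate $\widetilde f : \mathbb F^m \to \mathbb F^t$ by letting $\widetilde f(\mathbf x)$ be the plurality value of $P_{\ell(\mathbf x, \mathbf h)}(\mathbf x)$ over uniform $\mathbf h \in \mathbb F^m$, breaking ties arbitrarily. The goal is to verify two things: (i) $\Delta(f, \widetilde f) \le 2\delta$, and (ii) $\widetilde f \in \mathcal F$, i.e.\ it is an $m$-variate degree-$d$ polynomial tuple.

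Part (i) is the easy half. Call a point $\mathbf x$ \emph{good} if $\Pr_{\mathbf h}[f(\mathbf x) = P_{\ell(\mathbf x, \mathbf h)}(\mathbf x)] > 1/2$. At every good $\mathbf x$ the plurality coincides with $f(\mathbf x)$, so $\widetilde f(\mathbf x) = f(\mathbf x)$. A Markov bound on $\Pr_{\mathbf h}[f(\mathbf x) \ne P_{\ell(\mathbf x,\mathbf h)}(\mathbf x)]$ shows that the fraction of non-good $\mathbf x$ is at most $2\delta$, giving (i). It is important here that the plurality is taken in $\mathbb F^t$ rather than coordinate-wise, so that the distance bound does not acquire a factor of $t$.

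Part (ii) is the substance of the proof. The plan is a two-step consistency argument. First, I show that for a typical point $\mathbf y$, two independently chosen lines through $\mathbf y$ assign the same value to $\mathbf y$ with high probability -- a standard second-moment/pigeonhole calculation using that $\Pr_{\mathbf y, \mathbf h}[P_{\ell(\mathbf y,\mathbf h)}(\mathbf y) = f(\mathbf y)] \ge 1 - \delta$. Second, I pass to random affine planes: sampling a plane $\Pi$ by picking $\mathbf x, \mathbf h_1, \mathbf h_2 \in \mathbb F^m$ and considering the lines $\ell(\mathbf x + s \mathbf h_1, \mathbf h_2)$ for $s \in \mathbb F$, I argue that with probability $\ge 1 - O(\delta)$ the degree-$d$ univariate polynomials $\{P_{\ell(\mathbf x + s \mathbf h_1, \mathbf h_2)}\}_{s \in \mathbb F}$ glue together into a bivariate polynomial of degree $\le d$ on $\Pi$, and that this bivariate polynomial agrees with $\widetilde f_{|\Pi}$ almost everywhere. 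Having established that $\widetilde f$ restricted to a random plane is a degree-$d$ polynomial with high probability, a Schwartz--Zippel pasting argument (using $|\mathbb F| > 3d$ to rule out spurious agreement between distinct degree-$d$ polynomials) promotes $\widetilde f$ itself to a polynomial tuple of degree at most $d$ in $\mathcal F$.

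The main obstacle is the plane-restriction step: verifying that, with probability at least $1 - O(\delta)$ over random planes, the $|\mathbb F|$ line-polynomials contained in the plane simultaneously interpolate a single degree-$d$ bivariate polynomial that matches the plurality. This is exactly where the quantitative hypotheses bite: $\delta < 1/24$ keeps the compounded Markov and union-bound losses strictly below $1$, while $|\mathbb F| > 9216$ provides the slack needed for the combinatorial estimates on agreeing-line counts inside a plane to give a nontrivial bound, and for the final constant in $\Delta(f,\widetilde f) \le 2\delta$ to come out sharp. A secondary bookkeeping point is the parallel structure: once $\widetilde f$ is defined via tuple-valued plurality, the polynomial-recovery steps can be run component-by-component with no loss, because any polynomial tuple whose $i$-th component is a degree-$d$ polynomial for every $i$ lies in $\mathcal F$ by definition.
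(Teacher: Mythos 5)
Your high-level setup agrees with the paper at the start -- both define $\widetilde f := \MAJ^f$ as the plurality decoder and both use a Markov argument to get $\Delta(f,\widetilde f)\le 2\delta$ (your Part~(i) is the paper's Lemma~\ref{lem:ldt-lvp-1}). But after that the two arguments diverge, and your route as sketched has a real gap in the step that matters most.

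The paper \emph{does not} prove directly that $\MAJ^f$ is a polynomial tuple. Instead it runs a self-improvement iteration: Lemma~\ref{lem:ldt-lvp-3} shows that if $0<\delta(f)\le 1/48$ then $\delta(\MAJ^f) \le \tfrac{9216}{|\mathbb F|}\,\delta(f) < \delta(f)$, where $\delta(\cdot)$ ranges over a finite set of values, so repeatedly applying $\MAJ$ must terminate at $\delta=0$, at which point the base case (from~\cite{RS96}) gives a genuine polynomial. Lemma~\ref{lem:ldt-lvp-2} then identifies $\MAJ^f$ with that polynomial once $f$ is $\tfrac{1}{12}$-close to it, and an induction on the value of $\delta(f)$ wraps the whole thing up. The ``plane'' argument only appears inside Lemma~\ref{lem:ldt-lvp-3}, and even there it is not a true affine plane but the ruled surface $\{\mathbf x + i\mathbf h_1 + j\mathbf h_2 + ij\mathbf h_3 : i,j\in\mathbb F\}$, chosen precisely so that both families $\{\ell(\mathbf x+i\mathbf h_1,\mathbf h_2+i\mathbf h_3)\}_i$ and $\{\ell(\mathbf x+j\mathbf h_2,\mathbf h_1+j\mathbf h_3)\}_j$ consist of pairwise-independent lines distributed as a uniform line, which feeds the Polishchuk--Spielman bivariate lemma (Theorem~9 of~\cite{PS94}).

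Your Part~(ii) instead tries to show $\MAJ^f\in\mathcal F$ directly. Two concrete problems: First, your plane parametrization $\{\ell(\mathbf x+s\mathbf h_1,\mathbf h_2): s\in\mathbb F\}$ is a family of \emph{parallel} lines; those determine values on the plane but give you no cross-consistency to compare against, so you cannot run the row-vs-column / bivariate-resultant argument that actually extracts a bivariate polynomial -- you need a second family of lines, which is exactly why the paper introduces $\mathbf h_3$. Second, and more fundamentally, ``a Schwartz--Zippel pasting argument promotes $\widetilde f$ to a polynomial'' is not a step but a placeholder for the entire difficulty: a function that agrees with some degree-$d$ polynomial on most random planes is not automatically a global degree-$d$ polynomial, and converting plane-local agreement into a global polynomial is precisely what the Arora--Sudan and Polishchuk--Spielman analyses labor over. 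The paper's inductive trick is specifically designed to avoid having to establish this directly. To make your route rigorous you would need to supply the bivariate consistency lemma (or something equivalent), fix the plane sampling so that two independent uniform line families live on it, and then still carry out a genuine pasting argument over planes -- none of which is present in the proposal.
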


Note that without changing $f$, the prover can optimize the accepting probability for each $\ell \in \mathcal L^m$ individually. Therefore, the rejecting probability for a fixed function $f$ is minimized when for every $\ell \in \mathcal L^m$, $P_{\ell}:\ell \to \mathbb F^t$ is the closest degree-$d$ function to $f_{|\ell}$. We denote such a degree-$d$ function as $P^f_{\ell}$, and denote the rejecting probability of $\mathcal T$ on $(f,P^f_{\ell})$ as $\delta(f)$. We only need to prove the case where $P=\{P^f_\ell | \ell \in \mathcal L^m\}$, since otherwise the rejecting probability will be larger.

We define $\MAJ^f:\mathbb F^m\to \mathbb F^t$ to be a function that maps every $\mathbf x \in \mathbb F^m$ to the most frequent value in $\{P^f_{\ell(\mathbf x,\mathbf h)}(\mathbf x)\left|\mathbf h \in \mathbb F^m\right.\}$. In other words, the most frequent value when we consider all directions $\mathbf h \in \mathbb F^m$ and try to decode $f(\mathbf x)$ from its restriction on the line $\ell(\mathbf x,\mathbf h)$.

\begin{lemma}\label{lem:ldt-lvp-1}
$\Delta(f,\MAJ^f)\le 2\delta(f)$.
\end{lemma}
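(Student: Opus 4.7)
The plan is to exploit the fact that on each point $\mathbf x$, the rejection probability over random directions directly witnesses disagreement between $f(\mathbf x)$ and $\MAJ^f(\mathbf x)$. Concretely, I would define the pointwise agreement
\[
p(\mathbf x) \;:=\; \Pr_{\mathbf h \in \mathbb F^m}\!\left[P^f_{\ell(\mathbf x,\mathbf h)}(\mathbf x) = f(\mathbf x)\right],
\]
so that, by the tower property and the definition of $\delta(f)$,
\[
\delta(f) \;=\; \Pr_{\mathbf x,\mathbf h}\!\left[f(\mathbf x)\neq P^f_{\ell(\mathbf x,\mathbf h)}(\mathbf x)\right] \;=\; \Ex_{\mathbf x}\!\left[1-p(\mathbf x)\right].
\]

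The key observation, which drives the whole argument, is a ``$1/2$-margin'' at every bad point. Fix any $\mathbf x$ with $f(\mathbf x)\neq \MAJ^f(\mathbf x)$. By definition of $\MAJ^f(\mathbf x)$ as the plurality (in fact majority, for our purposes) of the multiset $\{P^f_{\ell(\mathbf x,\mathbf h)}(\mathbf x)\}_{\mathbf h\in\mathbb F^m}$, the frequency of $\MAJ^f(\mathbf x)$ in this multiset is at least the frequency of $f(\mathbf x)$; since these are two distinct values and the total frequency is $1$, the frequency of $f(\mathbf x)$ is at most $1/2$. In other words, whenever $f(\mathbf x)\neq\MAJ^f(\mathbf x)$ we have $p(\mathbf x)\le 1/2$, i.e.\ $1-p(\mathbf x)\ge 1/2$.

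Combining the two ingredients yields the bound in one line:
\[
\delta(f) \;=\; \Ex_{\mathbf x}\!\left[1-p(\mathbf x)\right]
\;\ge\; \Ex_{\mathbf x}\!\left[(1-p(\mathbf x))\cdot \mathbf 1_{f(\mathbf x)\neq \MAJ^f(\mathbf x)}\right]
\;\ge\; \tfrac{1}{2}\,\Pr_{\mathbf x}\!\left[f(\mathbf x)\neq \MAJ^f(\mathbf x)\right]
\;=\; \tfrac{1}{2}\,\Delta(f,\MAJ^f),
\]
and rearranging gives $\Delta(f,\MAJ^f)\le 2\delta(f)$.

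There is no real obstacle here: the lemma is essentially the standard ``plurality decoding'' step that precedes the harder self-correction stage (in which one shows $\MAJ^f$ is in fact close to a genuine low-degree polynomial). The only very minor technicality is that the direction $\mathbf h=\mathbf 0$ is included in the test's distribution; this contributes a $1/|\mathbb F|^m$ fraction of randomness and is absorbed without loss in the inequalities above, since the test's behavior on $\mathbf h=\mathbf 0$ is handled by the convention stated in Lemma~\ref{lem:line-vs-point}. The real work of Theorem~\ref{thm:ldt-lvp} will come in the subsequent lemmas that upgrade closeness to $\MAJ^f$ to closeness to $\mathcal F$.
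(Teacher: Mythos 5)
Your proof is correct and is essentially the same argument as the paper's: both hinge on the observation that at any point where $f(\mathbf x)\neq\MAJ^f(\mathbf x)$ the agreement probability $p(\mathbf x)$ is at most $\tfrac12$, and then convert the resulting pointwise margin into a bound on $\Delta(f,\MAJ^f)$ (the paper phrases this via Markov's inequality applied to a bad set $B$, you phrase it as a direct averaging inequality, but these are the same step).
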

\begin{proof}
 Recall that $\delta(f)=\Ex_{\mathbf x\in \mathbb F^m}[\Pr_{\mathbf h \in \mathbb F^m}[f(\mathbf x)\ne P^f_{\ell(\mathbf x,\mathbf h)}(\mathbf x)]]$. Let \[B=\left\{\mathbf x \in \mathbb F^m \mid \Pr_{\mathbf h \in \mathbb F^m}[f(\mathbf x)\ne P^f_{\ell(\mathbf x,\mathbf h)}(\mathbf x)]\ge \frac{1}{2}\right\}.\] By Markov's inequality, $\frac{|B|}{|\mathbb F|^m}\le 2\delta(f)$. For any $\mathbf x \notin B$, by definition we have $\MAJ^f(\mathbf x)=f(\mathbf x)$. Thus the lemma holds.
\end{proof}

\begin{lemma}\label{lem:ldt-lvp-2}
    Assume $|\mathbb F|> \max(3d,9216)$. If there exists $\widetilde f\in \mathcal F$ such that $\Delta(f,\widetilde f) \le \frac{1}{12}$, then $\MAJ^f=\widetilde f$.
\end{lemma}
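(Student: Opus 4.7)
\textbf{Proof plan for Lemma \ref{lem:ldt-lvp-2}.} Fix an arbitrary point $\mathbf x \in \mathbb F^m$. The plan is to show that for strictly more than half of the directions $\mathbf h \in \mathbb F^m$, we have $P^f_{\ell(\mathbf x,\mathbf h)}(\mathbf x) = \widetilde f(\mathbf x)$. By the definition of $\MAJ^f(\mathbf x)$ as the most frequent value, this immediately forces $\MAJ^f(\mathbf x) = \widetilde f(\mathbf x)$, and since $\mathbf x$ is arbitrary the lemma follows.

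The main step is to control the restriction of $f$ to a random line through $\mathbf x$. Let $B := \{\mathbf y \in \mathbb F^m : f(\mathbf y) \ne \widetilde f(\mathbf y)\}$, so $|B|/|\mathbb F|^m \le 1/12$ by hypothesis. For a uniform $\mathbf h \in \mathbb F^m$, let $\delta_\mathbf h$ denote the fractional disagreement $\Delta(f_{|\ell(\mathbf x,\mathbf h)}, \widetilde f_{|\ell(\mathbf x,\mathbf h)})$. Writing this out coordinate-by-coordinate over $\lambda \in \mathbb F$, the point $\lambda = 0$ always hits $\mathbf x$, while for each fixed $\lambda \ne 0$ the point $\mathbf x + \lambda \mathbf h$ is uniform over $\mathbb F^m$ as $\mathbf h$ varies. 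Thus
\[
\Ex_{\mathbf h}[\delta_\mathbf h] \;\le\; \frac{1}{|\mathbb F|} + \frac{|\mathbb F|-1}{|\mathbb F|}\cdot\frac{1}{12} \;\le\; \frac{1}{12} + \frac{1}{|\mathbb F|}.
\]
Markov's inequality then gives $\Pr_{\mathbf h}[\delta_\mathbf h \ge 1/3] \le 1/4 + 3/|\mathbb F|$, which is strictly less than $1/2$ whenever $|\mathbb F| > 12$ (amply satisfied since $|\mathbb F| > 9216$).

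Next, I will use the hypothesis $|\mathbb F| > 3d$ to translate ``small disagreement on a line'' into ``$P^f_\ell$ equals $\widetilde f_{|\ell}$''. Suppose $\delta_\mathbf h < 1/3$ for some $\mathbf h$, and write $\ell := \ell(\mathbf x,\mathbf h)$. Since $|\mathbb F| > 3d$, any two distinct degree-$d$ functions on $\ell$ differ on at least a $1 - d/|\mathbb F| > 2/3$ fraction of points, by Schwartz--Zippel applied coordinatewise. If some degree-$d$ $Q \ne \widetilde f_{|\ell}$ satisfied $\Delta(Q, f_{|\ell}) \le 1/3$, the triangle inequality would give $\Delta(Q, \widetilde f_{|\ell}) \le 1/3 + \delta_\mathbf h < 2/3$, a contradiction. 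Hence $\widetilde f_{|\ell}$ is the \emph{unique} degree-$d$ function within distance $1/3$ of $f_{|\ell}$, and in particular it is the closest, so $P^f_\ell = \widetilde f_{|\ell}$ and $P^f_\ell(\mathbf x) = \widetilde f(\mathbf x)$.

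Combining the two steps, the set $\{\mathbf h \in \mathbb F^m : P^f_{\ell(\mathbf x,\mathbf h)}(\mathbf x) = \widetilde f(\mathbf x)\}$ has density strictly larger than $1/2$, so $\widetilde f(\mathbf x)$ is necessarily the most frequent value and equals $\MAJ^f(\mathbf x)$. The only delicate point is bookkeeping the contribution of $\lambda = 0$ in the expectation and ensuring the Markov threshold $1/3$ lies below the unique-decoding radius $\frac{1}{2}(1 - d/|\mathbb F|)$; both are absorbed by the assumptions $\Delta(f,\widetilde f) \le 1/12$ and $|\mathbb F| > 3d$. No subtle probabilistic argument beyond Markov is needed, and the slackness in the constants is comfortable given $|\mathbb F| > 9216$.
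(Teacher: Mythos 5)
Your proof is correct and follows essentially the same strategy as the paper's: bound the expected disagreement $\Ex_{\mathbf h}[\Delta(f_{|\ell(\mathbf x,\mathbf h)}, \widetilde f_{|\ell(\mathbf x,\mathbf h)})]$ by roughly $\frac1{12}+\frac1{|\mathbb F|}$, apply Markov to get that strictly more than half of the directions $\mathbf h$ have line-disagreement below $\frac13$, then use the unique-decoding radius $\frac12(1-d/|\mathbb F|) > \frac13$ to conclude $P^f_{\ell(\mathbf x,\mathbf h)}=\widetilde f_{|\ell(\mathbf x,\mathbf h)}$ for those $\mathbf h$. Your bookkeeping of the expectation (directly parameterizing by $\lambda\in\mathbb F$, with $\lambda=0$ contributing at most $\frac1{|\mathbb F|}$ and each $\lambda\neq 0$ giving a uniform point) is a slightly cleaner calculation than the paper's, which conditions on $\mathbf y\neq\mathbf x$ and tracks $(1-p)(1-|\mathbb F|^{-m})$ factors; the resulting bounds are the same up to low-order terms and both comfortably clear the $\frac12$ threshold.
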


\begin{proof}
    Fix any $\mathbf x \in \mathbb F^m$, we want to prove $\MAJ^f(\mathbf x) =\widetilde f(\mathbf x)$. 
    
    Consider the distribution of $\mathbf y$ defined by drawing a random $\mathbf h \in \mathbb F^m$ and picking a random $\mathbf y \in \ell(\mathbf x,\mathbf h)$.  
    When $\mathbf h = \mathbf 0$, $\mathbf y\equiv \mathbf x$; otherwise, $\mathbf y = \mathbf x$ happens with probability $\frac{1}{|\mathbb F|}$. Thus \[\Ex_{\mathbf h\in \mathbb F^m}\left[\Pr_{\mathbf y \in \ell(\mathbf x,\mathbf h)} \left[ \mathbf y = \mathbf x \right]  \right] = \frac{1}{|\mathbb F|^m} + \left(1 - \frac{1}{|\mathbb F|^m}\right)\cdot \frac{1}{|\mathbb F|} =: p,\] and $\mathbf y$ is drawn from $\mathbb F^m\setminus \{\mathbf x\}$ uniformly at random conditioned on $\mathbf y \neq \mathbf x$, which happens with probability $1-p$.
    \begin{align*}
    &\Ex_{\mathbf h\in \mathbb F^m}\left[\Pr_{\mathbf y \in \ell(\mathbf x,\mathbf h)} \left[ f(\mathbf y) \ne \widetilde f(\mathbf y) \right]  \right]\\
    &\le (1 - p)\cdot \left(1 - \frac{1}{|\mathbb F|^m}\right)\cdot \Delta(f, \widetilde f) + 1 - (1 - p)\cdot \left(1 - \frac{1}{|\mathbb F|^m}\right) \\
    &\le \Delta(f, \widetilde f) + 1 - \left(1 - \frac{1}{|\mathbb F|}\right) \left(1 - \frac{1}{|\mathbb F|^m}\right)^2 \\
    &\le \Delta(f, \widetilde f) + 1 - \left(1 - \frac{1}{|\mathbb F|}\right)^3 \\
    &\le \frac{1}{12} + \frac{1}{12} = \frac{1}{6}.
    \end{align*}

Let \[B=\left\{\mathbf h \in \mathbb F^m|\Pr_{\mathbf y \in \ell(\mathbf x,\mathbf h)}\left[f(\mathbf y) \ne \widetilde f(\mathbf y)\right]\ge \frac{1}{3}\right\}.\] By Markov's inequality we have $\frac{|B|}{|\mathbb F|^m} < \frac{1}{6}\times 3 = \frac{1}{2}$. 

For any $\mathbf h \not \in B$, we have $\Pr_{\mathbf y \in \ell(\mathbf x, \mathbf h)}[f(\mathbf y) \ne \widetilde f(\mathbf y)]< \frac{1}{3}$. 
Note that $\widetilde f$ is a degree-$d$ multivariate polynomial tuple, thus $\widetilde f_{|\ell(\mathbf x,\mathbf h)}$ is a degree-$d$ univariate polynomial tuple. By Schwartz-Zippel lemma, any two different degree-$d$ polynomials differ on $1-\frac{d}{|\mathbb F|}$ fraction of points, and so do two degree-$d$ polynomial tuples. Since the distance between $f_{|\ell(\mathbf x,\mathbf h)}$ and $\widetilde f_{|\ell(\mathbf x,\mathbf h)}$ is less than $\frac{1}{3}<\frac{1-\frac{d}{|\mathbb F|}}{2}$, we can conclude $\widetilde f_{|\ell(\mathbf x,\mathbf h)}$ is the closest univariate degree-$d$ polynomial tuple to $f_{|\ell(\mathbf x,\mathbf h)}$. By definition we have $P^f_{\ell(\mathbf x,\mathbf h)}=\widetilde f_{|\ell(\mathbf x,\mathbf h)}$. Furthermore, since $\Pr_{\mathbf h \in \mathbb F^m} [\mathbf h \not \in B] >\frac{1}{2}$, we have $\MAJ^f(\mathbf x)=\widetilde f(\mathbf x)$.

\end{proof}

\begin{lemma}\label{lem:ldt-lvp-3}
    Assume $|\mathbb F|> \max(3d,9216)$. If $0 < \delta(f) \le \frac{1}{48}$, then $\delta(\MAJ^f) <\delta(f)$.
\end{lemma}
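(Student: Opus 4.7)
Let $g := \MAJ^f$. The plan is to bound $\delta(g)$ by $\delta(f)$ via a two-hop comparison and then upgrade to a strict inequality.

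For the non-strict bound, for each line $\ell$ the optimality of $P^g_\ell$ as the closest degree-$d$ fit to $g_{|\ell}$ gives $\Delta(g_{|\ell}, P^g_\ell) \le \Delta(g_{|\ell}, P^f_\ell)$. Averaging over the uniform distribution on $(\mathbf x, \mathbf h)$ (the degenerate case $\mathbf h = \mathbf 0$ contributes zero to both sides, as $P^f_{\ell(\mathbf x, \mathbf 0)}(\mathbf x) = f(\mathbf x)$) yields
\[\delta(g) \le \Pr_{\mathbf x, \mathbf h}\bigl[g(\mathbf x) \ne P^f_{\ell(\mathbf x, \mathbf h)}(\mathbf x)\bigr].\]
Since $g(\mathbf x) = \MAJ^f(\mathbf x)$ is by definition a most-frequent value in $\{P^f_{\ell(\mathbf x, \mathbf h)}(\mathbf x) \mid \mathbf h \in \mathbb F^m\}$, we have
\[\Pr_\mathbf h\bigl[g(\mathbf x) = P^f_{\ell(\mathbf x, \mathbf h)}(\mathbf x)\bigr] \ge \Pr_\mathbf h\bigl[f(\mathbf x) = P^f_{\ell(\mathbf x, \mathbf h)}(\mathbf x)\bigr]\]
for every $\mathbf x$, with strict inequality whenever $g(\mathbf x) \ne f(\mathbf x)$ (under the tie-breaking convention $\MAJ^f(\mathbf x) := f(\mathbf x)$ whenever $f(\mathbf x)$ is a plurality value). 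Averaging over $\mathbf x$ and chaining with the previous display yields $\delta(g) \le \delta(f)$.

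The chain is strict as soon as $g(\mathbf x_0) \ne f(\mathbf x_0)$ at a single point $\mathbf x_0$, since one strictly smaller pointwise term forces the average over $\mathbf x$ to be strictly smaller. The main obstacle is therefore ruling out the fixed-point case $g \equiv f$ under $\delta(f) > 0$. I would argue by contradiction: if $\MAJ^f = f$, then for every $\mathbf x$ the value $f(\mathbf x)$ is the plurality of $\{P^f_{\ell(\mathbf x, \mathbf h)}(\mathbf x)\}_\mathbf h$, and by Markov applied to $\delta(f) \le \tfrac{1}{48}$, the majority vote is already overwhelming at a $(1 - o(1))$ fraction of points. Combining this per-point self-consistency with a Rubinfeld--Sudan-style gluing of the univariate polynomials $\{P^f_\ell\}$ along random planes (using $|\mathbb F| > 9216$ to suppress collision probabilities in Schwartz--Zippel bounds), one extracts a polynomial tuple $\widetilde f \in \mathcal F$ within Hamming distance $\tfrac{1}{12}$ of $f$. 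Then Lemma~\ref{lem:ldt-lvp-2} forces $\MAJ^f = \widetilde f$, which together with the assumed $\MAJ^f = f$ gives $f = \widetilde f \in \mathcal F$, hence $\delta(f) = 0$ --- contradicting $\delta(f) > 0$.

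The technical heart is producing $\widetilde f \in \mathcal F$ within $\tfrac{1}{12}$ of $f$ without circularly invoking Theorem~\ref{thm:ldt-lvp}. The natural route is to pick any $\mathbf x_0$ together with two independent random directions $\mathbf h_1, \mathbf h_2$ and examine the bivariate restriction of $f$ to the plane $\mathrm{span}(\mathbf h_1, \mathbf h_2)$ through $\mathbf x_0$: the fixed-point hypothesis $g = f$ constrains the univariate polynomials $P^f_\ell$ on lines in this plane to agree at their common points with the plurality value, and with high probability over the plane this forces them to be slices of a single bivariate degree-$d$ polynomial; averaging over planes and applying the parallel low-degree machinery then globalises the approximation.
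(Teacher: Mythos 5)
Your opening moves are correct and, in fact, mirror the structure the paper itself uses. The first inequality $\delta(\MAJ^f)\le \Pr_{\mathbf x,\mathbf h}[\MAJ^f(\mathbf x)\ne P^f_{\ell(\mathbf x,\mathbf h)}(\mathbf x)]$, which you derive from the line-wise optimality of $P^{\MAJ^f}_\ell$, is literally the paper's first step. The plurality comparison is also a correct observation. So the non-strict bound $\delta(\MAJ^f)\le\delta(f)$, strict unless $\MAJ^f\equiv f$, is sound. The difference is that the paper then bounds $\Pr_{\mathbf x,\mathbf h}[\MAJ^f(\mathbf x)\ne P^f_{\ell(\mathbf x,\mathbf h)}(\mathbf x)]$ not against $\delta(f)$, but against the two-direction self-agreement probability $\Pr_{\mathbf x,\mathbf h_1,\mathbf h_2}[P^f_{\ell(\mathbf x,\mathbf h_1)}(\mathbf x)\ne P^f_{\ell(\mathbf x,\mathbf h_2)}(\mathbf x)]$, and then drives that quantity down to $\frac{9216\,\delta(f)}{|\mathbb F|}$ with a pairwise-independence Chebyshev argument plus the Polishchuk--Spielman bivariate gluing theorem on the surface $\{\mathbf x+i\mathbf h_1+j\mathbf h_2+ij\mathbf h_3\}$. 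Since $|\mathbb F|>9216$, that bound is $<\delta(f)$ for free, no case split needed, and the fixed-point scenario is excluded as a corollary rather than as a separate obstacle.

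The genuine gap in your argument is precisely the fixed-point case. You need to rule out $\MAJ^f=f$ with $0<\delta(f)\le\frac{1}{48}$, and your proposed route --- extract $\widetilde f\in\mathcal F$ with $\Delta(f,\widetilde f)\le\frac{1}{12}$, then invoke Lemma~\ref{lem:ldt-lvp-2} --- has two problems. First, it is circular by design of the paper: producing a global $\widetilde f\in\mathcal F$ close to $f$ from $\delta(f)\le\frac{1}{48}$ \emph{is} the content of Lemma~\ref{thm:ldt-lvp}, whose proof runs by induction on $\delta(f)$ and at each step calls Lemma~\ref{lem:ldt-lvp-3}, the very statement you are proving; the claimed ``natural route'' of ``averaging over planes and applying the parallel low-degree machinery'' is a description of exactly that induction. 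Second, even setting the circularity aside, the sketch conflates a local statement with a global one: the Polishchuk--Spielman bivariate lemma gives consistency of the row/column polynomials on a single random plane (or random skewed bivariate surface), and the paper uses this only to conclude the pointwise fact $P^f_{\ell(\mathbf x,\mathbf h_1)}(\mathbf x)=P^f_{\ell(\mathbf x,\mathbf h_2)}(\mathbf x)$ with high probability. Passing from ``planes through $\mathbf x_0$ carry consistent bivariate polynomials'' to ``there is a single $m$-variate degree-$d$ $\widetilde f$ close to $f$'' is a nontrivial globalization step; it is not handled by the bivariate lemma alone, and you do not supply an argument for it. The hypothesis $\MAJ^f=f$ only says that at each point $f(\mathbf x)$ is a plurality of the decoded line values; it does not directly give the mutual agreement of $P^f_\ell$'s at common points that your sketch asserts. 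To make your strategy work you would have to reprove essentially the full Rubinfeld--Sudan globalization from scratch inside this case, at which point it would be simpler to just bound $\delta(\MAJ^f)$ directly by $\frac{9216\,\delta(f)}{|\mathbb F|}$ as the paper does, and drop the case split entirely.
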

\begin{proof}
    Draw $\mathbf x, \mathbf h_1,\mathbf h_2, \mathbf h_3 \in \mathbb F^m$ uniformly at random. Define the following random polynomial tuples and random variables:
    \begin{itemize}
        \item $a(i,j)=f(\mathbf x+i\mathbf h_1+j\mathbf h_2+ij\mathbf h_3)$,
        \item $r_i(j)=P^f_{\ell(\mathbf x+i\mathbf h_1,\mathbf h_2+i\mathbf h_3)}(\mathbf x+i\mathbf h_1+j\mathbf h_2+ij\mathbf h_3)$,
        \item $c_j(i)=P^f_{\ell(\mathbf x+j\mathbf h_2,\mathbf h_1+j\mathbf h_3)}(\mathbf x+i\mathbf h_1+j\mathbf h_2+ij\mathbf h_3)$,
        \item $\delta^R_i=\Pr_{j \in \mathbb F}[r_i(j)\ne a(i,j)]$,
        \item $\delta^C_j=\Pr_{i \in \mathbb F}[c_j(i)\ne a(i,j) ]$,
        \item $\eta_i^R=\mathbf 1[r_i(0)\ne a(i,0)]$,
        \item $\eta_j^C=\mathbf 1[c_j(0)\ne a(0,j)]$.
    \end{itemize}

    Let $\mathcal D$ be the distribution of $\ell(\mathbf x,\mathbf h)$ where $\mathbf x,\mathbf h$ are sampled uniformly at random in $\mathbb F^m$. It's easy to see when $\mathbf x,\mathbf h_1,\mathbf h_2,\mathbf h_3$ are uniform, $\ell(\mathbf x+i\mathbf h_1,\mathbf h_2+i\mathbf h_3)$ follows distribution $\mathcal D$ for every $i\in \mathbb F$. Furthermore, for different $i,i' \in \mathbb F$, $\ell(\mathbf x+i\mathbf h_1,\mathbf h_2+i\mathbf h_3)$ and $\ell(\mathbf x+i'\mathbf h_1,\mathbf h_2+i'\mathbf h_3)$ are independent. 

    Note that for every $i \in \mathbb F$, $\Ex[\delta_i^R]=\delta(f)$, and the $\delta_i^R$'s are pairwise independent. Let $X=\frac{1}{|\mathbb F|}\sum_{i \in \mathbb F}\delta_i^R$, we bound the variance of $X$ as follows:
    \begin{align*}
        \Var[X] & = \Var\left[\frac{1}{|\mathbb F|}\sum_{i \in \mathbb F}\delta_i^R\right] \\
        & = \frac{|\mathbb F|}{|\mathbb F|^2}\Var[\delta_0^R] \\
        & \le \frac{1}{|\mathbb F|}\Ex[X](1-\Ex[X]) \\
        & \le \frac{1}{|\mathbb F|} \delta(f),
    \end{align*}
    where the inequality on the third line holds because $X$ takes value in $[0,1]$ and \[\Var[X] = \Ex[X^2] - \Ex[X]^2\le \Ex[X] - \Ex[X]^2 = \Ex[X] (1 - \Ex[X]).\] By Chebyshev's inequality, $\Pr\left[X \ge \delta(f)+\frac{1}{48}\right] \le \frac{\Var[X]}{(1/48)^2}\le \frac{2304\delta(f)}{|\mathbb F|}$. Similar analysis applies to $\delta_j^C$'s, and more generally, to $\eta_i^R$'s and $\eta_j^C$'s, too. By union bound, with probability at least $1-\frac{9216\delta(f)}{|\mathbb F|}$ over random choices of $\mathbf x,\mathbf h_1,\mathbf h_2,\mathbf h_3$, we have 
    \[\left\{
    \begin{aligned}
        & \Pr_{i,j \in \mathbb F} [r_i(j)\ne a(i,j)]\le \frac{1}{24} \\
        & \Pr_{i,j \in \mathbb F} [c_j(i)\ne a(i,j)]  \le \frac{1}{24} \\
        & \Pr_{i \in \mathbb F}[r_i(0)\ne a(i,0)]  \le \frac{1}{24} \\
        & \Pr_{j \in \mathbb F}[c_j(0)\ne a(0,j)]  \le \frac{1}{24} \\
    \end{aligned}
    \right..\]
    Suppose this happens, then
    \begin{align*}
        & \Pr_{i,j\in \mathbb F}[r_i(j)\ne c_j(i)]\\
        \le & \Pr_{i,j \in \mathbb F}[r_i(j)\ne a(i,j)\lor c_j(i) \ne a(i,j)] \\
        \le & \Pr_{i,j \in \mathbb F}[r_i(j) \ne a(i,j)]+\Pr_{i,j \in \mathbb F}[c_j(i)\ne a(i,j)] \\
        \le & \frac{1}{24}+\frac{1}{24} = \frac{1}{12}.
    \end{align*}

    For a bivariate polynomial $F(x,y)$, we say it has degree $(d,e)$ if it has degree at most $d$ in $x$ and degree at most $e$ in $y$. 
    Note that $r_i(j)$ can be viewed as a $t$-tuple of bivariate polynomials $R(i,j) = (R_1,\ldots, R_t)(i,j)$, where each $R_i$ has degree $(|\mathbb F|, d)$. Similarly, $c_j(i)$ can also be viewed as $t$-tuple of bivariate polynomials $C(i,j)=(C_1,\ldots,C_t)(i,j)$, where each $C_i$ has degree $(d,|\mathbb F|)$. Since $\Pr_{i,j \in \mathbb F} \left[R(i,j) \ne C(i,j)\right] \le \frac{1}{12}$, by Theorem 9 of \cite{PS94}, there exists a non-zero bivariate polynomial $E(i,j)$ of degree $\left(\frac{n}{\sqrt {12}},\frac{n}{\sqrt {12}}\right)$ and a $t$-tuple of polynomials $Q(i,j) = (Q_1,\ldots,Q_t)(i,j)$, where each $Q_i$ has degree $(d,d)$, and for every $i,j \in \mathbb F$,
    \[R(i,j)E(i,j)=C(i,j)E(i,j)=Q(i,j)E(i,j).\]
    
    This implies that in any row on which $E(i,\cdot)$ is non-zero, $R$ agrees with $Q$ on the entire row. Since $E$ is non-zero, $E$ can be identically zero on at most $\frac{n}{\sqrt{12}}$ rows.
    Therefore, $R$ agrees with $Q$ on at least $\left(1-\frac{1}{\sqrt{12}}\right) n$ rows. Hence, there are at least $1-\frac{1}{\sqrt{12}}-\frac{1}{24}$ fraction of $i$'s satisfying:
    \[
    \left\{
        \begin{aligned} r_i(\cdot) &= Q(i, \cdot) \\ r_i(0) &= a(i, 0)
        \end{aligned}
    \right.
    \]
    which means $\Delta(Q(\cdot,0),a(\cdot,0))\le \frac{1}{\sqrt{12}}+\frac{1}{24}<\frac{1}{3}<\frac{1-\frac{d}{|\mathbb F|}}{2}$. The distance between any two different degree-$d$ polynomials is at least $1-\frac{d}{|\mathbb F|}$, thus $Q(\cdot ,0)$ is the closest degree-$d$ univariate polynomial tuple to $a(\cdot,0)$, which by definition, equals to $c_0(\cdot)$. By similar analysis, we also have $Q(0,\cdot)=r_0(\cdot)$. Therefore, with probability at least $1-\frac{9216\delta(f)}{|\mathbb F|}$, \[P^f_{\ell(\mathbf x,\mathbf h_2)}(\mathbf x)=r_0(0)=Q(0,0)=c_0(0)=P^f_{\ell(\mathbf x,\mathbf h_1)}(\mathbf x),\]
    which implies
    \begin{align*}
        \delta(\MAJ^f) & = \Pr_{\mathbf x,\mathbf h \in \mathbb F^m}\left[\MAJ^f(\mathbf x)\ne P^{\MAJ^f}_{\ell(\mathbf x,\mathbf h)}(\mathbf x)\right] \\
        & \le \Pr_{\mathbf x,\mathbf h \in \mathbb F^m}\left[\MAJ^f(\mathbf x) \ne P^f_{\ell(\mathbf x,\mathbf h)}(\mathbf x)\right]\\
        & \le \Pr_{\mathbf x,\mathbf h_1,\mathbf h_2 \in \mathbb F^m}\left[P^f_{\ell(\mathbf x,\mathbf h_1)}(\mathbf x)\ne P^f_{\ell(\mathbf x,\mathbf h_2)}(\mathbf x)\right] \\
        & \le \frac{9216\delta(f)}{|\mathbb F|} \\
        & < \delta(f).
    \end{align*}
\end{proof}

\begin{proof}[Proof of Lemma \ref{thm:ldt-lvp}]
    $\delta(f)$ can take at most $|\mathbb F|^{2m}$ different values, so we use induction on $\delta(f)$ to prove the lemma. The base case where $\delta(f)=0$ is handled in \cite{RS96}, which proves a multivariate polynomial is of degree $d$ if and only if its restriction on every line $\ell(\mathbf x,\mathbf h)$ is a univariate polynomial of degree $d$. Suppose $0 < \delta(f) \le \frac{1}{48}$. By Lemma \ref{lem:ldt-lvp-3}, $\delta(\MAJ^f)<\delta(f)$. From induction hypothesis we know $\Delta(\MAJ^f,\widetilde f) \le 2\delta(\MAJ^f) \le 2\delta(f)$ for some $\widetilde f \in \mathcal F$. By Lemma \ref{lem:ldt-lvp-1}, $\Delta(f,\MAJ^f) \le 2\delta(f)$. Thus $\Delta(f,\widetilde f) \le 4\delta(f) \le \frac{1}{12}$ by triangle inequality. By Lemma \ref{lem:ldt-lvp-2}, we have $\MAJ^f = \widetilde f$. Finally by Lemma \ref{lem:ldt-lvp-1} again, we have $\Delta(f,\widetilde f) \le 2\delta(f)$.
\end{proof}

\end{document}